\def\emm#1,{{\em #1}}
\newcommand{\beq}{\begin{equation}}
\newcommand{\eeq}{\end{equation}}
\begin{document}

\title{XML Compression via DAGs}

\author{Mireille Bousquet-M\'{e}lou \and
        Markus Lohrey         \and
        Sebastian Maneth      \and
        Eric Noeth
}

\institute{         
           M. Bousquet-M\'{e}lou \at
              CNRS, LaBRI, Universit\'{e} de Bordeaux\\
              Tel.: +33-5-40-00-69-06 \\
              \email{bousquet@labri.fr}           
           \and
	   M. Lohrey \at
              University of Leipzig \\
              Tel.: +49-341-97-32201\\
              \email{lohrey@informatik.uni-leipzig.de}           
           \and
           S. Maneth \at
              University of Edinburgh \\
              Tel.: +44-131-651-5642 \\
              \email{smaneth@inf.ed.ac.uk}              
           \and
           E. Noeth \at
              University of Leipzig \\
              Tel.: +49-341-97-32212\\
              \email{noeth@informatik.uni-leipzig.de}           
}

\date{Received: date / Accepted: date}

\maketitle

\begin{abstract}
Unranked trees can be represented using their minimal 
dag (directed acyclic graph). For XML this achieves high compression 
ratios due to their repetitive mark up. Unranked trees are often represented 
through first child/next sibling (fcns) encoded binary trees. 
We study the difference in size (= number of edges) 
of minimal dag versus minimal dag of the fcns encoded binary tree.
One main finding is that the size of the dag of the binary tree
can never be smaller than the square root of the size of the
minimal dag, and that there are examples that match this bound.
We introduce a new combined structure, the \emph{hybrid dag},
which is guaranteed to be smaller than (or equal in size to) both dags. 
Interestingly, we find through experiments 
that last child/previous sibling encodings are much better for 
XML compression via dags, than fcns encodings. 
We determine the average sizes of unranked and binary dags
over a given set of labels (under uniform distribution)
in terms of their exact generating functions, and in terms
of their asymptotical behavior.
\keywords{XML, Tree Compression, Directed Acyclic Graph}
\end{abstract}

\section{Introduction}
The tree structure of an XML document can be conveniently 
represented as an ordered unranked 
tree~\cite{DBLP:conf/dbpl/Suciu01,DBLP:journals/sigmod/Neven02}.
For tree structures of common XML documents
\emph{dags} (\emph{directed acyclic graphs}) offer high compression ratios:
the number of edges of the minimal dag is
only about 10\% of the number of edges of the original 
unranked tree~\cite{DBLP:conf/vldb/KochBG03}
In a minimal dag, each distinct subtree is represented only once.
A dag can be exponentially smaller than the represented tree.
Dags and their linear average time construction via hashing are 
folklore in computer science (see e.g.~\cite{DBLP:journals/cacm/Ershov58});
they are a popular data structure used for sharing of common subexpressions
(e.g., in programming languages) and in
binary decision diagrams, see~\cite{DBLP:books/sp/MeinelT98}.
Through a clever pointer data structure, worst-case linear time
construction is shown in~\cite{DBLP:journals/jacm/DowneyST80}.

Unranked trees of XML tree structures are often
represented using binary trees, see~\cite{DBLP:journals/jcss/Schwentick07}
for a discussion.
A common encoding is the 
\emph{first child/next sibling encoding}~\cite{DBLP:conf/vldb/Koch03}
(in fact, this encoding is well-known,
see Paragraph~2.3.2 in Knuth's first book~\cite{DBLP:books/aw/Knuth68}).
The binary tree $\text{fcns}(t)$ is  
obtained from an unranked tree $t$ as follows.
Each node of $t$ is a node of $\text{fcns}(t)$.
A node $u$ is a left child of node $v$ in $\text{fcns}(t)$ if and only
if $u$ is the first child of $v$ in $t$.
A node $u$ is the right child of a node $v$ in 
$\text{fcns}(t)$ if and only if $u$ is the next sibling
of $v$ in $t$. 
From now on, when we speak of the size of a graph
we mean its number of edges.
Consider the minimal dag of $\text{fcns}(t)$ (called \emph{bdag} for 
\emph{binary dag} in the following)
in comparison to the minimal dag of $t$.
It was observed in~\cite{DBLP:journals/is/BusattoLM08} that
the sizes of these dags may differ, in both directions.
For some trees the difference is dramatic, which motivates the work
of this paper: to study the precise relationship between the
two dags, and to devise a new data structure that is guaranteed to be
of equal or smaller size than the minimum size of the two dags. 

Intuitively, the dag of $t$ shares \emph{repeated subtrees},
while the dag of $\text{fcns}(t)$ shares 
\emph{repeated sibling end sequences}. 
\begin{figure}[ht]
\centerline{
\input 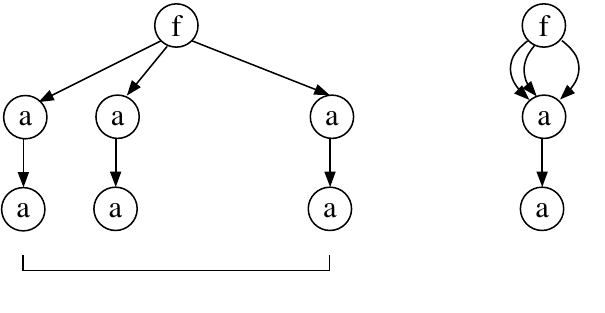_t
}
\caption{The unranked tree $t_n$ and $\text{dag}(t_n)$.}
\label{fig:tn}
\end{figure}
Consider the tree $t_n$ in the left of Figure~\ref{fig:tn}.
Its minimal dag is shown on the right.
As can be seen, each repeated subtree  is
removed in the dag. 
The dag consists of $n+1$ edges while $t_n$
consists of $2n$ edges. Moreover, $\text{fcns}(t_n)$ does not have any
repeated subtrees (except for leaves), i.e., the bdag of $t_n$ has $2n$ edges as well.
\begin{figure}[ht]
\centerline{
\input 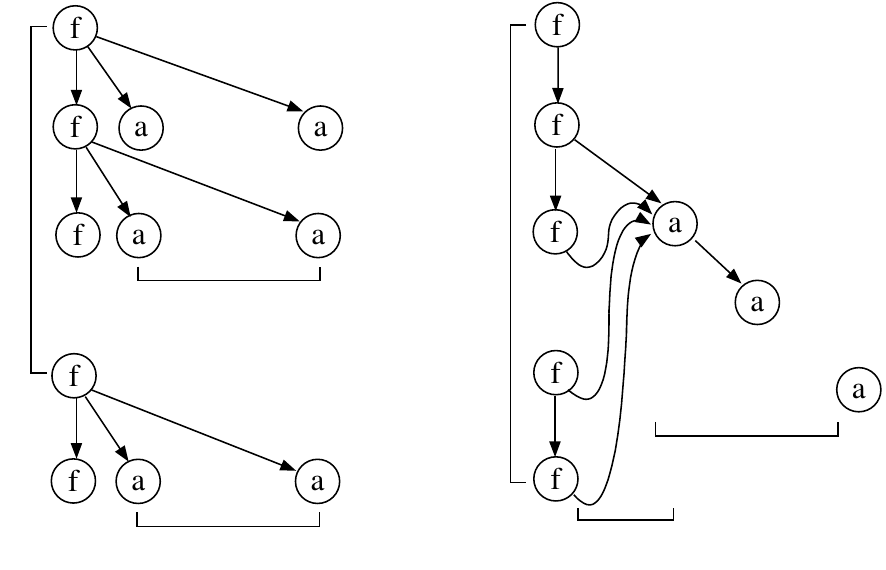_t
}
\caption{The unranked tree $s_n$ and $\text{bdag}(s_n)$.}
\label{fig:tn2}
\end{figure}
Next, consider the tree $s_n$ in the left of Figure~\ref{fig:tn2}.
Its bdag is shown on the right, it has $3n-2$ edges.
On the other hand, $s_n$ has $n^2$ edges and the same is 
true for the dag of $s_n$ since this tree has no repeated subtrees (except for leaves).
These two examples show that
(i)~the size of the dag of an unranked tree can be half the size
of the dag of the fcns encoded tree and
(ii)~the size of the dag of the fcns encoded tree can be
quadratically smaller than the size of the dag of the unranked tree. 
We prove in this paper that these ratios
are maximal: The size of the dag of the unranked tree is 
(i)~lower bounded by half of the size of the bdag 
and (ii)~upper bounded by the square of the size of the bdag.
Actually, we derive these bounds from stronger statements concerning
a combination of the unranked dag and the binary dag, called the
\emph{hybrid dag}, which combines both ways of sharing.
The idea is as follows. Given an unranked tree, we
compute its minimal dag.  The dag can be naturally viewed as
a regular tree grammar: Introduce for each node $v$ of the dag  
a nonterminal $A_v$ for the grammar. If a node $v$ is labeled
with the symbol $f$ and its children in the dag are $v_1,\ldots, v_n$ in this
order, then we introduce the production $A_v \to f(A_{v_1},\ldots,A_{v_n})$.
We now apply the fcns encoding to all
right-hand sides of this grammar.
Finally, we compute the minimal dag of the forest consisting
of all these fcns encoded right-hand sides.
See Figure~\ref{fig:hdag} which shows a tree $t$ of size $9$.
Its unranked and binary dags are each of size $6$.
The hybrid dag consists of a start tree plus one rule,
and is of size $5$.
For the XML document trees of our corpus, the average size of the
hybrid dag is only $76\%$ of the average size of the unranked dag.

We show that the size of the hybrid dag is always bounded by the minimum
of the sizes of the unranked dag and the binary dag. Moreover, 
we show that (i) the size of the hdag  is at least
half of the size of the binary dag and (ii) 
the size of the unranked dag is at most the square of the size of the hdag.
The above mentioned bounds for the unranked dag and binary dag are 
direct corollaries of these bounds.

The tree grammar of a hybrid dag is not a regular tree grammar 
anymore (because identifier nodes may have a right child).
It can be unfolded in three passes:
first undoing the sharing of tree sequences, 
then the binary decoding, and then undoing  sharing of
subtrees.
We show that these grammars can be translated into a well
known type of grammars: straight-line linear context-free tree 
grammars, for short \emph{SLT grammars} 
(produced by BPLEX~\cite{DBLP:journals/is/BusattoLM08} 
or TreeRePair~\cite{lohmanmen13}). 
This embedding increases the size only slightly.
One advantage is that SLT grammars can be
unfolded into the original tree in one pass.
Moreover, it is known that finite tree automata (even with
sibling equality constraints) and tree walking automata can
be executed in polynomial time over trees represented
by SLT grammars~\cite{DBLP:journals/tcs/LohreyM06,DBLP:journals/jcss/LohreyMS12,DBLP:journals/corr/abs-1012-5696}.

While in the theoretical limit the binary dag
can be smaller in comparison than the dag, it was observed 
in~\cite{DBLP:journals/is/BusattoLM08} that 
for common XML document trees $t$, almost always the
dag of~$t$ is smaller than the binary dag of~$t$.
One explanation is that $t$ contains many small
repeated subtrees, which seldomly are part of a repeating
sibling end sequence. For each repetition we (possibly) 
pay a ``penalty''
of one extra edge in the dag of $\text{fcns}(t)$; see
the tree $t_n$ which has penalty $n$.
On the other hand, there are very few repeating sibling end
sequences in common XML; this is because optional elements
typically appear towards \emph{the end} of a child sequence.
Hence, the additional feature of sharing sibling sequences
is not useful for XML.
On real XML documents, we show in experiments that the 
``reverse binary dag'' that arises from the 
\emph{last child/previous sibling encoding}
is typically smaller than the binary dag, and almost as small as 
the dag. Moreover, for our test corpus, the average size of the \emph{reverse hybrid dag} built from the 
last child/previous sibling encoding of the dag is only
$62\%$ of the average size of the minimal dag.

Observe that  in the second sharing phase of the construction of the hybrid dag,
only sequences of identifiers (nonterminals of the regular tree grammar corresponding
to the dag) are shared. 
Thus, we are sharing
repeated string suffixes in a sequence of strings.
We experimented with applying a grammar-based string compressor
to this sequence of strings. 
It is not difficult to incorporate the output into an
SLT grammar. 
As our experiments show, the obtained grammars are 
smaller than those of the hybrid dag and almost as small as
TreeRePair's grammars.
Moreover, they  have the advantage that checking equivalence
of subtrees is simple (each distinct subtree is
represented by a unique identifier), a property not present
for arbitrary SLT grammars. 
For hybrid dags, even equality of sibling end sequences can
be checked efficiently.

\noindent
{\bf Average Size Analysis of DAGs.}\quad
Given a tree over $n$ nodes and $m$ labels, what is the average size
of its minimal dag? This problem was studied for unlabeled full
binary trees by Flajolet, Sipala, 
and Steyaert~\cite{FlaSipStey1990}. They present exact expressions and show
that the expected node size of the minimal dag 
of a full binary tree with $n$ nodes is asymptotically
\[
 \kappa \cdot \frac{n}{\sqrt{\ln n}} \cdot \left(1 + O \left(\frac{1}{\ln n} \right) \right)
\]
where the constant $\kappa$ is explicitly determined.
One problem with the paper by Flajolet et.\ al.\
is that the proof of the result above
is rather sketchy, and at certain places contains large gaps. 
Here we fill these gaps, and extend their results, giving detailed proofs of:
\begin{itemize}
\item exact expressions, in terms of their generating functions, 
for the average node and edge sizes of dags and bdags of 
unranked trees over $n$ nodes and $m$ labels, and of
\item the asymptotic behavior of these averages.
We show that these asymptotic behaviors are also of the form 
$C \frac{n}{\sqrt{ \log n}}$,
where $C$ is again explicitly determined.
\end{itemize}
The proofs of these results assume basic knowledge about 
combinatorial classes and generating functions. Details on these
can be found in textbooks, e.g., the one by Flajolet and Sedgewick~\cite{AnalyticCombinatorics}.
Our proofs of the asymptotics are rather involved and can be found in the Appendix.

A preliminary version of this paper (not containing 
average-case sizes of dags) appeared as~\cite{DBLP:conf/icdt/LohreyMN13}.

\section{Trees and dags}\label{sec:trees_and_dags}

Let $\Sigma$ be a finite set of node labels.
An {\em ordered $\Sigma$-labeled multigraph}  is a tuple
$M = (V,\gamma,\lambda)$, where 
\begin{itemize}
\item $V$ is a finite set of nodes
\item $\gamma : V \to V^*$ assigns to each node a finite word over the set of nodes
\item $\lambda : V \to \Sigma$ assigns to each node a label from $\Sigma$.
\end{itemize}
The idea is that for a node $v \in V$, $\gamma(v)$ is the ordered
list of $v$'s successor nodes.
The {\em underlying graph} is the directed graph 
$G_M = (V,E)$, where
$(u,v) \in E$  if and only if  $v$ occurs in  $\gamma(u)$.
The \emph{node size} of $M$, denoted by $ \| M \|$, is 
the cardinality of $V$, 
and the \emph{edge size} or simply \emph{size} of $M$ is 
defined as $|M| = \sum_{v \in V} |\gamma(v)|$ 
(here $|w|$ denotes the length of a word $w$). 
Note that the labeling function $\lambda$ does not influence
the size of $M$. The motivation for this is that
the size of $M$ can be seen as the number of pointers 
that are necessary in order to store $M$ and that 
these pointers mainly determine the space consumption for
$M$. 

\begin{figure*}[t]
\centerline{
\input 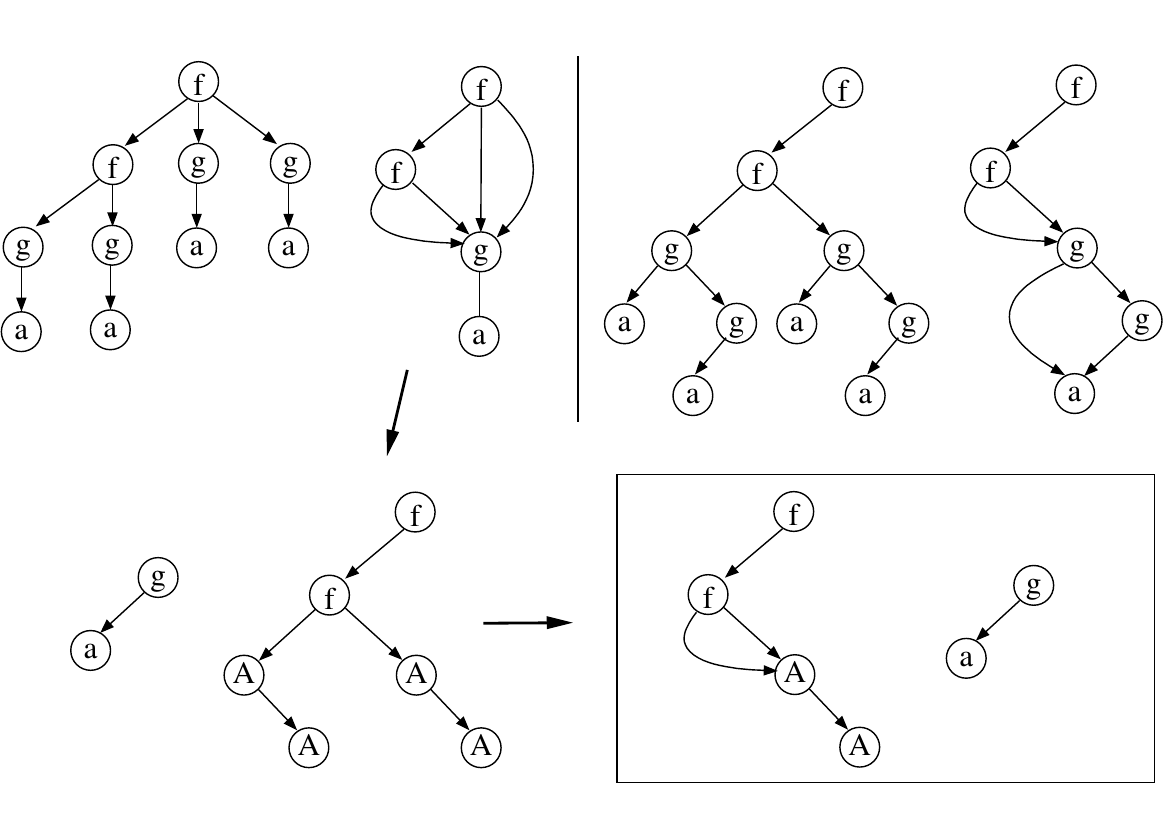_t
}
\caption{\label{fig:hdag} Top: a tree $t$, its dag, 
its fcns encoding and its bdag of $t$.
Bottom:
its hybrid dag is shown in the box.}
\end{figure*}
Two ordered $\Sigma$-labeled multigraphs 
$M_1 = (V_1,\gamma_1,\lambda_1)$ and $M_2 = (V_2,\gamma_2,\lambda_2)$
are isomorphic if there exists a bijection $f : V_1 \to V_2$ 
such that for all $v \in V_1$, 
$\gamma_2(f(v)) = f(\gamma_1(v))$ and $\lambda_2(f(v)) = \lambda_1(v)$
(here we implicitly extend $f$ to a morphism $f : V_1^* \to V_2^*$).
We do not distinguish between isomorphic multigraphs.
In particular, in our figures a node $v \in V$ is not represented by
the symbol $v$, but by the label $\lambda(v)$.

An {\em ordered $\Sigma$-labeled dag} is a $\Sigma$-labeled  ordered 
multigraph 
$d = (V,\gamma,\lambda)$ 
such that the underlying graph
$G_d$ is acyclic. The nodes $r \in V$ for which 
there is no $v \in V$ such that $(v,r)$ is an edge of 
$G_d$
 ($r$ has 
no incoming edges) are called the {\em roots} of $d$.  
An ordered $\Sigma$-labeled {\em rooted} dag is an ordered $\Sigma$-labeled 
%
dag with a unique root. In this case every node of $d$ 
is
reachable in 
$G_d$ from the root node.
 The nodes $\ell \in V$ for which 
there is no $v \in V$ such that $(\ell,v)$ is an edge of 
$G_d$  ($\ell$ has 
no outgoing edges) are called the {\em leaves} of $d$.
An {\em ordered $\Sigma$-labeled tree} is an ordered $\Sigma$-labeled
rooted dag $t = (V,\gamma,\lambda)$ such that 
every non-root node 
$v$
has exactly one occurrence
in the concatenation of all strings $\gamma(u)$ for $u \in V$.
In other words, the underlying graph 
$G_t$ 
is a rooted
tree in the usual sense and in every string $\gamma(u)$, every
$v \in V$ occurs at most once.
We define $\mathcal{T}(\Sigma)$ as the set of all ordered $\Sigma$-labeled trees. 
We denote ordered $\Sigma$-labeled trees by their usual term notation, i.e.,
for every $a \in \Sigma$, $n \geq 0$, and all trees $t_1,\dots,t_n \in \mathcal{T}(\Sigma)$, 
we also have $a(t_1,\ldots,t_n) \in \mathcal{T}(\Sigma)$.
Note that trees from $\mathcal{T}(\Sigma)$ are \emph{unranked} in the
sense that the number of children of a node does not depend on 
the label of the node.
We therefore frequently speak of unranked trees for
elements of $\mathcal{T}(\Sigma)$.

Let $d= (V,\gamma,\lambda)$ be an ordered $\Sigma$-labeled dag.
With every node $v \in V$ we associate a tree
$\text{eval}_d(v) \in \mathcal{T}(\Sigma)$ 
inductively as follows: We set 
$$\text{eval}_d(v) = f(\text{eval}_d(v_1),\ldots, \text{eval}_d(v_n)),
$$
if $\lambda(v)=f$ and 
$\gamma(v)=v_1\cdots v_n$ (where $f(\varepsilon)= f$).
Intuitively, $\text{eval}_d(v)$ is the tree obtained by unfolding 
$d$ starting in the node $v$. 
If $d$ is an ordered $\Sigma$-labeled rooted dag, then we define 
$\text{eval}(d) = \text{eval}_d(r)$, where $r$ is the root node
of $d$. Note that if $t$ is an ordered 
$\Sigma$-labeled tree and $v$ is a node of $t$, then
$\text{eval}_t(v)$ is simply the subtree of $t$ rooted at $v$
and is written as 
$t/v = \text{eval}_t(v)$ 
in this case.
If for nodes $u \neq v$ of $t$ we have $t/u = t/v$, then
the tree $t/u = t/v$ is a
\emph{repeated subtree} of $t$.

Let $t = (V,\gamma,\lambda) \in \mathcal{T}(\Sigma)$ 
and let $G_t = (V,E)$ be the underlying graph (which is a tree).
For an edge $(u,v) \in E$,  $v$ is a \emph{child} of $u$, and $u$ is
the \emph{parent} of $v$. If two nodes $v$ and $v'$ have the same
parent node $u$, then $v$ and $v'$ are \emph{siblings}.
If moreover $\gamma(u)$ is of the form $u_1 v v' u_2$ for 
$u_1, u_2 \in V^*$ then $v'$ is the \emph{next sibling} of 
$v$, and $v$ is the \emph{previous sibling} of $v'$. 
If a node $v$ does not have a previous sibling, it is a \emph{first child}, and 
if it does not have a next sibling, it is a \emph{last child}.

For many tree-processing formalisms (e.g.\ standard
tree automata), it is useful to deal with ranked trees, where
the number of children of a node is bounded.
There is a standard binary encoding of unranked trees, which we 
introduce next.  
A \emph{binary $\Sigma$-labeled dag} $d$,
or short \emph{binary dag}, can be defined as an ordered $(\Sigma \cup
\{\Box\})$-labeled dag $d = (V,\gamma,\lambda)$, where 
$\Box \not\in \Sigma$ is a special dummy symbol such that the 
following holds: 
\begin{itemize}
\item For every $v \in V$ with $\lambda(v) \in \Sigma$
we have $|\gamma(v)|=2$
\item for every $v \in V$ with $\lambda(v)
=\Box$ we have $|\gamma(v)|=0$. 
\end{itemize}
For a binary dag, $d = (V, \gamma,\lambda)$, we alter our definitions
of node and edge sizes by disregarding all dummy nodes. That is,
the node size
is now
$ \|d\|
= |\{ v \in V \mid \lambda(v) \neq \Box\}|$ 
and the (edge) size is $|d| =  \sum_{v \in V} |\gamma(v)|_{\Sigma}$, where
$|v_1v_2 \cdots v_m|_{\Sigma} = |\{ i \mid 1 \leq i \leq m,
\lambda(v_i) \neq \Box\}|$. 
Accordingly, the dummy nodes are not represented in our figures.

A \emph{binary $\Sigma$-labeled tree} $t$,
or short \emph{binary tree}, is a binary dag which is moreover
an ordered $(\Sigma \cup \{\Box\})$-labeled tree.
The symbol $\Box$ basically denotes the absence
of a left or right child of a node. For instance, 
$g(a,\Box)$ denotes the binary tree that has a $g$-labeled root with an
$a$-labeled left child but no left child
(as shown in the bottom of Figure~\ref{fig:hdag}).
Note that $g(a,\Box)$ and $g(a,\Box)$ are different binary trees.
Let $\mathcal{B}(\Sigma)$ denote the set of 
binary $\Sigma$-labeled trees.

We define a mapping 
$\text{fcns} : \mathcal{T}(\Sigma)^* \rightarrow \mathcal{B}(\Sigma)$,
where as usual $\mathcal{T}(\Sigma)^*$ denotes the set of all finite
words (or sequences) over the set $\mathcal{T}(\Sigma)$,
as follows (``fcns'' refers to ``first child/next sibling''): 
For the empty word $\varepsilon$ let
$\text{fcns}(\varepsilon) = \Box $ (the empty binary tree).
If $n \geq 1$, 
$t_1, \ldots, t_n \in \mathcal{T}(\Sigma)$ 
and $t_1 = f(u_1,\ldots,u_m)$ with $m \geq 0$, then
$$
\text{fcns}(t_1 t_2\cdots t_n) = f(\text{fcns}(u_1 \cdots
u_m),\text{fcns}(t_2 \cdots t_n)).
$$
Note that  $\text{fcns}(a) = a(\Box,\Box)$ for $a \in \Sigma$.
In the following we always simply write $a$ for $a(\Box,\Box)$. 
The encoding fcns is bijective, hence the inverse 
$\text{fcns}^{-1} : \mathcal{B}(\Sigma) \to \mathcal{T}(\Sigma)^*$ 
is defined. Moreover, for every $t \in \mathcal{T}(\Sigma)$,
we have
 $\|\text{fcns}(t)\| =\|t\|$,
%
see e.g.~\cite{DBLP:books/aw/Knuth68}.
The fcns encoding is also known as 
rotation correspondence (see, e.g.~\cite{marckert}) 
and as Rabin encoding.

\begin{example} 
Let $t_1 = f(a_1,a_2,a_3)$ and let $t_2 = g(b_1,b_2)$. Then 
$\text{fcns}(t_1 t_2) = f(a_1(\Box,a_2(\Box,a_3)),g(b_1(\Box,b_2),\Box))$.
\end{example}
As mentioned in the Introduction, 
one can construct $\text{fcns}(t)$ by keeping
all nodes of $t$ and 
creating
 edges as follows: For each node $u$
of $t$, the left child of $u$ 
in $\text{fcns}(t)$
is the first child of $u$ in $t$ (if it
exists) and the right child of $u$ 
in $\text{fcns}(t)$
is the next sibling of $u$ in $t$ (if it exists).

An ordered tree can be compacted by representing occurrences of repeated
subtrees only once. Several edges then point to the same subtree
(which we call a \emph{repeated} subtree), thus making the tree 
an ordered dag. It is known that the minimal
dag of a tree is unique and that it can be constructed in linear time 
(see e.g., ~\cite{DBLP:journals/jacm/DowneyST80}). 
For later purposes it is useful to define 
the minimal dag $\text{dag}(d)$ for every ordered $\Sigma$-labeled dag
$d= (V,\gamma,\lambda)$. 
It can be 
%
 defined as 
$$
\text{dag}(d) = (\{ \text{eval}_d(u) \mid u \in V\}, \gamma',\lambda')
$$
with $\lambda'(f(t_1,\ldots, t_n)) =  f$ and 
$\gamma'(f(t_1,\ldots, t_n)) = t_1\cdots t_n$.
Thus, the nodes of $\text{dag}(d)$ are the different trees represented
by the 
unfoldings of the
nodes of $d$.

%
The internal structure of the nodes of $\text{dag}(d)$ 
(which are trees in our definition) has no influence on the size of $\text{dag}(d)$,
which is still defined to be the number of its edges.
Note that in general we cannot recover
$d$ from $\text{dag}(d)$: For instance if $d$ is the disjoint union of
two copies of the same tree $t$, then $\text{dag}(d) = \text{dag}(t)$,
but this will not be a problem. 
Indeed, 
we use dags only for the compression
of forests consisting of different trees.
Such a forest can be reconstructed from its minimal dag.
Note also that if $d$ is a rooted dag, then $\text{dag}(d)$ is also rooted 
and we have $\text{eval}(\text{dag}(d))=\text{eval}(d)$.

\begin{example} Consider the 
tree $t_n$ 
defined by $t_0=a$ and $t_n= a(t_{n-1},t_{n-1})$.
While 
$|t_n| = 2(2^n-1)$, $|\text{dag}(t_n)| =
2n$. Hence $\text{dag}(t)$ can be exponentially smaller than $t$. 
\end{example}
The \emph{binary dag} of $t\in \mathcal{T}(\Sigma)$, denoted $\text{bdag}(t)$, is defined as
$$
\text{bdag}(t)  = \text{dag}(\text{fcns}(t)).
$$
It is a binary dag as defined above. See Figure~\ref{fig:tn2}
in the Introduction for an example 
(recall that we do not represent dummy nodes in binary dags).

Clearly, the number of nodes of $\text{dag}(t)$ equals 
the number of different subtrees $t/v$ 
of $t$.
In order to 
describe
 the number of nodes of $\text{bdag}(t)$ the
following definition is useful: For a node $v$ of an unranked tree 
$t = (V,\gamma,\lambda)$ define
$\text{sibseq}(v) \in \mathcal{T}(\Sigma)^*$ 
(the {\em sibling sequence} of $v$) as 
the sequence of subtrees rooted at $v$ and all its
right siblings. More formally,
if $v$ is the root of $t$ then $\text{sibseq}(v)=t$. Otherwise,
let $u$ be the parent node of $v$ and let 
$\gamma(u) = w v v_1 \cdots v_m$, where $w \in V^*$.
Then  
$$
\text{sibseq}(v) = (t/v) (t/v_1) \cdots (t/v_m).
$$

\begin{example} 
The different sibling sequences of the tree $t=f(a, f(b, a), b, a)$ are: $t$,
$af(b,a)ba$, $f(b,a)ba$, $ba$, and $a$.
\end{example} 
The following lemma follows directly from the definitions
of bdag and sibseq:

\begin{lemma} \label{lemma-sib-sequ}
The number of nodes of $\text{bdag}(t)$ is equal to the number of different
sibling sequences $\text{sibseq}(v)$, for all $v \in V$.
\end{lemma}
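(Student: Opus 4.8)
The plan is to unwind the two definitions that appear in the statement — that of $\text{bdag}(t)=\text{dag}(\text{fcns}(t))$ and that of $\text{sibseq}$ — and to match the two quantities by a node-to-node correspondence. First I would note that, by the definition of $\text{dag}(\cdot)$, the nodes of $\text{dag}(d)$ are the pairwise distinct trees $\text{eval}_d(u)$ for $u\in V$, and such a node is non-dummy exactly when $u$ is. Since $\text{fcns}$ keeps all nodes of $t$, these being $\Sigma$-labeled and hence non-dummy, and since $\|\text{fcns}(t)\|=\|t\|$, the non-dummy nodes of $\text{fcns}(t)$ are precisely the nodes $V$ of $t$, while every dummy node of $\text{fcns}(t)$ unfolds to $\Box$ and does not count towards the node size. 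Hence the number of nodes of $\text{bdag}(t)$ equals the number of distinct binary trees in $\{\text{eval}_{\text{fcns}(t)}(v)\mid v\in V\}$, so it is enough to describe $\text{eval}_{\text{fcns}(t)}(v)$ in terms of $\text{sibseq}(v)$ and then invoke the injectivity of $\text{fcns}$.

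The key claim is: for every node $v$ of $t$ we have $\text{eval}_{\text{fcns}(t)}(v)=\text{fcns}(\text{sibseq}(v))$. I would prove this by induction on the number of nodes of the forest $\text{sibseq}(v)$. In the base case $\text{sibseq}(v)$ is a single leaf, so $v$ has neither a first child nor a next sibling in $t$; then both children of $v$ in $\text{fcns}(t)$ are $\Box$ and both sides equal $\lambda(v)(\Box,\Box)$. For the inductive step, write $\text{sibseq}(v)=(t/v)(t/v_1)\cdots(t/v_m)$ with $t/v=f(t/w_1,\dots,t/w_k)$, where $w_1,\dots,w_k$ are the children of $v$ in $t$ and $v_1,\dots,v_m$ its right siblings, either list possibly empty. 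Applying the recursive definition of $\text{fcns}$ to the nonempty sequence $\text{sibseq}(v)$ gives $\text{fcns}(\text{sibseq}(v))=f\bigl(\text{fcns}(\text{sibseq}(w_1)),\text{fcns}(\text{sibseq}(v_1))\bigr)$, where $\text{fcns}(\text{sibseq}(w_1))$ is understood as $\Box$ when $k=0$ and $\text{fcns}(\text{sibseq}(v_1))$ as $\Box$ when $m=0$, because $(t/w_1)\cdots(t/w_k)$ equals $\text{sibseq}(w_1)$ (resp.\ $\varepsilon$) and the analogous statement holds for the right siblings. On the other side, by the construction of $\text{fcns}(t)$ the left child of $v$ is its first child $w_1$ in $t$ (if any, else $\Box$) and the right child is its next sibling $v_1$ in $t$ (if any, else $\Box$), so $\text{eval}_{\text{fcns}(t)}(v)=f\bigl(\text{eval}_{\text{fcns}(t)}(w_1),\text{eval}_{\text{fcns}(t)}(v_1)\bigr)$ with the same conventions about $\Box$. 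Both $\text{sibseq}(w_1)$ and $\text{sibseq}(v_1)$ have strictly fewer nodes than $\text{sibseq}(v)$ — the former drops $v$ together with all of $v$'s right siblings, the latter drops the whole of $t/v$ — so the induction hypothesis applies to each and the two expressions agree.

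Granting the claim, the non-dummy nodes of $\text{bdag}(t)$ are exactly the distinct elements of $\{\text{fcns}(\text{sibseq}(v))\mid v\in V\}$, and every such element is genuinely non-dummy since $\text{sibseq}(v)$ is a nonempty forest. Because $\text{fcns}$ is injective, this set is in bijection with the set of distinct sibling sequences $\{\text{sibseq}(v)\mid v\in V\}$, and comparing cardinalities proves the lemma. I expect the only delicate point to be the bookkeeping in the inductive step: keeping the cases $k=0$ and $m=0$ straight, and verifying that the induction measure — the number of nodes of $\text{sibseq}(v)$ — genuinely strictly decreases along both recursive calls. Everything else is a direct rewrite of the definitions.
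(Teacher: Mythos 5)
Your proof is correct and is exactly the argument the paper has in mind: the paper gives no proof, stating only that the lemma ``follows directly from the definitions of bdag and sibseq,'' and your key claim $\text{eval}_{\text{fcns}(t)}(v)=\text{fcns}(\text{sibseq}(v))$ together with the injectivity of $\text{fcns}$ is the direct unwinding of those definitions. The induction and the $\Box$-bookkeeping are handled correctly, so this is a faithful (and more detailed) version of the intended argument.
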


\section{Straight-line tree grammars} \label{sec-SLT-grammar}


Straight-line tree grammars are a formalism that in many cases 
give a more compact tree representation as dags. 
Let $\{y_1, y_2, \ldots\}$ be an infinite fixed set of parameters (or variables).
A \emph{straight-line tree grammar} (\emph{SLT grammar} for short)
is a tuple ${\mathcal G}=(N, \text{rank}, \Sigma, S, \rho)$, where 
\begin{itemize}
\item $N$ is a finite set of so-called {\em nonterminal symbols}
\item $\text{rank} : N \to \mathbb{N}$ maps every nonterminal
to its rank 
(which may be~$0$)
\item $\Sigma$ is a finite set of node labels
\item $S \in N$ is the start nonterminal and 
\item $\rho$ is a function that maps every $X \in N$  to an ordered 
$\Gamma$-labeled tree $\rho(X) = (V,\gamma,\lambda)$, where
$\Gamma=\Sigma \cup N \cup \{y_1, \ldots, y_{\text{rank}(X)}\}$ and
the following conditions hold:
%
\begin{itemize}
\item for every $1 \leq i \leq \text{rank}(X)$ 
there is exactly one node $v \in V$ with $\lambda(v) = y_i$, which
moreover is a leaf of $\rho(X)$ and
\item  for every
node $v \in V$ with $\lambda(v) = Y \in N$ we have $|\gamma(v)| = \text{rank}(Y)$, i.e., $v$ has $\text{rank}(Y)$
many children. 
\end{itemize}
Finally, the binary relation
$\{ (X,Y) \in N \times N \mid Y \text{ appears in } \rho(X) \}$ must be acyclic. 
\end{itemize}
 We also write $X \to t$
for $\rho(X)=t$ and call $X \to t$ a \emph{rule} or \emph{production} of $\mathcal G$. Moreover, we also write 
$X(y_1, \ldots, y_{\text{rank}(X)})$ instead of  $X$
in the left-hand side of the rules, to emphasize the rank of the
nonterminals.

The 
properties
of an SLT grammar ${\mathcal G}=(N, \text{rank}, \Sigma, S, \rho)$
allow us to define 
for every nonterminal $X \in N$ a $(\Sigma \cup \{y_1, \ldots,
y_{\text{rank}(X)}\})$-labeled 
tree $\text{eval}_{\mathcal G}(X)$ inductively 
as follows\footnote{We hope that no confusion will arise with the evaluation of a dag
defined in the previous section; we will see in fact that the
evaluation of a dag can be seen as a special case of the evaluation of
an SLT grammar.}:
 Let $\rho(X) = (V,\gamma,\lambda)$.
Assume that for every nonterminal $Y$ that appears in 
$\rho(X)$, the tree $t_Y = \text{eval}_{\mathcal G}(Y)$ is already defined.
This is a tree that contains for every $1 \leq j \leq \text{rank}(Y)$
exactly one leaf node that is labeled with $y_j$. We now replace every node $v \in V$ in $\rho(X)$
that is labeled with a nonterminal by a copy of the tree $t_{\lambda(v)}$.
Thereby, the $j$-th child of $v$ is identified
with the $y_j$-labeled node of $t_{\lambda(v)}$ for every
$1 \leq j \leq \text{rank}(\lambda(v))$, and the parent node of the root of $t_{\lambda(v)}$ becomes the parent
node of $v$. The resulting tree is $\text{eval}_{\mathcal G}(X)$.
For a completely formal definition, see e.g. 
\cite{DBLP:journals/tcs/LohreyM06,DBLP:journals/jcss/LohreyMS12}.\footnote{The formalisms in 
\cite{DBLP:journals/tcs/LohreyM06,DBLP:journals/jcss/LohreyMS12} slightly differ from our definition,
since they assume a fixed rank for every node label in $\Sigma$. But this is not an essential difference.}
Finally, let $\text{eval}({\mathcal G}) = \text{eval}_{\mathcal G}(S)$.
The term ``straight-line tree grammar''  comes from the fact that an SLT can be seen as a context-free
tree grammar that produces a single tree.

The size of ${\mathcal G}=(N, \text{rank}, \Sigma, S, \rho)$ is 
defined to be
$$
|{\mathcal G}| = \sum_{X \in N} |\rho(X)| .
$$

\begin{example}
Consider the SLT grammar 
$\mathcal G$ 
with nonterminals $S, A, B$ and  rules
\begin{eqnarray*}
S & \to & B(a,b,B(c,d,a)),  \\
B(y_1, y_2, y_3) & \to & A(y_1, A(y_2,y_3)), \\
A(y_1, y_2) & \to & f(g(y_1),y_2) .
\end{eqnarray*}
We have $|{\mathcal G}| = 13$ and
$\text{eval}({\mathcal G}) = f(g(a),f(g(b),f(g(c),f(g(d),a))))$.
The same tree is also generated by the SLT grammar
with the rules
\begin{eqnarray*}
S & \to & A(a,A(b,A(c,A(d,a)))),  \\
A(y_1, y_2) & \to & f(g(y_1),y_2) .
\end{eqnarray*}
Its size is only $11$.
\end{example}

A $k$-SLT is an SLT ${\mathcal G} = (N, \text{rank}, \Sigma, S, \rho)$ such that
$\text{rank}(X) \leq k$ for every $X \in N$.  
A $0$-SLT grammar is
also called a \emph{regular SLT} (since it is a regular tree
grammar). 
In such a grammar, nonterminals only occur as leaves in the right-hand
sides of the rules.

An ordered $\Sigma$-labeled rooted dag $d = (V, \gamma,\lambda)$ can be
identified with the $0$-SLT grammar 
$$
\mathcal{G}_d = (V,\text{rank},\Sigma,S,\rho),
$$ 
where  $\text{rank}(v)=0$ for every $v \in V$, $S$ is the root of $d$,
and $\rho(v) = f(v_1,\ldots,v_n)$ if $\lambda(v)=f$ and $\gamma(v) =
v_1 \cdots v_n$. 
Note that all trees occurring in the right-hand sides of the rules have
height $0$ or $1$.
Often in this paper, it will be useful to eliminate in
$\mathcal{G}_d$ nonterminals $v \in V$ such that 
$\gamma(v) = \varepsilon$
(that is, the leaves of the dag $d$).
For this, we define the {\em reduced}  $0$-SLT grammar 
$$
\mathcal{G}_d^{\text{red}} = (V \setminus V_0 ,\text{rank},\Sigma,S,\rho),
$$ 
where $V_0 = \{ v \in V \mid \gamma(v) = \varepsilon\}$,  
$\text{rank}(v)=0$ for every $v \in V \setminus V_0$, $S$ is the root of $d$,
and $\rho(v) = f(w_1,\ldots,w_n)$ if $\lambda(v)=f$, $\gamma(v) = v_1 \cdots v_n$, and
$w_i = \lambda(v_i)$ if $v_i \in V_0$ and $w_i = v_i$ otherwise.
Note that every right-hand side $\rho(v)$ of
$\mathcal{G}_d^{\text{red}}$ is now a tree of height 1.
This
does not change the evaluation of the grammar,
and simplifies some technical details in Section~\ref{sec-size}.
Of course, we should exclude the case that $V_0 = V$. For this, we simply
exclude 
dags
 consisting of a single node 
 from further considerations.
Finally, observe that the evaluation of the grammar $\mathcal G_d$
coincides with the evaluation of the dag $d$ defined in the previous section.

If $\mathcal{G}_d^{\text{red}} = (V \setminus V_0 ,\text{rank},\Sigma,S,\rho)$  with $V \setminus V_0 = \{ A_1, \ldots, A_n\}$ 
and $\rho(A_i) = f_i(\alpha_{i,1}, \ldots, \alpha_{i,k_i})$ (where $\alpha_{i,j} \in (V \setminus V_0) \cap \Sigma$)
then the words $\alpha_{i,1} \cdots \alpha_{i,k_i} \in   ((V \setminus V_0) \cap \Sigma)^+$ are called the \emph{child
sequences} of the dag $d$. 

\begin{example}
For $d = \text{dag}(t)$ in Figure~\ref{fig:hdag}, $\mathcal{G}_d^{\text{red}}$ consists of the rules
$$
S \to f(B,A,A), \qquad B \to f(A,A), \quad A \to g(a).
$$
Hence the child sequences of the dag are $BAA$, $AA$, and $a$.
\end{example}
Algorithmic problems on SLT grammar-compressed trees were considered in \cite{DBLP:journals/tcs/LohreyM06,DBLP:journals/jcss/LohreyMS12}. 
Of particular interest
in this context is a result from \cite{DBLP:journals/jcss/LohreyMS12} stating that a given SLT $\mathcal G$ can be transformed in polynomial time
into an equivalent $1$-SLT ${\mathcal G}_1$  such that 
$\text{eval}({\mathcal G}_1)=\text{eval}({\mathcal G})$.  In combination with results from 
\cite{DBLP:journals/tcs/LohreyM06} it follows that for a given nondeterministic  tree automaton $A$ (even with sibling constraints) 
and an SLT grammar $\mathcal G$ one can test in polynomial time whether $A$ accepts $\text{eval}({\mathcal G})$.
Compression algorithms that produce from a given input tree a small SLT grammar are proposed in 
\cite{DBLP:journals/is/BusattoLM08,lohmanmen13}.

\section{The hybrid dag} \label{sec-hdag}
While the dag shares repeated subtrees of a tree, 
the binary dag shares repeated sibling sequences (see Lemma~\ref{lemma-sib-sequ}). 
Consider an unranked tree $t$.
As we have seen in the Introduction, the size of
$\text{dag}(t)$ can be smaller than the size of
$\text{bdag}(t)$. On the other hand, it can also be that the size of
$\text{bdag}(t)$ is smaller than the size of $\text{dag}(t)$.
We now wish to define a tree representation that combines both types
of sharing (trees and tree sequences) and whose size is
guaranteed to be smaller than or equal to the minimum
of the sizes of $\text{dag}(t)$ and $\text{bdag}(t)$.
Our starting point is $d = \text{dag}(t)$. 
In this dag we now want to share all repeated sibling sequences.
As an example, consider the tree
$t=f(f(g(a),g(a)),g(a),g(a))$ shown 
on the top left of Figure~\ref{fig:hdag}. 
Its size is $9$.
The dag of this tree
consists of a unique occurrence of the subtree
$g(a)$ plus two $f$-labeled nodes, shown
to the right of $t$ in the figure. 
Thus $|d|=6$.
The corresponding reduced $0$-SLT grammar  $\mathcal{G}_d^{\text{red}}$ consists of the rules
\begin{equation}\label{dag-example}
\begin{array}{lcl}
S&\to& f(B,A,A),\\
B &\to& f(A,A), \\
A &\to& g(a).
\end{array}
\end{equation}
In order to share repeated sibling sequences in $d$
we apply the fcns encoding to the right-hand sides of
$\mathcal{G}_d^{\text{red}}$. 
For the above example we obtain the following new ``binary tree grammar'':
\begin{equation}\label{dag-example-binary-coding}
\begin{array}{lcl}
S&\to& f(B(\Box, A(\Box,A)),\Box)\\
B &\to& f(A(\Box,A),\Box)\\
A&\to& g(a,\Box).
\end{array}
\end{equation}
This is not an SLT grammar, since there are occurrences of $A$ 
with $0$ and $2$, respectively,
children. We view the above rules just as the binary encoding of~\eqref{dag-example}.

We now build the minimal dag of the 
forest
obtained by taking the disjoint union
of all right-hand sides of \eqref{dag-example-binary-coding}.
In the example, the subtree
$A(\Box,A)$ appears twice and is shared.
We write the resulting dag again as a grammar,
using the new nonterminal $C$ for the new repeated tree $A(\Box,A)$
(corresponding to the repeated sibling sequence $AA$ in \eqref{dag-example}):
\begin{equation}\label{hdag-example}
\begin{array}{lcl}
S&\to& f(B(\Box, C),\Box)\\
B &\to& f(C,\Box)\\
C & \to & A(\Box,A)\\
A&\to& g(a,\Box)
\end{array}
\end{equation}
These rules make up  the {\em hybrid dag} ({\em hdag} for short)
of the initial tree. Its size is the total number of edges in all
right-hand side trees; it is $5$ in our example (here, as usual, we do not count edges
to $\Box$-labeled nodes).
Compare this to $\text{dag}(t)$ and
$\text{bdag}(t)$, which are both of size $6$. Again, note that \eqref{hdag-example}
should not be seen as an SLT-grammar but as a succinct encoding of \eqref{dag-example}.

In our example, the production $B \to f(A,A)$ in the $0$-SLT grammar \eqref{dag-example}
does not save any edges, since the nonterminal $B$ occurs only once in a right-hand side (namely $f(B,A,A)$).
Eliminating this production yields the $0$-SLT grammar
\[
\begin{array}{lcl}
S&\to& f(f(A,A),A,A)\\
A &\to& g(a)
\end{array}
\]
with the fcns encoding
\[
\begin{array}{lcl}
S&\to& f(f(A(\Box,A),A(\Box,A)),\Box)\\
A&\to& g(a,\Box) .
\end{array}
\]
Sharing repeated subtrees gives 
\begin{equation}\label{hdag2-example}
\begin{array}{lcl}
S&\to& f(f(C,C),\Box)\\
C&\to& A(\Box,A)\\
A&\to& g(a,\Box) ,
\end{array}
\end{equation}
which corresponds to the framed graph in Figure~\ref{fig:hdag} .
The total number of edges to  non-$\Box$ nodes in all right-hand sides is still 5, but it has only 
3 nonterminals in contrast to 4 for the above hdag. In practice,
having fewer nonterminals is preferable. In fact,
our implementation avoids redundant nonterminals like $B$ in our 
example. On the other hand, having only trees of height 1 as right-hand
sides of the dag (seen as a reduced $0$-SLT grammar) does not influence the number of edges
in the final grammar. Moreover, it slightly simplifies the proofs in the next section, where
we show that the size of the hdag of a tree $t$
is smaller than or equal to the minimum of the sizes of
$\text{dag}(t)$ and $\text{bdag}(t)$. 

In general,
the hybrid dag is produced by first building the minimal dag, then
constructing the fcns encoding of the corresponding reduced 
$0$-SLT grammar, and then
building a minimal dag again.
More formally, consider $d = \text{dag}(t)$ and assume that
the corresponding reduced $0$-SLT grammar $\mathcal{G}_d^{\text{red}}$
contains the rules
$A_1\to t_1,\dots, A_n\to t_n$. Recall that every tree $t_i$ has height
1 and that the trees $t_1, \ldots, t_n$ are pairwise different.
Let $t'_i$ be the tree that is obtained from $t_i$ by adding $A_i$ as an additional
label to the root of $t_i$.
Then 
\[
\text{hdag}(t) = \text{dag}(\text{fcns}(t'_1),\dots,\text{fcns}(t'_n)),
\]
where the tuple $(\text{fcns}(t'_1),\dots,\text{fcns}(t'_n))$
is viewed as the dag obtained by taking the disjoint union
of the binary trees $\text{fcns}(t'_i)$.
Clearly $\text{hdag}(t)$ is unique up to isomorphism.
In the second step when 
$\text{dag}(\text{fcns}(t'_1),\dots,\text{fcns}(t'_n))$
is constructed from the tuple 
$(\text{fcns}(t_1'),\dots,\text{fcns}(t_n'))$, 
only suffixes of child sequences can be shared,
since the trees $t'_1, \ldots, t'_n$ are pairwise different and of height 1.
The size $|\text{hdag}(t)|$ of the hdag is the number of edges (to non-$\Box$-labeled nodes) 
of $\text{dag}(\text{fcns}(t'_1),\dots,\text{fcns}(t'_n))$.
Note that the additional
label $A_i$ at the root of $t_i$ is needed in order to be able to reconstruct
the initial tree $t$. In \eqref{hdag-example}, these additional labels ($S$, $A$, and $B$) are implicitly present as the left-hand sides of the rules.
On the other hand, these labels have no influence on the size of the hdag.

The hdag is a particular dag. It is obtained by sharing
repeated suffixes of 
child sequences in the minimal dag (viewed as a $0$-SLT grammar).
In Section~\ref{sec:dag_plus_string} we introduce a further generalization of this
idea, where child sequences of the dag are compressed using a straight-line 
context-free tree grammar.  Moreover, we show that such a compressed structure
can be easily transformed into an equivalent $1$-SLT grammar (Theorem~\ref{thm-construct-1-STL}). This applies 
in particular to hdags. Hence, all the good algorithmic properties of ($1$-)SLT grammars
(e.g. polynomial time evaluation of tree automata) also hold for hdags.

\section{Using the reverse encoding}\label{section_reverseDag}
Instead of using the fcns encoding of a tree, one may also use 
the \emph{last child previous sibling encoding} (lcps). 
Just like fcns, lcps is a bijection from
$\mathcal{T}(\Sigma)^*$ to $\mathcal{B}(\Sigma)$ 
and is defined as follows.
For the empty word $\varepsilon$ let
$\text{lcps}(\varepsilon) = \Box $ (the empty binary tree).
If $n \geq 1$, 
$t_1, \ldots, t_n \in \mathcal{T}(\Sigma)$ 
and $t_n = f(u_1,\ldots,u_m)$ with $m \geq 0$, then
$$\text{lcps}(t_1 t_2\cdots t_n) = f(\text{lcps}(t_1,\ldots,t_{n-1}),
%
  \text{lcps}(u_1 \cdots u_m)).
$$
Again, the inverse $\text{lcps}^{-1} : \mathcal{B}(\Sigma)
\to \mathcal{T}(\Sigma)^*$ is defined.

\begin{example}
Let $t_1=f(a_1,a_2,a_3)$ and  $t_2 = g(b_1,b_2)$. 
Then $$\text{lcps}(t_1 t_2) = g(f(\Box,a_3(a_2(a_1,\Box),\Box)),b_2(b_1,\Box)).$$
\end{example}
Let $\text{rbdag}(t)=\text{dag}(\text{lcps}(t))$ 
and $$\text{rhdag}(t)=\text{dag}(\text{lcps}(t'_1),\dots,\text{lcps}(t'_n)),$$
where $t'_1, \ldots, t'_n$ are obtained from $t$ as in the definition of the hdag.
The reason to consider the lcps encoding is that 
$\text{rbdag}(t)$ and $\text{rhdag}(t)$ are smaller for trees that have 
repeated \emph{prefixes} of child sequences. 
Empirically, as we show in Section~\ref{sec:exp_results_dag}, this is 
quite common and for most trees $t$ in our 
XML corpus $|\text{rbdag}(t)| < |\text{bdag}(t)| $ and 
$|\text{rhdag}(t)| < |\text{hdag}(t)| $.

\begin{example}   
Let $t = f(f(a,a,b),f(a,a,c))$. 
Then $|\text{rbdag}(t)| = 7$ while 
$|\text{dag}(t)|=|\text{bdag}(t)|=|\text{hdag}(t)|=|t|=8$.
\end{example}
Clearly, there are also trees $t$ where $|\text{hdag}(t)| < |\text{rhdag}(t)|$.
This raises the question whether there is a scheme which 
combines the best of both approaches.
Obviously one can construct both $\text{hdag}(t)$ and $\text{rhdag}(t)$
of a tree $t$ and discard the larger of both. Yet a scheme which combines 
both approaches by sharing both suffixes and prefixes of children sequences,
faces the problem that the resulting minimal object is not necessarily unique. 
This can easily be seen by considering trees in which repeated prefixes and suffixes
of child  sequences overlap. Also it is not clear how a minimal such object can 
be constructed efficiently.
A (non-optimal) approach we have considered
was to first share repeated prefixes and 
then share repeated suffixes. Yet the results in compression achieved 
were not significantly better than for the $\text{rhdag}$. Moreover,
this approach can be further generalized by sharing arbitrary factors of  
sibling sequences. This is the topic of Section~\ref{sec:dag_plus_string}.

\section{Comparison of worst-case sizes of dag, bdag, and hdag} \label{sec-size}

We want to compare the node size and the edge size of $\text{dag}(t)$, $\text{bdag}(t)$, and $\text{hdag}(t)$ for an unranked tree $t$.
We do not include $\text{rbdag}(t)$ or $\text{rhdag}(t)$, 
because by symmetry the same bounds holds 
as for $\text{bdag}(t)$ and $\text{hdag}(t)$, respectively.
   
\subsection{The number of nodes}

   In this section we consider the number of \emph{nodes}
   in the dag and bdag of an unranked tree $t$. We show that
   $\| \text{dag}(t)\| \leq \|\text{bdag}(t)\|$.

   \begin{example} \label{example-f(a,..a)}
   Consider the tree $t_n=f(a,a,\ldots,a)$ 
   consisting of $n$ nodes, where $n \geq 2$.
   Then 
$\|\text{dag}(t)\|=2$ 
and
 $\|\text{bdag}(t)\|=n$, while $|\text{dag}(t)| = |\text{bdag}(t)|
   = n-1$.
   Note that dags with multiplicities on edges, as defined 
   in~\cite{DBLP:conf/vldb/KochBG03}, can store a tree such
   as $t_n$ in size $O(\log n)$. 
   \end{example}

   \begin{lemma}\label{lemma:comparing_node_size}
    Let $t$ be an unranked tree. Then
$\| \text{dag}(t)\| \le 
\| \text{bdag}(t) \| \label{ineq:node_size}$.
   \end{lemma}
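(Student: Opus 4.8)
The plan is to exhibit an injective map from the set of distinct subtrees of $t$ into the set of distinct sibling sequences of $t$, and then invoke the two characterizations: that $\|\text{dag}(t)\|$ equals the number of distinct subtrees $t/v$ (stated right after the definition of $\text{bdag}$), and that $\|\text{bdag}(t)\|$ equals the number of distinct sibling sequences $\text{sibseq}(v)$ (Lemma~\ref{lemma-sib-sequ}). The natural candidate map sends a subtree $s = t/v$ to the \emph{length-one} sibling sequence consisting of $s$ alone. To see this is well-defined on equivalence classes of nodes, note that if $t/u = t/v = s$, then both nodes can serve as witnesses; I need to produce, for each distinct subtree $s$, at least one node $w$ of $t$ with $\text{sibseq}(w) = s$ (i.e. $w$ is a last child or the root, and $t/w = s$).

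First I would handle the root: $\text{sibseq}(\text{root}) = t$ by definition, so the subtree $t$ itself is realized as a sibling sequence of length one. For a proper subtree $s$, pick any node $v$ with $t/v = s$; if $v$ happens to be a last child (or the root), then $\text{sibseq}(v) = t/v = s$ and we are done. If $v$ is not a last child, I need a different witness. The key observation is that the value $t/v$ depends only on the subtree hanging below $v$, not on $v$'s siblings — so I should instead argue as follows: among all nodes $w$ with $t/w = s$, I want one that is a last child. This is not automatic (e.g. in $f(a,a)$ the subtree $a$ occurs at a non-last child as well, but also at a last child). The cleaner route is to map $s$ to the sibling sequence $\text{sibseq}(w)$ where $w$ is chosen canonically, but I must ensure distinct subtrees go to distinct sibling sequences.

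The robust argument, and the one I would actually write, is this: define $\Phi$ on distinct subtrees by $\Phi(s) = s$ viewed as the singleton sequence in $\mathcal{T}(\Sigma)^*$, and separately show that every distinct subtree of $t$ \emph{does} occur as $\text{sibseq}(w)$ for some node $w$. For the latter, proceed by structural induction on $t$ (or induction on the height of $s$): $s = t$ occurs as $\text{sibseq}(\text{root})$; if $s$ is a proper subtree, then $s$ is a subtree of some $t/v$ where $v$ is a child of the root — and since $v$'s last child $w$ (if $v$ has children) or $v$ itself satisfies appropriate realizability, recursing into the subtree $t/v$ (which is a smaller tree whose root's subtree is itself realized) gives the witness. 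The point is that the singleton sequences $\{s : s \text{ a distinct subtree}\}$ form a subset of the set of all distinct sibling sequences; distinctness of the $s$'s as trees immediately gives distinctness as sequences, so the map is injective. Hence the number of distinct subtrees is at most the number of distinct sibling sequences, which is exactly the claimed inequality $\|\text{dag}(t)\| \le \|\text{bdag}(t)\|$.

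The main obstacle is the realizability claim — showing that every distinct subtree of $t$ genuinely appears as some $\text{sibseq}(w)$. The subtlety is that $\text{sibseq}(w)$ for an interior node $w$ that has a next sibling is a sequence of length $\ge 2$, never equal to a singleton, so one must locate a \emph{last-child} occurrence of each subtree; I expect the induction to route through: for any node $v$, its last child realizes $t/(\text{last child of }v)$, and repeated descent to last children, combined with the fact that every subtree of $t$ sits below the root, covers all cases. I would double-check the edge case where $t$ is a single leaf (excluded elsewhere, but harmless here) and the case of subtrees that only ever occur at non-last-child positions — convincing myself via the root clause and descent that this cannot actually happen for the subtree as a whole, since one can always re-root the search at the shallowest occurrence.
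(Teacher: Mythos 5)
Your proposed injection $\Phi(s)=s$ (the subtree $s$ viewed as a singleton sequence) does not actually land in the set of sibling sequences of $t$, and the ``realizability'' claim you identify as the main obstacle is in fact false. Take $t=f(a,b)$: its sibling sequences are $t$, $ab$, and $b$, so the subtree $a$ never occurs as a sibling sequence of length one --- it occurs only at a non-last-child position, and no amount of ``descent to last children'' or re-rooting will produce a last-child occurrence of $a$. The paper's own example $t=f(a,f(b,a),b,a)$ shows the same failure for the subtrees $f(b,a)$ and $b$. So the step ``distinct subtrees form a subset of the distinct sibling sequences'' is simply wrong, and the induction you sketch to rescue it cannot succeed.

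The correct argument goes in the opposite direction, and it is the one the paper uses: map each distinct sibling sequence $\text{sibseq}(v)$ to the subtree $t/v$, i.e.\ to the first tree in the sequence. This map is well defined on distinct sibling sequences (if $\text{sibseq}(u)=\text{sibseq}(v)$ then in particular their first components agree, so $t/u=t/v$) and it is surjective onto the set of distinct subtrees, since every node $v$ contributes both a sibling sequence and a subtree. A surjection from the $\|\text{bdag}(t)\|$ sibling sequences (Lemma~\ref{lemma-sib-sequ}) onto the $\|\text{dag}(t)\|$ subtrees gives $\|\text{dag}(t)\|\le\|\text{bdag}(t)\|$ immediately, with no need to locate last-child occurrences at all.
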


    \begin{proof}
    The lemma follows from Lemma~\ref{lemma-sib-sequ} and the obvious
    fact that the number of different subtrees of $t$ (i.e., $
\|\text{dag}(t)\|$)
    is at most the
    number of different sibling sequences in $t$: $\text{sibseq}(u) =
    \text{sibseq}(v)$ implies  $t/u = t/v$.
    \qed
    \end{proof}

   \begin{lemma} \label{lemma:comparing_node_size2}
    There exists a family of trees 
    $(t_n)_{n \geq 2}$ such that $
\|\text{dag}(t)\| = 2$ and
    $
\|t_n\| = 
\|\text{bdag}(t)\| = n$.
   \end{lemma}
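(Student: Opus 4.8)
The plan is to exhibit an explicit family of trees witnessing the claimed gap, and the natural candidate is exactly the tree already used in Example~\ref{example-f(a,..a)}, namely $t_n = f(a,a,\ldots,a)$ with $n$ nodes (one $f$-root and $n-1$ leaves labeled $a$), for $n \geq 2$. First I would observe that $\|t_n\| = n$ directly from the definition of node size. For $\|\text{dag}(t_n)\|$, I would note that $t_n$ has exactly two distinct subtrees: the single leaf $a$, and the whole tree $f(a,\ldots,a)$ itself. Hence $\|\text{dag}(t_n)\| = 2$, since by the discussion preceding Lemma~\ref{lemma-sib-sequ} the number of nodes of $\text{dag}(t)$ equals the number of distinct subtrees $t/v$.

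For the remaining equality $\|\text{bdag}(t_n)\| = n$, I would invoke Lemma~\ref{lemma-sib-sequ}, which says the node count of $\text{bdag}(t_n)$ equals the number of distinct sibling sequences $\text{sibseq}(v)$ over nodes $v$ of $t_n$. The root contributes the sibling sequence $t_n$ itself, and for each of the $n-1$ leaves $v$ (say $v$ is the $i$-th child, $1 \le i \le n-1$), we have $\text{sibseq}(v) = a^{\,n-i}$, i.e.\ the word consisting of $n-i$ copies of the leaf $a$. These are pairwise distinct for distinct $i$, and none equals $t_n$ (which is a single tree whose root is labeled $f$, not a sequence of $a$'s of length $\ge 1$; even when $n=2$ the sequence $a$ differs from the tree $f(a)$). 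So the sibling sequences are the $n-1$ words $a^1, a^2, \ldots, a^{n-1}$ together with $t_n$, giving $n$ distinct sibling sequences, hence $\|\text{bdag}(t_n)\| = n$.

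There is essentially no hard part here — this is a verification lemma whose sole purpose is to show that the inequality of Lemma~\ref{lemma:comparing_node_size} is tight and can in fact be arbitrarily far from equality. The only point requiring a moment's care is handling the boundary case $n = 2$ correctly (so that the root's sibling sequence is genuinely distinct from the leaf's), and being precise that "number of nodes of the bdag" counts non-$\Box$ nodes only, so the dummy leaves introduced by the fcns encoding do not inflate the count; both are immediate from the definitions of $\text{sibseq}$ and of node size for binary dags given in Section~\ref{sec:trees_and_dags}. I would therefore present the proof as: take $t_n = f(a^{n-1})$, compute the distinct subtrees (two of them) and the distinct sibling sequences ($n$ of them) explicitly, and cite Lemma~\ref{lemma-sib-sequ}.
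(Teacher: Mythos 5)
Your proposal is correct and uses exactly the same witness family as the paper, namely $t_n=f(a,\ldots,a)$ with $n$ nodes from Example~\ref{example-f(a,..a)}; the paper's proof simply cites that example, while you additionally write out the verification of the two distinct subtrees and the $n$ distinct sibling sequences via Lemma~\ref{lemma-sib-sequ}. No issues.
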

   \begin{proof}
    Take the family of trees $t_n$ from Example~\ref{example-f(a,..a)}.
    \qed
   \end{proof}
Let us remark that the node size of the hdag can be larger than the node size of 
the bdag and the node size of the dag. The reason is that
in $\mathcal{G}^{\text{red}}_{\text{dag}(t)}$, there is a nonterminal for each node
of the dag (and hence the height of each right-hand side is at most one).
This can be done differently of course; it was chosen to simplify proofs and 
because our main goal is the reduction of edge size. 
Note that the total number of edges of $\mathcal{G}^{\text{red}}_{\text{dag}(t)}$
is equal to the number of edges of $\text{dag}(t)$. 

\subsection{The number of edges} \label{sec:number-edges}

We have just seen 
that the number of nodes of the
(minimal) dag is always at most the number of nodes of the bdag,
and that the gap can be maximal ($O(1)$ versus $|t|$). For the number
of edges, the situation is different. We show that
$\frac{1}{2} |\text{bdag}(t)| \le |\text{dag}(t)| \leq
\frac{1}{2} |\text{bdag}(t)|^2$ for $|t| \geq 2$ and that these bounds are sharp up to the constant
factor $1/2$ in the second inequality.
In fact, for $|t| \geq 2$ we show the three inequalities
\begin{eqnarray*}
|\text{hdag}(t)| & \le & \min( |\text{dag}(t)|, |\text{bdag}(t)|), \\
|\text{bdag}(t)|  & \le & 2|\text{hdag}(t)|, \text{ and} \\
|\text{dag}(t)| & \leq & \frac{1}{2} |\text{hdag}(t)|^2
\end{eqnarray*}
which imply 
$$\frac{1}{2} |\text{bdag}(t)| \le |\text{dag}(t)| \leq
\frac{1}{2} |\text{bdag}(t)|^2.
$$
Before we prove these bounds we need some definitions.
Recall that the nodes of $\text{bdag}(t)$ are in 1-1-correspondence
with the different sibling sequences of $t$. In the following, let
$$
\text{sib}(t) = \{\text{sibseq}(v) \mid v \text{ a node of } t\}
$$
be the set of all sibling sequences of $t$.
To count the size (number of edges) of $\text{bdag}(t)$
we have to count for each sibling sequence $w \in \text{sib}(t)$
the number of outgoing edges in $\text{bdag}(t)$. We denote
this number with $e(w)$; it can be computed as follows,
where $w = s_1 s_2 \cdots s_m$ ($m \geq 1$) and the $s_i$ are trees:
\begin{itemize}
\item  
$e(w) = 0$, if $m=1$ and $|s_1|=0$
\item  
$e(w) = 1$, if either $m=1$ and $|s_1| \geq 1$ (then $w$ has
  only a left child) or if $m \geq 2$ and $|s_1|=0$ (then $w$ has only a right child)
\item $e(w) = 2$, otherwise.
\end{itemize}
With this definition we obtain:

\begin{lemma} \label{lemma-counting-edges-bdag}
For every $t \in \mathcal{T}(\Sigma)$, we have
$$
|\text{bdag}(t)| = \sum_{w \in \text{sib}(t)} e(w).
$$
\end{lemma}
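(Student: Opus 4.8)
The plan is to establish a bijection between the edges of $\text{bdag}(t)$ and the contributions counted by $\sum_{w \in \text{sib}(t)} e(w)$, using the fact (Lemma~\ref{lemma-sib-sequ}) that the nodes of $\text{bdag}(t)$ are in one-to-one correspondence with the distinct sibling sequences in $\text{sib}(t)$. Since $|\text{bdag}(t)| = \sum_{v \in V'} |\gamma'(v)|_\Sigma$ where the sum ranges over the nodes $v$ of $\text{bdag}(t)$, it suffices to show that the node of $\text{bdag}(t)$ corresponding to a sibling sequence $w$ has exactly $e(w)$ outgoing edges to non-$\Box$ nodes. First I would recall how $\text{fcns}$ acts on a sibling sequence: for $w = s_1 s_2 \cdots s_m$ with $s_1 = f(u_1,\ldots,u_k)$, we have $\text{fcns}(w) = f(\text{fcns}(u_1 \cdots u_k), \text{fcns}(s_2 \cdots s_m))$, so the root of $\text{fcns}(w)$ has as its left subtree $\text{fcns}$ applied to the children of $s_1$ (another sibling sequence of $t$, or $\varepsilon$ if $k=0$) and as its right subtree $\text{fcns}$ applied to the tail $s_2 \cdots s_m$ (again a sibling sequence of $t$, or $\varepsilon$ if $m=1$).

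The key observation is that taking the minimal dag of $\text{fcns}(t)$ identifies precisely those nodes of the binary tree that are roots of isomorphic binary subtrees, and — since $\text{fcns}$ is injective on $\mathcal{T}(\Sigma)^*$ — two nodes of $\text{fcns}(t)$ are roots of isomorphic subtrees if and only if the corresponding sibling sequences of $t$ coincide. Hence the node of $\text{bdag}(t)$ indexed by $w$ is exactly the shared image of all nodes of $\text{fcns}(t)$ whose subtree is $\text{fcns}(w)$, and its outgoing edges point to the (dag-)nodes representing $\text{fcns}$ of the "first-child sibling sequence" of $w$ (the children of $s_1$) and $\text{fcns}$ of the "next-sibling sibling sequence" (the tail $s_2 \cdots s_m$), with an edge suppressed exactly when the corresponding $\text{fcns}$ value is $\Box$. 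The left edge is absent iff $k = 0$, i.e. $|s_1| = 0$; the right edge is absent iff $m = 1$. This yields exactly the three-case count in the definition of $e(w)$: zero edges when $m=1$ and $|s_1|=0$; one edge when exactly one of these holds; two edges otherwise. Summing over all $w \in \text{sib}(t)$ gives the claim.

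The main thing to be careful about — and the only genuine obstacle — is the clean justification that distinct sibling sequences give non-isomorphic $\text{fcns}$-images (so that no spurious extra sharing occurs and the node/edge bookkeeping is exactly as described), together with the converse that equal sibling sequences do get merged; both follow from bijectivity of $\text{fcns}$, but one must phrase it for \emph{subtrees} of $\text{fcns}(t)$ rather than for $\text{fcns}$ on whole forests. Once this correspondence between nodes of $\text{bdag}(t)$ and sibling sequences, and between their outgoing edges and the left/right children prescribed by $\text{fcns}$, is in place, the edge count is immediate and the lemma follows by summation. I expect the write-up to be short, essentially unwinding the definition of $\text{fcns}$ together with Lemma~\ref{lemma-sib-sequ}.
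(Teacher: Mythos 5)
Your proposal is correct and follows exactly the argument the paper intends: the paper states this lemma without proof, treating it as immediate from Lemma~\ref{lemma-sib-sequ} (nodes of $\text{bdag}(t)$ correspond to distinct sibling sequences) together with the definition of $e(w)$, and your unwinding of the $\text{fcns}$ recursion to match left/right children with the three cases of $e(w)$ is precisely the omitted bookkeeping.
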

The size of the hdag can be computed similarly: 
Consider  the reduced $0$-SLT grammar $\mathcal{G} = \mathcal{G}_{\text{dag}(t)}^{\text{red}}$. Let $N$ be 
the set of nonterminals of $\mathcal{G}$ and let $S$ be the start
nonterminal. Recall that every right-hand
side of $\mathcal{G}$ has the form $f(\alpha_1,\ldots, \alpha_n)$,
where every $\alpha_i$ belongs to $\Sigma \cup N$.
Let $\text{sib}(\mathcal{G})$ be the set of all sibling sequences that occur in 
the right-hand sides of $\mathcal{G}$.
Thus, for every right-hand side $f(\alpha_1,\ldots,\alpha_n)$ of 
$\mathcal{G}$, the sibling sequences 
$f(\alpha_1,\ldots,\alpha_n)$ (a sibling sequence of length $1$)
and $\alpha_i \alpha_{i+1} \cdots \alpha_n$ ($1 \leq i \leq n$) 
belong to $\text{sib}(\mathcal{G})$. For such a sibling sequence
$w$ we define $e(w)$ as above. Here, every $\alpha_i$ is viewed
as a tree with a single nodes, i.e., $|\alpha_i|=0$.
Then we have:

\begin{lemma} \label{lemma-counting-edges-hdag}
For every $t \in \mathcal{T}(\Sigma)$, we have
$$
|\text{hdag}(t)| = \sum_{w \in \text{sib}(\mathcal{G})} e(w).
$$
\end{lemma}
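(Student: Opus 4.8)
The plan is to establish a bijective correspondence between the edges of $\text{hdag}(t)$ and the sibling sequences in $\text{sib}(\mathcal{G})$ that have at least one outgoing edge, weighted appropriately. Recall from the construction that $\text{hdag}(t) = \text{dag}(\text{fcns}(t_1'),\dots,\text{fcns}(t_n'))$, where the $t_i'$ are the height-1 right-hand sides of $\mathcal{G} = \mathcal{G}_{\text{dag}(t)}^{\text{red}}$ with their roots additionally labeled by the corresponding nonterminal. First I would recall (as was already noted in Lemma~\ref{lemma-sib-sequ} and its surrounding discussion) that the nodes of $\text{dag}(\text{fcns}(t_1'),\dots,\text{fcns}(t_n'))$ are in bijection with the distinct sibling sequences occurring among the forests $\text{fcns}^{-1}$ of these binary trees; but since each $t_i'$ has height $1$ and the form $f_i(\alpha_{i,1},\dots,\alpha_{i,k_i})$, the sibling sequences arising are exactly the length-1 sequence $f_i(\alpha_{i,1},\dots,\alpha_{i,k_i})$ (the root) together with the suffixes $\alpha_{i,j}\alpha_{i,j+1}\cdots\alpha_{i,k_i}$ — which is precisely the set $\text{sib}(\mathcal{G})$.

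Next I would count edges. In a minimal dag, the number of edges equals the sum over all nodes of the out-degree of that node (in the underlying graph, after sharing). Since distinct nodes of $\text{hdag}(t)$ correspond to distinct elements $w \in \text{sib}(\mathcal{G})$, it suffices to show that the out-degree (counting only edges to non-$\Box$ nodes) of the node corresponding to $w$ is exactly $e(w)$. This is a direct case analysis mirroring the three bullets defining $e(w)$: if $w$ is a single symbol $\alpha$ viewed as a tree with $|\alpha|=0$, then $\text{fcns}$ of it is $\alpha(\Box,\Box)$, which has $e(w)=0$ non-$\Box$ children; if $w$ is a single right-hand side root $f(\alpha_1,\dots,\alpha_n)$ with $n\ge 1$, then its fcns image $f(\text{fcns}(\alpha_1\cdots\alpha_n),\Box)$ has exactly one non-$\Box$ child (the left one), matching $e(w)=1$; if $w = \alpha_i\cdots\alpha_n$ with at least two entries, then its fcns image is $\alpha_i(\Box,\text{fcns}(\alpha_{i+1}\cdots\alpha_n))$ which has one non-$\Box$ child (the right one), again $e(w)=1$; the $e(w)=2$ case does not actually arise here because every $\alpha_i$ has $|\alpha_i|=0$, so the discussion before the lemma is slightly more general than needed — I would remark on this. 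Summing, $|\text{hdag}(t)| = \sum_{w \in \text{sib}(\mathcal{G})} e(w)$.

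The one subtle point I would be careful about — and which I expect to be the main obstacle, though it is more bookkeeping than genuine difficulty — is verifying that the sharing in the second dag-construction step does \emph{not} collapse two distinct elements of $\text{sib}(\mathcal{G})$ into one node, nor leave any out. Distinctness of nodes follows because two sibling sequences $w, w'$ yield isomorphic subtrees of the fcns-forest if and only if $w = w'$ as sequences of (nonterminal or leaf) symbols, and the additional root labels $A_i$ guarantee that the length-1 sequences $t_i'$ remain pairwise distinct and distinct from any proper suffix. Completeness follows because every node reachable in one of the $\text{fcns}(t_i')$ is either the root (giving the length-1 sequence) or lies on the ``right spine'' started at some $\alpha_{i,j}$ or in the single-node left subtree hanging off it, and in all cases $\text{fcns}^{-1}$ of the subtree rooted there is a sequence in $\text{sib}(\mathcal{G})$. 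Once this correspondence is pinned down, the edge count is immediate from the case analysis above. I would phrase the whole argument as a short parallel to the proof of Lemma~\ref{lemma-counting-edges-bdag}, since the two are structurally identical once one observes that the role played there by $\text{sib}(t)$ is played here by $\text{sib}(\mathcal{G})$.
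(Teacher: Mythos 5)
Your proof is correct and follows exactly the argument the paper has in mind: the paper states this lemma without proof, treating it (like Lemma~\ref{lemma-counting-edges-bdag}) as immediate from the fact that the nodes of the minimal dag of the fcns-encoded right-hand sides correspond bijectively to the distinct sibling sequences in $\text{sib}(\mathcal{G})$, each contributing exactly $e(w)$ outgoing non-$\Box$ edges. Your explicit verification that no two sequences collapse to the same node, that none is omitted, and that the case $e(w)=2$ never arises (since every $\alpha_i$ has size $0$) is accurate and simply spells out what the paper leaves implicit.
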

For $w = s_1\cdots s_m \in \text{sib}(t)$ let $\tilde{w}$ be the string that 
results from $w$ by replacing every non-singleton tree $s_i
\not\in\Sigma$ by the unique nonterminal of $\mathcal{G}$ that
derives to $s_i$. Actually, we should write $\tilde{w}_t$ instead of 
$\tilde{w}$, since the latter also depends on the tree $t$. But the
tree $t$ will be always clear from the context.
Here are a few simple statements:
\begin{itemize}
\item
For every $w \in \text{sib}(t)$, the sibling sequence $\tilde{w}$ 
belongs to $\text{sib}(\mathcal{G})$, except for the length-1 sequence $\tilde{w} = S$ that is obtained from  
the length-1 sequence $w = t \in \text{sib}(t)$. 
\item For every $w \in \text{sib}(t)$, $\tilde{w}$ is a word over $N \cup \Sigma$.
\item For every $w \in \text{sib}(t)$, $e(\tilde{w}) \leq e(w)$.
\item The mapping $w \mapsto \tilde{w}$ is an injective
mapping from $\text{sib}(t)  \setminus \{t\}$ to $\text{sib}(\mathcal{G})$.
\end{itemize}
Using this mapping, 
the sums in Lemma~\ref{lemma-counting-edges-bdag} and
\ref{lemma-counting-edges-hdag} can be related
as follows:

\begin{lemma} \label{lemma-relating-sums}
For every $t \in \mathcal{T}(\Sigma)$, we have
$$
\text{hdag}(t) \ =  \sum_{w \in \text{sib}(\mathcal{G})} \!\!\!\! e(w)
\ = \ |N| + \sum_{w \in \text{sib}(t)} e(\tilde{w}) .
$$
\end{lemma}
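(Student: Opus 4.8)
The plan is to decompose the sum $\sum_{w \in \text{sib}(\mathcal{G})} e(w)$ according to the structure of $\text{sib}(\mathcal{G})$ described just before the lemma. Recall that $\text{sib}(\mathcal{G})$ consists of exactly two kinds of sibling sequences: for each nonterminal $A \in N$ with right-hand side $f(\alpha_1,\ldots,\alpha_n)$, the length-$1$ sequence $f(\alpha_1,\ldots,\alpha_n)$ itself (one such sequence per nonterminal), and the proper suffixes $\alpha_i \cdots \alpha_n$ for $1 \le i \le n$. So I would split $\sum_{w \in \text{sib}(\mathcal{G})} e(w)$ into a sum over the $|N|$ length-$1$ right-hand-side sequences plus a sum over the suffix-type sequences.

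For the first piece, note that each length-$1$ sequence $f(\alpha_1,\ldots,\alpha_n)$ is a single tree of height $1$, hence has $|s_1| = n \ge 1$ (we may assume each right-hand side is nonempty because we excluded single-node dags and $\mathcal{G}$ is reduced), so $e$ of it equals $1$ by the middle case of the definition of $e$. Summing over all $|N|$ nonterminals gives exactly $|N|$. For the second piece, I would invoke the injection $w \mapsto \tilde w$ from $\text{sib}(t) \setminus \{t\}$ into $\text{sib}(\mathcal{G})$ established in the bullet points preceding the lemma, together with the complementary facts that this injection's image is precisely the set of suffix-type sequences of $\mathcal{G}$ and that $e(\tilde w)$ matches the value of $e$ computed in $\mathcal{G}$ (since replacing a nonsingleton subtree by a nonterminal does not change whether the first entry has size $0$ or the sequence has length $\ge 2$ — wait, it can change $|s_1|$ from positive to zero, which is exactly why we only have $e(\tilde w) \le e(w)$ in general, but here we want $e$ evaluated in $\mathcal{G}$, which is by definition $e(\tilde w)$). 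Thus the suffix-type contribution equals $\sum_{w \in \text{sib}(t) \setminus \{t\}} e(\tilde w)$, and since the length-$1$ sequence $w = t$ maps to $\tilde w = S$ which is a length-$1$ sequence over $N$ (so $e(\tilde w) = 1$) that is \emph{not} a right-hand-side sequence and not counted, I would add and subtract this to write $\sum_{w \in \text{sib}(t) \setminus \{t\}} e(\tilde w) = \sum_{w \in \text{sib}(t)} e(\tilde w) - e(S) = \sum_{w \in \text{sib}(t)} e(\tilde w) - 1$; one must then check whether the bookkeeping of this $\pm 1$ closes correctly against the count $|N|$, which is where I expect the one genuine subtlety to lie.

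The main obstacle is precisely this careful accounting of which sibling sequences of $\mathcal{G}$ are hit by the map $w \mapsto \tilde w$ and which are not: one must verify that $\text{sib}(\mathcal{G})$ is the disjoint union of (a) the $|N|$ length-$1$ right-hand sides $f(\alpha_1,\ldots,\alpha_n)$, and (b) the image of $\text{sib}(t)\setminus\{t\}$ under $w \mapsto \tilde w$, with no double counting and no omission. The potential trap is a sibling-sequence suffix of $\mathcal{G}$ that is simultaneously of the form $\alpha_i \cdots \alpha_n$ with $i = 1$, i.e. coincides (as a string over $N \cup \Sigma$) with a full child sequence — but as a \emph{sibling sequence} it is still distinguished from the length-$1$ tree sequence $f(\alpha_1,\ldots,\alpha_n)$, so there is no clash once one is careful about the $f$-labelled root. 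Once this structural partition is pinned down, the lemma's equality drops out by substituting the two evaluations and collecting terms; the equality $\text{hdag}(t) = \sum_{w \in \text{sib}(\mathcal{G})} e(w)$ is just Lemma~\ref{lemma-counting-edges-hdag} restated (with $|\text{hdag}(t)|$ abbreviated $\text{hdag}(t)$ as in the statement), so nothing new is needed there.
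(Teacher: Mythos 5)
Your decomposition of $\text{sib}(\mathcal{G})$ is exactly the one the paper uses: the $|N|$ length-$1$ sequences consisting of whole right-hand sides $f(\alpha_1,\ldots,\alpha_n)$, each contributing $e=1$, plus the image of $\text{sib}(t)\setminus\{t\}$ under the injection $w\mapsto\tilde{w}$. The partition is correct and your remark about why a suffix $\alpha_1\cdots\alpha_n$ cannot clash with the tree $f(\alpha_1,\ldots,\alpha_n)$ is fine. However, there is a genuine error in the final bookkeeping step, precisely at the point you flag as unresolved: you assert $e(S)=1$ for the length-$1$ sequence $\tilde{t}=S$. By the definition of $e$, a sequence $w=s_1$ of length $m=1$ with $|s_1|=0$ has $e(w)=0$, and the single nonterminal $S$ is viewed as a tree with one node, i.e.\ $|S|=0$; the case $e=1$ for $m=1$ requires $|s_1|\geq 1$. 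Hence $e(S)=0$, not $1$. With your value the computation yields $|N|+\sum_{w\in\text{sib}(t)}e(\tilde{w})-1$, which contradicts the lemma, and your proposal leaves it at ``one must then check whether the bookkeeping of this $\pm 1$ closes correctly.'' It does close, but only because $e(S)=0$: adding the term for $w=t$ to the sum over $\text{sib}(t)\setminus\{t\}$ costs nothing, so
$$
\sum_{w\in\text{sib}(t)\setminus\{t\}}e(\tilde{w})=\sum_{w\in\text{sib}(t)}e(\tilde{w}),
$$
and no correction against $|N|$ is needed. This is exactly how the paper concludes (``the last equality follows from $e(\tilde{t})=e(S)=0$''). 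So the route is the same as the paper's; the one substantive step you left open is resolved by correcting $e(S)$ from $1$ to $0$.
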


\begin{proof}
By Lemma~\ref{lemma-counting-edges-hdag} it remains to show
the second equality.
The only  sibling sequences in
$\text{sib}(\mathcal{G})$ that are not of the form $\tilde{w}$ for $w \in \text{sib}(t)$ are the sequences (of length 1) 
that consist of the whole right-hand side $f(\alpha_1,\ldots,\alpha_m)$
of a nonterminal $A \in N$.
For such a sibling sequence $u$ we have $e(u) = 1$ (since it has length $1$ and 
$f(\alpha_1,\ldots,\alpha_m)$ is not a single symbol).
Hence, we have 
\begin{eqnarray*}
\sum_{w \in \text{sib}(\mathcal{G})} \!\!\!\!  e(w) & = &  
|N| + \!\!  \sum_{w \in \text{sib}(t) \setminus \{t\}} \!\!\!\!  e(\tilde{w})
\\
& = & |N| + \sum_{w \in \text{sib}(t)} e(\tilde{w}) ,
\end{eqnarray*}
where the last equality follows from $e(\tilde{t}) = e(S) = 0$.
\qed
\end{proof}

\begin{theorem} \label{hdag-kleiner-min}
For every $t \in \mathcal{T}(\Sigma)$, we have
$$
|\text{hdag}(t)| \le \min( |\text{dag}(t)|, |\text{bdag}(t)|).
$$
\end{theorem}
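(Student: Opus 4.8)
The plan is to prove the two inequalities $|\text{hdag}(t)| \le |\text{bdag}(t)|$ and $|\text{hdag}(t)| \le |\text{dag}(t)|$ separately, using the counting machinery already established in Lemmas~\ref{lemma-counting-edges-bdag}--\ref{lemma-relating-sums}.

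For the bound against the bdag, I would start from Lemma~\ref{lemma-relating-sums}, which gives $|\text{hdag}(t)| = |N| + \sum_{w \in \text{sib}(t)} e(\tilde{w})$, and compare this termwise with $|\text{bdag}(t)| = \sum_{w \in \text{sib}(t)} e(w)$ from Lemma~\ref{lemma-counting-edges-bdag}. Since $e(\tilde w) \le e(w)$ for every $w$ (one of the ``simple statements'' listed before Lemma~\ref{lemma-relating-sums}), the sums satisfy $\sum_w e(\tilde w) \le \sum_w e(w)$, so it remains to absorb the extra additive term $|N|$. The idea is that each nonterminal $A \in N \setminus \{S\}$ corresponds to a repeated subtree of $t$, i.e.\ to a node of $\text{dag}(t)$ with in-degree at least $2$; such an $A$ occurs at least twice among the symbols of the words $\tilde w$, $w \in \text{sib}(t)$, and each such occurrence at a non-final position of a sibling sequence forces $e(\tilde w) < e(w)$ (the replacement of a non-singleton tree $s_i$ by a single symbol can only lower the count, and when $s_i$ is not the last element it genuinely drops the ``left child'' contribution). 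Making this precise — showing that the total ``savings'' $\sum_w (e(w) - e(\tilde w))$ is at least $|N|$ — is where the real work lies, and this is the step I expect to be the main obstacle: one must carefully account for how many sibling sequences each repeated subtree sits inside and at which positions, and handle the start nonterminal $S$ (whose corresponding sequence contributes $e(\tilde t) = e(S) = 0$, which is smaller than $e(t)$, giving a bit of slack) as a special case.

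For the bound against the dag, I would again use Lemma~\ref{lemma-relating-sums} and compare $|\text{hdag}(t)| = |N| + \sum_{w \in \text{sib}(t)} e(\tilde{w})$ with $|\text{dag}(t)|$, which equals the total number of edges of $\mathcal{G}^{\text{red}}_{\text{dag}(t)}$, i.e.\ $\sum_{A \in N} |\rho(A)|$ where $|\rho(A)|$ is the number of children of the root of the right-hand side of $A$. The key combinatorial fact is that for a fixed right-hand side $f(\alpha_1,\dots,\alpha_n)$, the fcns encoding turns it into a binary tree whose number of non-$\Box$ edges is at most $n+1$ when $n \ge 1$ — more precisely, a path of $n$ right-edges plus the single left-edge into $\alpha_1$ — but in $\text{hdag}(t)$ these binary encodings are further shared via the second dag construction, so the total can only go down. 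The cleanest route is probably to show that the hdag, before the final sharing step, already has size at most $\sum_{A \in N}(|\rho(A)| + 1) - 1 = |\text{dag}(t)| + |N| - 1$ (the $-1$ because the length-$1$ sibling sequence $S$ contributes nothing), and that the second dag construction is size-nonincreasing; comparing with $|N| + \sum_w e(\tilde w)$ and noting $\sum_w e(\tilde w)$ counts essentially the same edges. I would organize this as: the multiset of sibling sequences $\{\alpha_i \cdots \alpha_n\}$ arising from all right-hand sides has, before sharing, a total $e$-weight of exactly $\sum_A (|\rho(A)|+1)$ minus a correction, and sharing (passing to the set $\text{sib}(\mathcal G)$) only removes duplicates.

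Finally I would assemble the two inequalities into the single statement $|\text{hdag}(t)| \le \min(|\text{dag}(t)|, |\text{bdag}(t)|)$, remarking that the restriction $|t| \ge 2$ (equivalently, excluding single-node dags, as the paper already stipulates) is what guarantees $\mathcal{G}^{\text{red}}_{\text{dag}(t)}$ and all the sums above are well-defined. The bdag inequality is the more delicate of the two because of the additive $|N|$ term that must be paid for by distributed savings across many sibling sequences; the dag inequality, by contrast, is mostly a matter of bookkeeping once one observes that fcns encoding of a height-$1$ tree with $n$ children costs $n+1$ and that the subsequent dag step is monotone.
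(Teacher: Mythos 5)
Your framework for the bdag bound is the same as the paper's (Lemmas~\ref{lemma-counting-edges-bdag} and~\ref{lemma-relating-sums}, followed by absorbing the additive $|N|$), but the step you flag as ``the main obstacle'' is precisely the step you have not carried out, and the idea you sketch for it rests on two incorrect premises. First, the nonterminals of $\mathcal{G}^{\text{red}}_{\text{dag}(t)}$ are in bijection with \emph{all} distinct non-leaf subtrees of $t$, not only the repeated ones, so a nonterminal need not occur twice among the symbols of the words $\tilde w$. Second, for $w=s_1\cdots s_m$ the quantity $e(w)$ depends only on whether $m\ge 2$ and on whether $|s_1|\ge 1$; replacing a non-singleton $s_i$ with $i\ge 2$ by a single symbol changes nothing, so occurrences of a nonterminal ``at a non-final position'' yield no savings at all. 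The entire saving comes from the \emph{first} position: for each $A\in N$ there is some $w\in\text{sib}(t)$ such that $\tilde w$ starts with $A$ (take $w=\text{sibseq}(v)$ for any node $v$ whose subtree is the one represented by $A$), and for that $w$ one has $e(w)=e(\tilde w)+1$ because $w$ then starts with a tree of size at least $1$. Distinct nonterminals give distinct such sequences $w_A$ (the first letter of $\tilde w_A$ identifies $A$), so these $|N|$ sequences each save one edge, which pays exactly for the $+|N|$. This is the paper's argument, and it is short once stated; your ``distributed savings'' accounting is not needed and, as sketched, would not go through.

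For the dag bound your bookkeeping is off: the fcns encoding of a height-$1$ right-hand side with $n\ge 1$ children has exactly $n$ edges to non-$\Box$ nodes (one left edge into $\alpha_1$ plus a chain of $n-1$ right edges), not $n+1$. With the correct count, the forest of encoded right-hand sides has total size exactly $|\text{dag}(t)|$ before the second dag construction, and since taking the minimal dag of a forest never increases the number of edges, $|\text{hdag}(t)|\le|\text{dag}(t)|$ follows at once --- which is all the paper says here. Your intermediate bound $|\text{dag}(t)|+|N|-1$ is too weak to conclude, and the suggestion that the final sharing step recovers the missing $|N|-1$ edges is unjustified (and false in general, since the second dag construction may share nothing).
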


\begin{proof}
Since $\text{hdag}(t)$ is obtained from $\text{dag}(t)$ by sharing
repeated suffixes of child sequences, we immediately get
$|\text{hdag}(t)| \le |\text{dag}(t)|$. 
It remains to show $|\text{hdag}(t)| \leq |\text{bdag}(t)|$.
By Lemma~\ref{lemma-counting-edges-bdag} and \ref{lemma-relating-sums}
we have to show
$$|N|+\sum_{w \in \text{sib}(t)} e(\tilde{w}) \ \leq \ \sum_{w \in \text{sib}(t)} e(w),$$
where $N$ is the set of nonterminals of $\mathcal{G}_{\text{dag}(t)}^{\text{red}}$.
To see this, note that: 
\begin{itemize}
\item $e(\tilde{w}) \leq e(w)$ for all $w \in \text{sib}(t)$ and 
\item for every nonterminal $A \in N$ there must 
exist a sibling sequence $w \in \text{sib}(t)$ such that
$\tilde{w}$ starts with $A$. For this sequence we have $e(w) = e(\tilde{w})+1$
(note that the right-hand side of $A$ does not belong to $\Sigma$,
hence $w$ starts with a tree of size at least 1).
\end{itemize}
Choose for every $A \in N$
a sibling sequence $w_A \in \text{sib}(t)$ such that
$\tilde{w}_A$ starts with $A$. Let 
$R = \text{sib}(t) \setminus  \{w_A \mid  A \in N\}$.
We get
\begin{eqnarray*}
|N|+\sum_{w \in \text{sib}(t)} e(\tilde{w}) & = & 
|N|+ \sum_{A \in N} e(\tilde{w}_A) + \sum_{w \in R} e(\tilde{w}) \\
& = & \sum_{A \in N} (e(\tilde{w}_A)+1) + \sum_{w \in R} e(\tilde{w}) \\
& \leq &  \sum_{A \in N} e(w_A) + \sum_{w \in R} e(w)\\
& = & \sum_{w \in \text{sib}(t)} e(w).
\end{eqnarray*}
This proves the theorem.
\qed
\end{proof}

\begin{theorem}\label{lemma:root_lemma}
For every $t \in \mathcal{T}(\Sigma)$ with $|t| \geq 2$, we have
$$
|\text{dag}(t)| \leq \frac{1}{2} |\text{hdag}(t)|^2.
$$
\end{theorem}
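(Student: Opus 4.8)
The plan is to reduce the statement to a purely combinatorial inequality about the \emph{child sequences} of $\text{dag}(t)$. Write $\mathcal{G}=\mathcal{G}_{\text{dag}(t)}^{\text{red}}$, let $N$ be its set of nonterminals with $n=|N|$, and let $c_1,\dots,c_n$ be the child sequences, with lengths $k_1,\dots,k_n\ge 1$ (every right-hand side of $\mathcal{G}$ has height exactly $1$, so $k_i\ge 1$). First I would record two size formulas. Since the node with rule $A_i\to f_i(\alpha_{i,1},\dots,\alpha_{i,k_i})$ has out-degree $k_i$ in $\text{dag}(t)$, we have $|\text{dag}(t)|=\sum_{i=1}^n k_i$. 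For the hdag I would apply Lemma~\ref{lemma-counting-edges-hdag}: among the sibling sequences in $\text{sib}(\mathcal{G})$, each whole right-hand side $f_i(\alpha_{i,1},\dots,\alpha_{i,k_i})$ contributes $e=1$; each suffix $\alpha_{i,j}\cdots\alpha_{i,k_i}$ of length $\ge 2$ contributes $e=1$ (it has length $\ge 2$ and its first entry has size $0$); and each single-symbol suffix contributes $e=0$. Grouping equal sibling sequences, this yields
$$
|\text{hdag}(t)| = n + m ,
$$
where $m$ is the number of \emph{distinct} suffixes of length $\ge 2$ occurring among $c_1,\dots,c_n$.

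The combinatorial heart of the argument is that no single child sequence can be much longer than $m$: the suffixes of $c_i$ of length $\ge 2$ are precisely the $k_i-1$ words $\alpha_{i,j}\cdots\alpha_{i,k_i}$ with $1\le j\le k_i-1$, they are pairwise distinct (pairwise distinct lengths), and each of them is one of the $m$ sequences counted above. Hence $k_i-1\le m$, i.e.\ $k_i\le m+1$ for all $i$, and summing over $i$,
$$
|\text{dag}(t)| = \sum_{i=1}^n k_i \le n(m+1) = nm+n .
$$
It then suffices to prove the arithmetic inequality $nm+n\le \frac{1}{2}(n+m)^2=\frac{1}{2}|\text{hdag}(t)|^2$, which after multiplying by $2$ is equivalent to $2n\le n^2+m^2$.

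This final inequality is exactly where the hypothesis $|t|\ge 2$ enters, via a short case distinction. If $n\ge 2$ it holds because $n^2\ge 2n$. If $n=1$, then $\mathcal{G}$ consists of a single rule $S\to f(\alpha_1,\dots,\alpha_{k_1})$ with all $\alpha_j\in\Sigma$ (acyclicity forbids $S$ in its own right-hand side), so $t=f(\alpha_1,\dots,\alpha_{k_1})$ and $|t|=k_1\ge 2$; then $c_1$ has $k_1-1\ge 1$ suffixes of length $\ge 2$, whence $m\ge 1$ and $n^2+m^2\ge 2=2n$. I do not expect a serious obstacle; the step most in need of care is the bookkeeping behind the identity $|\text{hdag}(t)| = n+m$ — checking that length-$1$ suffixes contribute nothing and that no whole right-hand side coincides with a suffix sequence, both of which follow from every right-hand side of $\mathcal{G}$ having height exactly $1$ — together with the degenerate case $n=1$; the remaining estimates are elementary.
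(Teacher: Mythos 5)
Your proposal is correct and follows essentially the same route as the paper's proof: both express $|\text{dag}(t)|=\sum_i k_i$ over the child sequences of $\mathcal{G}_{\text{dag}(t)}^{\text{red}}$, identify $|\text{hdag}(t)|$ as (number of rules) plus (number of shared suffix edges) — your $m$ is exactly the paper's $|d|$, the edge count of the dag of the right-pointer chains — bound each $k_i$ by $m+1$, and finish with elementary arithmetic plus a degenerate case. The only differences are cosmetic: you derive $|\text{hdag}(t)|=n+m$ exactly via Lemma~\ref{lemma-counting-edges-hdag} and split on $n=1$ versus $n\ge 2$, whereas the paper uses the lower bound $|d|\ge n_k-1$ and splits on whether the longest child sequence has length $1$.
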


\begin{proof}
 Let $f_i(\alpha_{i,1}, \ldots, \alpha_{i,n_i})$ for $1 \leq i \leq k$ be the 
 right-hand sides of $\mathcal{G}_{\text{dag}(t)}^{\text{red}}$.
 W.l.o.g. assume that $1 \leq n_1 \leq n_2 \leq \cdots \leq n_k$.
 Every $\alpha_{i,j}$ is either from $\Sigma$ or a nonterminal.
 Moreover, all the trees $f_i(\alpha_{i,1}, \ldots, \alpha_{i,n_i})$ are
 pairwise different. We have
 $|\text{dag}(t)| = \sum_{i=1}^k n_i$.
 
 If $n_k = 1$, then $t$ is a linear chain. In this case, we get
 $$
 |\text{dag}(t)| = |t| \leq \frac{1}{2} |t|^2 =  \frac{1}{2}  |\text{hdag}(t)|^2
 $$
 since $|t| \geq 2$. Let us now assume that $n_k \geq 2$.
 Recall that we compute $\text{hdag}(t)$ by taking
 the minimal dag of the forest consisting
 of the binary encodings of the trees $f_i(\alpha_{i,1}, \ldots, \alpha_{i,n_i})$.
 The binary encoding of $f_i(\alpha_{i,1}, \ldots, \alpha_{i,n_i})$ has
 the form $f_i(t_i,\Box)$, where $t_i$ is a chain of $n_i-1$ many right pointers.
 Let $d$ be the minimal dag of the forest consisting of all chains
 $t_i$. Since all the trees $f_i(\alpha_{i,1}, \ldots, \alpha_{i,n_i})$
 are pairwise distinct, we have 
 $|\text{hdag}(t)| = k + |d|$.
 Since the chain $t_i$ consists of $n_i$ many nodes, we have
 $|d| \geq \max\{ n_i \mid 1 \leq i \leq k\}-1 =  n_k -1$.
 Hence, we have to show that
 $\sum_{i=1}^k n_i \leq \frac{1}{2} (k + n_k - 1)^2$. We have
 $$
 \sum_{i=1}^k n_i  \leq  k \cdot n_k 
   \leq  (k-1)  n_k + \frac{1}{2} n_k^2 =
   \frac{1}{2} (2 (k-1)  n_k + n_k^2) 
   \leq   \frac{1}{2} (k-1 + n_k)^2,
$$  
which concludes the proof. For the second inequality note that
$n_k \leq \frac{1}{2} n_k^2$, since $n_k \geq 2$.
\qed
\end{proof}
Consider the tree $s_n$ from Figure~\ref{fig:tn2}.
We have $|\text{dag}(s_n)| = |s_n| = n^2$
and $|\text{hdag}(s_n)| = |\text{bdag}(s_n)| = 3n-2$. 
Hence, we get
$$
|\text{dag}(s_n)| = n^2 > \frac{1}{9} (3n-2)^2 = \frac{1}{9}  |\text{hdag}(s_n)|^2 .
$$
This shows that
up to a constant factor, the bound in Theorem~\ref{lemma:root_lemma}
is sharp. The constant $1/9$ can be slightly improved:

\begin{theorem}\label{lemma:1/6}
There is a family of trees $(s_n)_{n \geq 1}$ such that
 $$|\mbox{dag}(s_n)| > \frac{1}{6}  |\text{hdag}(s_n)|^2.
 $$
\end{theorem}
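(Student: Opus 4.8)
The plan is to improve on the tree $s_n$ of Figure~\ref{fig:tn2} by a more careful choice of family. The guiding idea is to combine two features: (i) a ``transitive-tournament'' shape for the dag, whose fcns-encoded child sequences are nested suffixes of one another, so that the second sharing phase of the hdag construction collapses all of them into a single chain; and (ii) padding of every right-hand side by a block of $n$ identical leaves, which pushes $|\text{dag}(s_n)|$ up to about $\tfrac32 n^2$ while leaving $|\text{hdag}(s_n)|$ at about $3n$. Concretely, fix a leaf label $c$ and pairwise distinct internal labels $f_1,\dots,f_n$, and let $s_n$ be the (unranked) tree whose minimal dag, written as its reduced $0$-SLT grammar, has the rules
\[
A_j \ \to\ f_j\bigl(A_{j+1},\,A_{j+2},\,\dots,\,A_n,\,\underbrace{c,\dots,c}_{n}\bigr)\qquad(1\le j\le n)
\]
(so $A_n\to f_n(c,\dots,c)$), with $A_1$ the start nonterminal. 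This grammar is acyclic and all nonterminals are reachable, so it describes a genuine tree; and since the $f_j$ are pairwise distinct, the trees derived by $A_1,\dots,A_n$ are pairwise different, so the only repeated subtree of $s_n$ is $c$ and this grammar is indeed $\mathcal G^{\text{red}}_{\text{dag}(s_n)}$. Note that $A_j$ has $n_j=2n-j$ children.

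The size of the dag is then immediate: $|\text{dag}(s_n)|=\sum_{j=1}^n n_j=\sum_{j=1}^n(2n-j)=\tfrac{n(3n-1)}{2}$. For the hdag I would reuse the bookkeeping from the proof of Theorem~\ref{lemma:root_lemma}. The binary encoding of $\rho(A_j)$ has the form $f_j(c_j,\Box)$ (with the extra root label $A_j$), where $c_j$ is the chain $A_{j+1}\!\to\!\cdots\!\to\! A_n\!\to\! c\!\to\!\cdots\!\to\! c$ on $n_j$ nodes. Since the child sequence of $A_{j+1}$ arises from that of $A_j$ by deleting the first symbol, $c_{j+1}$ is a suffix of $c_j$, hence $c_1\supseteq c_2\supseteq\cdots\supseteq c_n$; therefore the minimal dag $d$ of the forest $\{c_1,\dots,c_n\}$ is just the chain $c_1$ (the labels $A_2,\dots,A_n$ being pairwise distinct), which has $2n-1$ nodes and thus $2n-2$ edges. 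Because the $n$ roots carry the pairwise distinct additional labels $A_1,\dots,A_n$, they cannot be shared, so $|\text{hdag}(s_n)|=n+|d|=n+(2n-2)=3n-2$.

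It remains to check $|\text{dag}(s_n)|>\tfrac16|\text{hdag}(s_n)|^2$, i.e.\ $\tfrac{n(3n-1)}{2}>\tfrac16(3n-2)^2$; after multiplying by $6$ this reads $3n(3n-1)>(3n-2)^2$, that is $9n-4>0$, which holds for every $n\ge1$ (for $n=1$ the family degenerates to $s_1=f_1(c)$, with $|\text{dag}(s_1)|=|\text{hdag}(s_1)|=1$, and the bound still holds).

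The only step that is not essentially mechanical is discovering this family; everything else is elementary arithmetic together with the identity $|\text{hdag}(t)|=k+|d|$ already isolated in the proof of Theorem~\ref{lemma:root_lemma}. The shape of the example is dictated by a small optimization: one wants the chains $c_j$ to be nested (so that $d$ is a single chain of length $M=\max_j n_j$) and the numbers $n_j$ as large as possible; but nestedness forces the $n_j$ to be pairwise distinct, so the best one can do is $n_j\in\{M,M-1,\dots\}$, and balancing $M$ against the number $k$ of nonterminals amounts to maximizing $\bigl(kM-\binom{k}{2}\bigr)/(k+M-1)^2$, whose maximum is at $M\approx 2k$ and yields the limiting ratio $\tfrac16$. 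Since the general worst-case constant is $\tfrac14$, this makes it plausible that $\tfrac16$ is close to the best obtainable from constructions of this kind, though I would not try to prove optimality.
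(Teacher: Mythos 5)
Your construction is essentially the paper's own: the paper's proof uses the same family, specified by the rules $A_i \to f(A_{i+1},\dots,A_n,a,\dots,a)$ with $n$ padding leaves for $0\le i\le n$ (so one more nonterminal than your version), computes $|\text{dag}(s_n)|=\tfrac{3n(n+1)}{2}$ and $|\text{hdag}(s_n)|=3n$ by the same chain-of-nested-suffixes argument, and concludes by the same arithmetic. Your variant and bookkeeping are correct; the index shift is immaterial.
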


\begin{proof}
We specify $s_n$ by the reduced $0$-SLT grammar $\mathcal{G}_{\text{dag}(s_n)}^{\text{red}}$. Let 
$\mathcal{G}_{\text{dag}(s_n)}^{\text{red}}$ contain the following 
productions for $0 \leq i \leq n$:
$$
A_i \to f(A_{i+1}, \ldots, A_n, \underbrace{a,\ldots,a}_{n \text{ many}}) .
$$
This is indeed the grammar obtained from the 
minimal dag for a tree $s_n$ (of size exponential in $n$).
We have 
$$
|\text{dag}(s_n)|  = \sum_{i=n}^{2n} i = n (n+1) +  \sum_{i=0}^{n} i = n (n+1) + \frac{n (n+1)}{2}   =   \frac{3 n (n+1)}{2}.
$$
The hybrid dag of $s_n$ consists of the child sequence $A_1 A_2 \cdots A_n a^n$ together with $n+1$ many left pointers
into this sequence. Hence, we have
$$
|\text{hdag}(s_n)|  = 2n-1 + n+1 = 3n .
$$
We obtain
$$
\frac{1}{6}  |\text{hdag}(s_n)|^2 = \frac{1}{6}  9 n^2 = \frac{3}{2} n^2 <   \frac{3 n (n+1)}{2} = |\text{dag}(s_n)| .
$$
This proves the theorem.
\qed
\end{proof}

Next let us bound $|\text{bdag}(t)|$ in terms of 
$|\text{hdag}(t)|$:

\begin{theorem}\label{lemma:half_lemma}
For every $t \in \mathcal{T}(\Sigma)$, we have
$$
|\text{bdag}(t)| + n \leq 2 |\text{hdag}(t)|,
$$
where $n$ is the number of non-leaf nodes of $\text{dag}(t)$.
\end{theorem}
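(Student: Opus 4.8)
The plan is to express both sides through the edge-counting formulas already available and then reduce the statement to a short combinatorial inequality about sibling sequences. Write $\mathcal{G} = \mathcal{G}_{\text{dag}(t)}^{\text{red}}$ and let $N$ be its set of nonterminals; since $\mathcal{G}$ is obtained from $\text{dag}(t)$ by deleting exactly the leaf nodes, $|N| = n$. By Lemma~\ref{lemma-counting-edges-bdag}, $|\text{bdag}(t)| = \sum_{w \in \text{sib}(t)} e(w)$, and by Lemma~\ref{lemma-relating-sums}, $|\text{hdag}(t)| = n + \sum_{w \in \text{sib}(t)} e(\tilde w)$. Substituting these, the inequality $|\text{bdag}(t)| + n \le 2\,|\text{hdag}(t)|$ is equivalent to
$$
\sum_{w \in \text{sib}(t)} \bigl( e(w) - 2\,e(\tilde w) \bigr) \ \le \ n ,
$$
so it remains to bound this sum.

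The next step is to compute $e(w) - 2\,e(\tilde w)$ for a single $w = s_1 \cdots s_m \in \text{sib}(t)$ by the case distinction defining $e$. Since in $\tilde w$ every entry is treated as a single node, $e(\tilde w) = 0$ when $m = 1$ and $e(\tilde w) = 1$ when $m \ge 2$. Comparing with the three cases for $e(w)$ gives
$$
e(w) - 2\,e(\tilde w) \ = \ \begin{cases} 1 & \text{if } m = 1 \text{ and } |s_1| \ge 1, \\ -1 & \text{if } m \ge 2 \text{ and } |s_1| = 0, \\ 0 & \text{otherwise.} \end{cases}
$$
Discarding the negative contributions yields
$$
\sum_{w \in \text{sib}(t)} \bigl( e(w) - 2\,e(\tilde w) \bigr) \ \le \ \bigl| \{ w \in \text{sib}(t) : |w| = 1 \text{ and the tree in } w \text{ has at least one edge} \} \bigr| .
$$

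Finally, a sibling sequence of length $1$ is simply a subtree $t/v$ of $t$, so the set on the right-hand side consists of pairwise distinct non-leaf subtrees of $t$. As the non-leaf subtrees of $t$ are in bijection with the non-leaf nodes of $\text{dag}(t)$, this set has at most $n$ elements, which establishes the bound and hence the theorem. I expect the case analysis for $e(w) - 2\,e(\tilde w)$ to be the only point needing care — in particular, noticing that the $-1$ case really does occur (so that no over-counting happens) and that exactly the length-$1$ non-leaf sibling sequences are charged $+1$; everything else is the bookkeeping already set up by Lemmas~\ref{lemma-counting-edges-bdag} and~\ref{lemma-relating-sums}.
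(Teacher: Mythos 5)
Your proposal is correct and follows essentially the same route as the paper: both reduce the claim via Lemma~\ref{lemma-counting-edges-bdag} and Lemma~\ref{lemma-relating-sums} to a per-sequence comparison of $e(w)$ and $e(\tilde w)$, carry out the same four-case analysis on the length of $w$ and the size of its first tree, and observe that the only positive contributions come from length-$1$ sibling sequences consisting of a non-leaf subtree, of which there are at most $n$ (one per nonterminal of $\mathcal{G}_{\text{dag}(t)}^{\text{red}}$). Your rearrangement into $\sum_w \bigl(e(w) - 2e(\tilde w)\bigr) \le n$ is just a cosmetic repackaging of the paper's dichotomy ``either $e(\tilde w) \ge \tfrac12 e(w)$ or $e(\tilde w)=0$, $e(w)=1$, and $\tilde w$ is a single nonterminal.''
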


\begin{proof}
We use the notations introduced before Theorem~\ref{hdag-kleiner-min}.
Note that $n = |N|$ is the number of nonterminals of the $0$-SLT grammar
$\mathcal{G}_{\text{dag}(t)}^{\text{red}}$. By Lemma~\ref{lemma-counting-edges-bdag}
we have  
$|\text{bdag}(t)|\ =\ \sum_{w \in \text{sib}(t)} e(w)$.
By Lemma~\ref{lemma-relating-sums} we have
$|\text{hdag}(t)|  \ = \ |N| + \sum_{w \in \text{sib}(t)} e(\tilde{w})$.
Hence, we have to show that
$$
|N| + \sum_{w \in \text{sib}(t)} e(\tilde{w}) \ \geq \ \frac{1}{2}  \sum_{w \in \text{sib}(t)} \!\! e(w) \ + \ \frac{1}{2} |N|.
$$
In order to prove this, we show the 
following for every sibling sequence $w \in \text{sib}(t)$:
Either $e(\tilde{w}) \geq \frac{1}{2} e(w)$ or
$e(\tilde{w}) = 0$ and $e(w) = 1$. In the latter case, the sibling sequence $w$ consists of a single tree $s$ of size 
at least one (i.e., $s$ does not consist of a single node), and $\tilde{w}$ consists of a single nonterminal $A \in N$.
So, let $w = t_1 \cdots t_n \in \text{sib}(t)$ and let $\tilde{w} = \alpha_1 \cdots \alpha_n$ with $\alpha_i \in \Sigma \cup N$.
We consider the following four cases:

\smallskip
\noindent
{\em Case 1.}  $n>1$ and $t_1 = \alpha_1 \in \Sigma$. We have $e(w) = e(\tilde{w})=1$.

\smallskip
\noindent
{\em Case 2.} $n > 1$ and $|t_1| \geq 1$. We have $e(w)=2$ and $e(\tilde{w})=1$.

\smallskip
\noindent
{\em Case 3.} $n=1$ and $t_1 = \alpha_1 \in \Sigma$. We have $e(w) = e(\tilde{w})=0$.

\smallskip
\noindent
{\em Case 4.} $n = 1$ and $|t_1| \geq 1$. We have $e(w)=1$, $e(\tilde{w})=0$, and 
$\tilde{w}$ consists of a single nonterminal $A \in N$.
\qed
\end{proof}
For the tree $t_m$ from Figure~\ref{fig:tn} we have
$|\text{bdag}(t_m)| = |t_m| = 2m$, $|\text{hdag}(s_m)| =  |\text{dag}(t_m) | = m+1$, and $n=|N|=2$.
Hence, Theorem~\ref{lemma:half_lemma} is optimal.

From Theorems~\ref{hdag-kleiner-min}, \ref{lemma:root_lemma}, and \ref{lemma:half_lemma} we immediately get:

\begin{corollary}
For every $t \in \mathcal{T}(\Sigma)$ with $|t| \geq 2$, we have
$$
\frac{1}{2} |\text{bdag}(t)| \le |\text{dag}(t)| \leq
\frac{1}{2} |\text{bdag}(t)|^2.
$$
\end{corollary}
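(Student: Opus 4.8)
The plan is to obtain both inequalities by chaining the three preceding theorems, using $|\text{hdag}(t)|$ as the pivot quantity: Theorem~\ref{hdag-kleiner-min} sandwiches the hdag size below both $|\text{dag}(t)|$ and $|\text{bdag}(t)|$, while Theorems~\ref{lemma:root_lemma} and~\ref{lemma:half_lemma} bound $|\text{dag}(t)|$ and $|\text{bdag}(t)|$ (respectively) in terms of $|\text{hdag}(t)|$ in the opposite directions. Composing these inequalities in the two possible ways yields exactly the two bounds of the corollary, so no new combinatorial argument is needed beyond arithmetic. I would keep the hypothesis $|t| \geq 2$ in force throughout, since it is precisely the assumption under which Theorem~\ref{lemma:root_lemma} was proved.

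For the lower bound $\frac{1}{2}|\text{bdag}(t)| \le |\text{dag}(t)|$, I would start from Theorem~\ref{lemma:half_lemma}, which gives $|\text{bdag}(t)| + n \le 2|\text{hdag}(t)|$ where $n \ge 0$ is the number of non-leaf nodes of $\text{dag}(t)$; dropping the nonnegative term $n$ gives $|\text{bdag}(t)| \le 2|\text{hdag}(t)|$. Then apply the second half of Theorem~\ref{hdag-kleiner-min}, namely $|\text{hdag}(t)| \le |\text{dag}(t)|$, to conclude $|\text{bdag}(t)| \le 2|\text{dag}(t)|$, which rearranges to the claimed inequality.

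For the upper bound $|\text{dag}(t)| \le \frac{1}{2}|\text{bdag}(t)|^2$, I would start from Theorem~\ref{lemma:root_lemma}, which gives $|\text{dag}(t)| \le \frac{1}{2}|\text{hdag}(t)|^2$. Applying the other half of Theorem~\ref{hdag-kleiner-min}, namely $|\text{hdag}(t)| \le |\text{bdag}(t)|$, and using monotonicity of $x \mapsto x^2$ on the nonnegative reals, we get $\frac{1}{2}|\text{hdag}(t)|^2 \le \frac{1}{2}|\text{bdag}(t)|^2$, and the two inequalities chain to the desired bound.

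There is essentially no obstacle at this stage: all the substantive work — the edge-counting identities of Lemmas~\ref{lemma-counting-edges-bdag}, \ref{lemma-counting-edges-hdag}, and~\ref{lemma-relating-sums}, the suffix-sharing argument for $|\text{hdag}(t)| \le |\text{dag}(t)|$, and the chain/forest dag estimates behind Theorems~\ref{lemma:root_lemma} and~\ref{lemma:half_lemma} — has already been carried out. The only point demanding a moment of care is bookkeeping: verifying that the inequalities are oriented consistently so that the two compositions land on the two sides of the corollary, and that the sharpness remarks following Theorems~\ref{lemma:root_lemma} and~\ref{lemma:half_lemma} (via the families $s_n$ and $t_m$) indeed certify that the constant $\tfrac12$ in the lower bound and the quadratic growth in the upper bound cannot be improved except for the constant factor in the latter.
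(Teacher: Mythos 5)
Your proposal is correct and matches the paper exactly: the corollary is stated there as an immediate consequence of Theorems~\ref{hdag-kleiner-min}, \ref{lemma:root_lemma}, and~\ref{lemma:half_lemma}, obtained by precisely the two chains of inequalities you describe (dropping the nonnegative term $n$ in Theorem~\ref{lemma:half_lemma} and using monotonicity of squaring). No further comment is needed.
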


\newcommand{\gf}{generating function}
\newcommand{\gfs}{generating functions}
\newcommand{\KN}{N}

\section{The average-case sizes of dag and bdag}

Let $m\ge 1$.
We will use the terminology {\em $m$-labeled tree}, instead of {\em $\{1, \ldots, m\}$-labeled tree}.
In this section,
we 
analyze the average sizes (both node size and edge size) of
the 
dags and bdags
of  $m$-labeled unranked trees
of size~$n$.
Currently, we are not able to make such an
analysis for the hdag. While 
Section~\ref{sec:exact}
%
provides exact expressions for the average sizes,
Section~\ref{subsection:asymptotic}
 deals with their asymptotic behavior.
The results are mostly an extension of~\cite{FlaSipStey1990}, where the authors
treat the average node size  of binary trees over a singleton alphabet
($m=1$). 
However, we give here complete proofs, whereas the proof was merely
sketched in \cite{FlaSipStey1990}.

Let $\mathcal{B}_m$
 denote the set of non-empty $m$-labeled
binary trees and let
$\mathcal{T}_m$ 
denote the set of  non-empty $m$-labeled unranked trees.
Here, ``non-empty'' means that our trees have at least one node.
%
For $\mathcal{U} \in \{ \mathcal{B}, \mathcal{T} \}$
and $n\ge 0$,
we define
$$
\mathcal{U}_{m,n} = \{ t \in \mathcal{U}_m\mid |t|=n \}.
$$ 
We seek expressions for the accumulated 
quantities 
$$
\KN _{m,n}^{\mathcal{U}} = \sum_{t \in \mathcal{U}_{m,n}} \| \text{dag}(t) \|
\quad \text{and}\quad
E_{m,n}^{\mathcal{U}} =  \sum_{t \in \mathcal{U}_{m,n}} |\text{dag}(t)|
$$
as well as for the average sizes
$$
\bar{\KN }_{m,n}^{\mathcal{U}} = \frac{\KN _{m,n}^{\mathcal{U}}}{|\mathcal{U}_{m,n}|}
\quad \text{and}\quad
\bar{E}_{m,n}^{\mathcal{U}} = \frac{E_{m,n}^{\mathcal{U}}}{|\mathcal{U}_{m,n}|}.
$$
Recall that the fcns-encoding yields a bijection between $m$-labeled unranked trees of edge size $n$ 
and $m$-labeled binary trees of edge size $n$, where the root only contains a left child. 
Therefore, the average node size (resp. edge size) of the bdag of $m$-labeled unranked trees of size $n$
is one plus the average node size (resp. edge size) of the minimal dag of $m$-labeled binary trees of size $n-1$.

One key tool 
used in this section is that of {\em generating functions}.
If $F_n$ is a sequence of numbers, then its (ordinary) 
 generating function is defined as
\[
 \mathbf F(z)= \sum_{n \ge 0} F_n z^n
\]
and $[z^n] \mathbf F(z)$ denotes the coefficient of $z^n$ in $\mathbf F(z)$ (i.e., $F_n$).
If for a 
set $\mathcal{S}$ a size function 
$f:\, \mathcal{S} \to \mathbb{N}$
is defined 
such that 
for every $n$ the set
$\mathcal{S}_n = \{ s \in \mathcal{S}\mid f(s)=n\}$ is finite, we can
associate to the set $\mathcal{S}$ the generating function 
\[
 \mathbf  S(z) = \sum_{n \ge 0} |\mathcal{S}_n| z^n,
\]
which is said to   \emm count the objects of $\mathcal S$ by their
size.,
Such sets $\mathcal S$ are 
sometimes called {\em combinatorial classes}~\cite[p.16]{AnalyticCombinatorics}.
When a class has a simple recursive structure, it is often possible to
obtain an explicit expression for $\mathbf S(z)$. This will be the case for our
two basic generating functions, $\mathbf B_m(z)$ and $\mathbf T_m(z)$, which count
respectively the trees of $\mathcal B_m$ and $\mathcal T_m$ by their
size (see Lemmas~\ref{lemma:gf_labeled_binary_trees} and~\ref{lemma:gf_labeled_unranked_trees}).

Let again $\mathcal{U} \in \{ \mathcal{B}, \mathcal{T} \}$.
For $u \in \mathcal{U}_m$ and $n\ge 0$,  define $C_{m,n}^{\mathcal{U}}(u)$ as the number of 
$\mathcal{U}_m$-trees 
of
size $n$ that {contain} $u$ as a subtree.
%
Let $ v \in \mathcal{U}_m$ be another tree such that $|v|=|u|$. For
every $n \ge 0$ there is 
a bijection between the set of trees of size $n$ that contain $u$ and the set of trees
of size $n$ that contain $v$
(it is obtained by replacing every occurrence of $u$ by a copy of $v$,
and vice versa).
Therefore 
$C_{m,n}^{\mathcal{U}}(u) =C_{m,n}^{\mathcal{U}}(v)$ 
and so we also write 
$C_{m,n}^{\mathcal{U}}(p)$ 
(with $p = |u|$) instead of
$C_{m,n}^{\mathcal{U}}(u)$. 
The corresponding generating function
is
$$
\mathbf C_{m,p}^{\mathcal{U}}(z) = \sum_{n \geq 0} C_{m,n}^{\mathcal{U}}(p)
z^n.
$$
This series will be determined in Lemma~\ref{lem:C-binary} for binary trees
and in Lemma~\ref{lemma:C^U_n} for unranked trees.
Let us now explain how the 
accumulated sizes
 $N^{\mathcal U}_{m,n}$ and
$E^{\mathcal U}_{m,n}$  (or, equivalently, the associated generating
functions) can be expressed in terms of these series.

%
Let $\text{sub}(t)$ denote the set of subtrees occurring in the tree $t$.
Since $\| \text{dag}(t) \| $ is 
 the number of different subtrees of 
$t$, we have ($\mathbbm{1}_{u\in  \text{sub}(t)}$ is $1$ if $u \in \text{sub}(t)$ and $0$ otherwise)
\begin{eqnarray*}   
 \KN _{m,n}^{\mathcal{U}} = \sum_{t \in \mathcal{U}_{m,n}}  \| \text{dag}(t) \|
 & = &  \sum_{t \in \mathcal{U}_{m,n}}  \sum_{u  \in \mathcal
   U_m}\mathbbm{1}_{u\in  \text{sub}(t)}\\
 & = &  \sum_{u \in \mathcal{U}_m} C_{m,n}^{\mathcal{U}}(u)
 = \sum_{p \geq 0} |\mathcal{U}_{m,p}| \, C_{m,n}^{\mathcal{U}}(p).
\end{eqnarray*}
Hence
the corresponding generating function is 
\begin{equation}\label{formula-K(z)^U-general}
\mathbf \KN _m^{\mathcal{U}} (z)
= \sum_{n \geq 0} \KN _{m,n}^{\mathcal{U}} z^n =    \sum_{p \geq 0} |\mathcal{U}_{m,p}| \, \mathbf C_{m,p}^{\mathcal{U}}(z).
\end{equation}
We now want to obtain  an expression for the accumulated number of edges
$E_{m,n}^{\mathcal U}$ and the associated generating function.
Let us denote by 
${U}_{m,n}^{(d)}$ (with $U=B$ or $T$)
the number of 
 trees from $\mathcal U_{m,n}$ that have root degree $d$ (i.e., the root has $d$ children).
Then, in the same spirit as for the number of nodes, we get for the
number of edges:  
\begin{eqnarray*}
  E_{m,n}^{\mathcal{U}} = \sum_{t \in \mathcal{U}_{m,n}} |\text{dag}(t)| 
   & = & \sum_{t \in \mathcal{U}_{m,n}}\sum_{u\in \text{sub}(t)}
   \deg(\text{root}(u) )\\
& = & \sum_{u \in\mathcal U_m} \deg(\text{root}(u)) C^{\mathcal U}_{m,n}(u)= \sum_{p,d \geq 0} d\, {U}_{m,p}^{(d)} C_{m,n}^{\mathcal{U}}(p).
\end{eqnarray*}
 The associated generating function is
\begin{equation}\label{formula-E(z)^U-general}
 \mathbf E_m^{\mathcal{U}} (z)
   = \sum_{n \geq 0} E_{m,n}^{\mathcal{U}} z^n =  \sum_{p,d \geq 1} d\, {U}_{m,p}^{(d)} \mathbf C_{m,p}^{\mathcal{U}}(z).   
\end{equation}
Indeed, we can ignore trees of size $p=0$ (or, equivalently, root degree $d=0$).
%
\subsection{Exact counting}\label{sec:exact}
%
In this section, we determine explicit expressions for the
\gfs\  $\mathbf N^{\mathcal{U}}_m (z)$ and  $\mathbf E_m^{\mathcal{U}} (z)$, whose
coefficients record the accumulated number of nodes (resp. edges) in the
dag of $m$-labeled 
trees of size $n$.
We start with binary trees (that is, $\mathcal U=\mathcal B$).

\subsubsection{Binary trees}

\begin{lemma}\label{lemma:gf_labeled_binary_trees}
 The generating function $\mathbf B_m(z)$ of $m$-labeled binary trees, 
counted by their edge number, 
is
 \begin{align}
  \mathbf B_m(z) &= \frac{1-2mz-\sqrt{1-4mz}}{2mz^2}. \label{eq:B_m1}
      \end{align}
Equivalently, the number of $m$-labeled binary trees of size $p$ is
\beq\label{eq:B_m2}
B_{m,p}= \frac 1{p+2} {2p+2\choose p+1} m^{p+1}.
\eeq
Of course the case  $m=1$  recovers the (shifted) Catalan numbers.
\end{lemma}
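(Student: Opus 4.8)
The plan is to set up a functional equation for $\mathbf B_m(z)$ from the recursive structure of binary trees and then solve it explicitly, extracting coefficients afterwards. An $m$-labeled binary tree is either a single node (contributing $0$ edges, and there are $m$ choices of label), or it has a root (with one of $m$ labels) together with a left subtree and a right subtree, each of which is either empty or again a non-empty binary tree. Recall that in the paper's convention the size is the number of \emph{edges}, and edges to $\Box$-nodes are not counted; so attaching a left child that is a non-empty tree of size $p$ contributes $p+1$ to the size (the $+1$ for the new edge), while an empty left child contributes $0$. Hence a non-empty binary tree is: a label, times (either $\Box$ or $z\cdot\mathbf B_m(z)$) for the left slot, times (either $\Box$ or $z\cdot\mathbf B_m(z)$) for the right slot. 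This gives the equation
\[
\mathbf B_m(z) = m\bigl(1 + z\,\mathbf B_m(z)\bigr)^2 .
\]

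The next step is to solve this quadratic in $\mathbf B_m(z)$. Writing $B=\mathbf B_m(z)$, we have $m z^2 B^2 + (2mz-1)B + m = 0$, whence
\[
B = \frac{1-2mz \pm \sqrt{(1-2mz)^2 - 4m^2z^2}}{2mz^2}
  = \frac{1-2mz \pm \sqrt{1-4mz}}{2mz^2},
\]
and we must pick the root analytic at $z=0$, i.e. the one with $B(0)=m$; this forces the minus sign, giving \eqref{eq:B_m1}. (A quick sanity check: expanding $(1-4mz)^{1/2}$ to first order gives $B = m + O(z)$, as required.)

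For the coefficient formula \eqref{eq:B_m2}, the cleanest route is Lagrange inversion. Set $G(z) = z\,\mathbf B_m(z)$; then from $\mathbf B_m = m(1+z\mathbf B_m)^2$ we get $G = z\,m(1+G)^2$, i.e. $G = z\,\phi(G)$ with $\phi(u) = m(1+u)^2$. Lagrange inversion yields $[z^{n}]G = \tfrac1n [u^{n-1}]\phi(u)^n = \tfrac1n [u^{n-1}] m^n (1+u)^{2n} = \tfrac{m^n}{n}\binom{2n}{n-1}$. Since $B_{m,p} = [z^p]\mathbf B_m(z) = [z^{p+1}]G(z)$, substituting $n = p+1$ gives $B_{m,p} = \tfrac{m^{p+1}}{p+1}\binom{2p+2}{p}$, and using $\binom{2p+2}{p} = \tfrac{p+1}{p+2}\binom{2p+2}{p+1}$ rewrites this as $\tfrac{m^{p+1}}{p+2}\binom{2p+2}{p+1}$, which is \eqref{eq:B_m2}. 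The specialization $m=1$ gives $\tfrac1{p+2}\binom{2p+2}{p+1} = C_{p+1}$, the $(p+1)$-st Catalan number, confirming the final remark. Alternatively one could expand $\sqrt{1-4mz}$ directly via the generalized binomial theorem and simplify, but Lagrange inversion avoids the somewhat fiddly binomial identity.

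I do not expect any genuine obstacle here: the only points requiring a little care are (i) getting the edge-counting bookkeeping right in the functional equation (the asymmetry that $\Box$-slots cost $0$ but non-empty subtrees cost size-plus-one), and (ii) choosing the analytic branch of the square root. Both are routine. The main "work" is just the coefficient extraction, for which Lagrange inversion is the path of least resistance.
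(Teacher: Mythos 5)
Your proof is correct and takes essentially the same route as the paper: your functional equation $\mathbf B_m(z)=m\bigl(1+z\,\mathbf B_m(z)\bigr)^2$ is exactly the paper's $\mathbf B_m(z)=m+2mz\,\mathbf B_m(z)+m\left(z\,\mathbf B_m(z)\right)^2$ in factored form, and the quadratic is solved with the same branch choice. The only (harmless) divergence is in the final routine step, where you extract the coefficients via Lagrange inversion while the paper simply invokes a Taylor expansion of $\sqrt{1-4mz}$; both yield \eqref{eq:B_m2}.
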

%
\begin{proof}
The proof of Lemma~\ref{lemma:gf_labeled_binary_trees} follows a
general principle called the {\em symbolic method} in~\cite[Chapter 1]{AnalyticCombinatorics}:
If a combinatorial class $\mathcal{H}$ is built by disjoint union
from two combinatorial classes $\mathcal{F}$ and $\mathcal{G}$ with the respective generating
functions $\mathbf F(z)$ and $\mathbf G(z)$, 
then the generating function $\mathbf H(z)$ of the combinatorial class
$\mathcal{H}$ is $\mathbf H(z) = \mathbf F(z) + \mathbf G(z)$. Similarily, if $\mathcal{H}$ is build via
Cartesian
 product from the classes $\mathcal{F}$ and $\mathcal G$, 
 then $\mathbf H(z) = \mathbf F(z)\cdot \mathbf G(z)$.
 
An $m$-labeled binary tree is either a single node 
with a label from $\{1, \ldots, m\}$,
or a root node with a single 
subtree 
(left or right)
or a root node with two
subtrees. 
 The above principles give for the \gf\ $ \mathbf B_m(z)$ the equation
\[
 \mathbf B_{m}(z) = m + 2mz\mathbf B_{m}(z) + m \left( z\mathbf B_{m}(z) \right)^2.
\]
Solving this equation for $\mathbf B_{m}(z)$ proves equation~\eqref{eq:B_m1}
(taking the other root for $\mathbf B_m(z)$ would give a series with negative
powers of $z$). 
Equation~\eqref{eq:B_m2} follows from~\eqref{eq:B_m1} by 
a Taylor expansion.
\qed
\end{proof}

\begin{lemma}\label{lem:C-binary}
 The generating function $\mathbf C_{m,u}^{\mathcal{B}}(z)$ of $m$-labeled binary trees
 that contain a given tree $u \in \mathcal{B}_m$ of size $p$ is
$$
 \mathbf C_{m,p}^{\mathcal{B}}(z) =\frac{1}{2mz^2}
 \left(\sqrt{1-4mz+4mz^{p+2}} - \sqrt{1-4mz} \right).
$$
\end{lemma}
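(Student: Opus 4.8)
The plan is to set up a functional equation for $\mathbf C_{m,p}^{\mathcal{B}}(z)$ by the same symbolic-method decomposition used for $\mathbf B_m(z)$, but now keeping track of whether the copy of $u$ we are ``marking'' sits inside the current tree. Fix a tree $u$ of edge size $p$; by the remarks preceding the lemma, $C_{m,n}^{\mathcal B}(u)$ depends only on $p$, so I write $\mathbf C(z) = \mathbf C_{m,p}^{\mathcal B}(z)$. A nonempty $m$-labeled binary tree that contains $u$ as a subtree is either (a) equal to $u$ itself, or (b) a tree with at least one child, at least one of whose subtrees contains $u$. The cleanest way to handle the ``at least one'' without over-counting is to count instead trees that do \emph{not} contain $u$: let $\mathbf D(z) = \mathbf B_m(z) - \mathbf C(z)$ be the generating function of $u$-avoiding binary trees. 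A $u$-avoiding tree is a single node, or a node with one child subtree that is $u$-avoiding, or a node with two child subtrees both $u$-avoiding — with the sole exception that we must not reconstruct $u$ itself in case (b). Since $u$ has size $p$ and is a specific tree, the subtree configuration that would rebuild $u$ contributes exactly $z^p$ at the top level; subtracting it gives
\[
\mathbf D(z) = m + 2mz\,\mathbf D(z) + m\bigl(z\,\mathbf D(z)\bigr)^2 - z^p .
\]
Here the term $-z^p$ accounts for the one ``bad'' tree of size $p$ (namely $u$) that the first three terms would otherwise count as $u$-avoiding; one should check that this is the only correction, i.e.\ that every tree of size $>p$ built from $u$-avoiding pieces is automatically $u$-avoiding, and that $u$ is indeed captured once by the right-hand side with all its pieces $u$-avoiding (true because all proper subtrees of $u$ are strictly smaller, hence $\ne u$, hence $u$-avoiding).

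Next I would solve this quadratic in $\mathbf D(z)$. Writing it as $mz^2\,\mathbf D^2 + (2mz-1)\mathbf D + (m - z^p) = 0$ and applying the quadratic formula gives
\[
\mathbf D(z) = \frac{1 - 2mz - \sqrt{(1-2mz)^2 - 4mz^2(m-z^p)}}{2mz^2}
= \frac{1 - 2mz - \sqrt{1 - 4mz + 4mz^{p+2}}}{2mz^2},
\]
where the sign of the root is forced by requiring $\mathbf D(z)$ to be a power series with nonnegative coefficients and $\mathbf D(0)=m$ (the $m$ single-node trees, for $p\ge 1$); the other root has a pole at $z=0$. Comparing with $\mathbf B_m(z) = \dfrac{1-2mz-\sqrt{1-4mz}}{2mz^2}$ from Lemma~\ref{lemma:gf_labeled_binary_trees} and using $\mathbf C(z) = \mathbf B_m(z) - \mathbf D(z)$, the $1-2mz$ terms cancel and I obtain
\[
\mathbf C_{m,p}^{\mathcal B}(z) = \frac{1}{2mz^2}\Bigl(\sqrt{1 - 4mz + 4mz^{p+2}} - \sqrt{1-4mz}\Bigr),
\]
as claimed.

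The main obstacle I anticipate is justifying the single correction term $-z^p$ rigorously, i.e.\ arguing that among all trees obtained by combining $u$-avoiding subtrees at a fresh root, exactly one — the tree $u$ — actually contains $u$, and it is produced exactly once. This requires the observation that if a tree $t$ contains $u$ as a subtree then either $t = u$ or $u$ occurs strictly inside one of the (one or two) root subtrees of $t$; the first alternative is the excluded case $t=u$, and in the second alternative that root subtree fails to be $u$-avoiding, so such $t$ is not produced by the $u$-avoiding recursion at all. Everything else — the sign choice in the quadratic, the cancellation of $1-2mz$, and the final algebraic simplification — is routine. A secondary point worth a sentence is the degenerate case $p=0$ (i.e.\ $u$ a single node): then $u$-avoiding means having no node at all, $\mathbf D(z)=0$, and indeed the formula gives $\mathbf C_{m,0}^{\mathcal B}(z) = \mathbf B_m(z)$, consistent with every nonempty binary tree containing a single node as a subtree.
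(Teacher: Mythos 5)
Your proof is correct and follows essentially the same route as the paper: both count the $u$-avoiding trees via the recursive decomposition with the single correction term $-z^p$, solve the resulting quadratic, and subtract from $\mathbf B_m(z)$. One small caveat on your closing aside: for $p=0$ and $m\ge 2$ the avoiding class is \emph{not} empty (a single node with a label different from that of $u$ avoids $u$), so $\mathbf C_{m,0}^{\mathcal B}(z)=\mathbf B_m(z)$ only when $m=1$; the main argument is unaffected since the recursion with $-z^0$ handles this case correctly.
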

\begin{proof}
We first determine the \gf\  $\mathbf A_{m,p}^{\mathcal{B}}(z)$ counting $m$-labeled binary
trees that do  \emm not, contain (or {\em avoid}) $u$. A non-empty
binary tree $t$ that avoids 
$u$ is either 
reduced to a single node,
or 
a root node with a $u$-avoiding tree attached (this may
be the left or the right child),
or a root node to which two $u$-avoiding trees are attached. 
However, we must still exclude the tree
$t=u$, which is included in the above recursive description.  We thus get the following equation:
\[
 \mathbf A_{m,p}^{\mathcal{B}}(z) = 
 m + 2mz \mathbf A_{m,p}^{\mathcal{B}}(z) + m \left(z \mathbf A_{m,p}^{\mathcal{B}}(z) \right)^2 -z^p,
\]
which
yields
\[
 \mathbf A_{m,p}^{\mathcal{B}}(z) = \frac{1-2mz-\sqrt{1-4mz + 4mz^{p+2}}}{2mz^2}.
\]
Using $\mathbf C_{m,p}^{\mathcal{B}}(z) = \mathbf B_m(z) - \mathbf A_{m,p}^{\mathcal{B}}(z)$,
this proves 
the lemma.
\qed
\end{proof}
We now obtain expressions for the \gfs\   
$\mathbf \KN_m^{\mathcal{B}}(z)$ and   $\mathbf E_m^{\mathcal{B}}(z)$ given
by~\eqref{formula-K(z)^U-general} and~\eqref{formula-E(z)^U-general}.
\begin{theorem}\label{lemma:binary_K}
 The generating function of the accumulated number of nodes of minimal dags of
$m$-labeled
 binary trees is
 $$
 \mathbf  \KN _m^{\mathcal{B}}(z) = \frac{1}{2mz^2} \sum_{p \ge 0} B_{m,p}
  \left( \sqrt{1-4mz+4mz^{p+2}} -\sqrt{1-4mz}
  \right),
$$
where the numbers $B_{m,p}$ are given by~\eqref{eq:B_m2}.

%
  The generating function of the accumulated number of edges of dags of
$m$-labeled  binary trees is
$$
  \mathbf E_m^{\mathcal{B}}(z) = \frac{3}{2mz^2} \sum_
{p \ge 1} 
\frac p{2p+1} B_{m,p}\left( \sqrt{1-4mz+4mz^{p+2}} -\sqrt{1-4mz} \right).
$$
 \end{theorem}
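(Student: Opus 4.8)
The plan is to derive both formulas from the general expressions \eqref{formula-K(z)^U-general} and \eqref{formula-E(z)^U-general} by substituting the already-computed series $\mathbf C_{m,p}^{\mathcal{B}}(z)$ from Lemma~\ref{lem:C-binary}. For the node count, the formula for $\mathbf N_m^{\mathcal{B}}(z)$ is literally \eqref{formula-K(z)^U-general} with $\mathcal U = \mathcal B$: we plug in $|\mathcal B_{m,p}| = B_{m,p}$ (from Lemma~\ref{lemma:gf_labeled_binary_trees}) and the closed form of $\mathbf C_{m,p}^{\mathcal{B}}(z)$, pulling the common factor $1/(2mz^2)$ out of the sum. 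That part is immediate.

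The edge count is the real work. By \eqref{formula-E(z)^U-general} we need $\sum_{p,d\ge 1} d\, B_{m,p}^{(d)} \mathbf C_{m,p}^{\mathcal B}(z)$, where $B_{m,p}^{(d)}$ is the number of $m$-labeled binary trees of size $p$ with root degree $d$, and $d \in \{0,1,2\}$. Since $\mathbf C_{m,p}^{\mathcal B}(z)$ depends only on $p$, the inner sum over $d$ collapses to $\sum_{d} d\, B_{m,p}^{(d)}$, which is the \emph{total number of edges emanating from the roots}, summed over all $m$-labeled binary trees of size $p$. So the key combinatorial step is to compute this quantity, call it $R_{m,p} := \sum_d d\, B_{m,p}^{(d)} = B_{m,p}^{(1)} + 2 B_{m,p}^{(2)}$. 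First I would count trees by root degree using the symbolic method: a root with one child (left or right) contributes $2mz\mathbf B_m(z)$ so $B_{m,p}^{(1)} = [z^p]\,2mz\mathbf B_m(z) = 2m B_{m,p-1}$; a root with two children contributes $m z^2 \mathbf B_m(z)^2$, so $B_{m,p}^{(2)} = [z^p]\, m z^2 \mathbf B_m(z)^2 = m [z^{p-2}]\mathbf B_m(z)^2$. Then $R_{m,p} = 2m B_{m,p-1} + 2m [z^{p-2}]\mathbf B_m(z)^2$. To simplify, I would use the functional equation $\mathbf B_m = m + 2mz\mathbf B_m + mz^2\mathbf B_m^2$, which gives $mz^2\mathbf B_m^2 = \mathbf B_m - m - 2mz\mathbf B_m$, hence $m[z^{p-2}]\mathbf B_m^2 = B_{m,p} - 2m B_{m,p-1}$ for $p \ge 2$; substituting yields $R_{m,p} = 2m B_{m,p-1} + 2(B_{m,p} - 2m B_{m,p-1}) = 2B_{m,p} - 2m B_{m,p-1}$. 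Finally I would express $B_{m,p-1}$ in terms of $B_{m,p}$ using the explicit product formula \eqref{eq:B_m2}: $B_{m,p-1} = \frac{p+2}{p+1}\cdot\frac{\binom{2p}{p}}{\binom{2p+2}{p+1}} \cdot \frac{1}{m} B_{m,p}$, and the ratio $\binom{2p}{p}/\binom{2p+2}{p+1} = \frac{(p+1)^2}{(2p+2)(2p+1)} = \frac{p+1}{2(2p+1)}$, so $m B_{m,p-1} = \frac{p+2}{2(2p+1)} B_{m,p}$. Therefore $R_{m,p} = 2B_{m,p} - \frac{p+2}{2p+1} B_{m,p} = \frac{2(2p+1) - (p+2)}{2p+1} B_{m,p} = \frac{3p}{2p+1} B_{m,p}$, matching the claimed coefficient exactly.

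Plugging $R_{m,p} = \frac{3p}{2p+1} B_{m,p}$ into \eqref{formula-E(z)^U-general} with the closed form of $\mathbf C_{m,p}^{\mathcal B}(z)$, and pulling out the factor $1/(2mz^2)$, gives
$$
\mathbf E_m^{\mathcal B}(z) = \frac{1}{2mz^2}\sum_{p\ge 1} \frac{3p}{2p+1} B_{m,p}\left(\sqrt{1-4mz+4mz^{p+2}}-\sqrt{1-4mz}\right),
$$
which is the stated formula after writing $\frac{3}{2mz^2}$ outside and $\frac{p}{2p+1}$ inside. I would double-check the low-order terms: the $p=1$ term should correspond to trees of size $1$ (a single labeled node, root degree $0$, contributing nothing) versus size $2$ — a quick sanity check against counting edges in dags of all $m$-labeled binary trees of edge size $1$ and $2$ directly. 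The main obstacle is the bookkeeping in the root-degree generating-function argument — in particular, being careful that the shift indices ($p-1$, $p-2$) and the use of the functional equation are valid in the right range of $p$, and that the "ignore $p=0$" remark in the text is correctly reflected (trees of size $0$ don't exist for binary trees in this convention since a single node has edge size $0$, so in fact $p$ ranges over $p\ge 0$ but only $p\ge 1$ trees have positive root contribution — the sum $\sum_{p\ge 1}$ is then the honest range). Everything else is routine algebra with Catalan-type numbers.
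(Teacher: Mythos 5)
Your proposal is correct and follows essentially the same route as the paper: the node-count formula is a direct substitution of Lemma~\ref{lem:C-binary} into \eqref{formula-K(z)^U-general}, and for the edge count both you and the paper reduce to computing $\sum_d d\,B_{m,p}^{(d)} = 2(B_{m,p}-mB_{m,p-1}) = \frac{3p}{2p+1}B_{m,p}$ via the root-degree decomposition and the explicit formula \eqref{eq:B_m2}. The only cosmetic difference is that you obtain $B_{m,p}^{(2)}$ from the functional equation for $\mathbf B_m$, whereas the paper gets it by subtraction after noting that the root degree of a tree of size $p\ge 1$ is necessarily $1$ or $2$; these are equivalent.
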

Equation~(3) in~\cite{FlaSipStey1990} can be obtained from the
above expression for $\mathbf \KN _m^{\mathcal{B}}(z)$  by setting
$m=1$ and by shifting the index
(since the size is defined as  the number of nodes in~\cite{FlaSipStey1990}).
%

\begin{proof}
 The expression for $\mathbf \KN^{\mathcal B}_{m}(z)$ follows directly
 from~\eqref{formula-K(z)^U-general} and Lemma~\ref{lem:C-binary}.
To express the series $\mathbf E^{\mathcal B}_{m}(z)$, we first need to
determine (according to~\eqref{formula-E(z)^U-general}) the number
${B}_{m,p}^{(d)}$  of
$m$-labeled binary trees of size $p \geq 1$ with root degree~$d$. Note that $d$
can only be  1 or 2.  Clearly, 
  %
${B}_{m,p}^{(1)}= 2mB_{m,p-1}$,
 and thus  ${B}_{m,p}^{(2)}= B_{m,p}-2mB_{m,p-1}$.
%
Hence, for $p\ge 1$,
\begin{eqnarray*}
\sum_{d\ge 1} d \cdot {B}_{m,p}^{(d)} & = & 2mB_{m,p-1}+ 2(B_{m,p}-2mB_{m,p-1})\\
& = & 2(B_{m,p}-mB_{m,p-1}) \\
& = & \frac{3p}{2p+1} B_{m,p},
\end{eqnarray*}
where the last equation follows from \eqref{eq:B_m2}.
The expression for $\mathbf E^{\mathcal B}_{m}(z)$ now follows,
using~\eqref{formula-E(z)^U-general} and Lemma~\ref{lem:C-binary}.
    \qed
\end{proof}

\subsubsection{Unranked trees}
%
\begin{lemma}\label{lemma:gf_labeled_unranked_trees}
 The generating function $\mathbf T_m(z)$ of $m$-labeled unranked trees is
\begin{align}
 \mathbf{T}_m(z) &= \frac{1-\sqrt{1-4mz}}{2z}.  \label{eq:T_m1}                
\end{align}
Equivalently, the number of $m$-labeled unranked trees of size $p$ is
\beq
 T_{m,p}= \frac{1}{p+1} \binom{2p}{p} 
m^{p+1}.
\label{eq:T_m2}
\eeq
Again, we obtain the Catalan numbers when $m=1$.
\end{lemma}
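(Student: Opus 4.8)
The plan is to mimic exactly the symbolic-method computation used for $\mathbf B_m(z)$ in the proof of Lemma~\ref{lemma:gf_labeled_binary_trees}. An $m$-labeled unranked tree is a root node carrying one of the $m$ labels, followed by a (possibly empty) sequence of subtrees $t_1, \dots, t_d$; each child contributes one edge (from the root to the child) plus all the edges inside the corresponding subtree. Hence a sequence of $d$ children is counted by $\bigl(z\mathbf T_m(z)\bigr)^d$, and summing over $d \ge 0$ the sequence construction contributes the factor $1/\bigl(1 - z\mathbf T_m(z)\bigr)$. Together with the factor $m$ for the root label this gives the functional equation
\[
\mathbf T_m(z) = \frac{m}{1 - z\mathbf T_m(z)}.
\]

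Rearranging yields the quadratic $z\,\mathbf T_m(z)^2 - \mathbf T_m(z) + m = 0$, whose two roots are $\bigl(1 \pm \sqrt{1-4mz}\bigr)/(2z)$. Only the branch with the minus sign is a formal power series in $z$ (the other has a pole at $z=0$, or equivalently does not satisfy $\mathbf T_m(0) = m$), which establishes~\eqref{eq:T_m1}. For the closed form~\eqref{eq:T_m2} one extracts coefficients: writing $x = mz$ and using the classical Catalan expansion $\frac{1-\sqrt{1-4x}}{2x} = \sum_{p\ge 0}\frac{1}{p+1}\binom{2p}{p}x^p$, one obtains
\[
\mathbf T_m(z) = m\cdot\frac{1-\sqrt{1-4mz}}{2mz} = \sum_{p \ge 0}\frac{1}{p+1}\binom{2p}{p}m^{p+1}z^p,
\]
so $T_{m,p} = [z^p]\mathbf T_m(z) = \frac{1}{p+1}\binom{2p}{p}m^{p+1}$. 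Alternatively, applying Lagrange inversion to $\mathbf T_m = m/(1-z\mathbf T_m)$ gives the same coefficients directly.

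There is essentially no obstacle here; the computation is entirely routine given the symbolic method recalled in the proof of Lemma~\ref{lemma:gf_labeled_binary_trees}. The only two points requiring a line of justification are the choice of the correct branch of the square root and the observation that each edge from a parent to a child is accounted for exactly once, by the single factor $z$ sitting in front of each copy of $\mathbf T_m(z)$ in $1/\bigl(1 - z\mathbf T_m(z)\bigr)$. A quick sanity check confirms the parametrization: $[z^0]\mathbf T_m(z) = m$ counts the $m$ one-node trees, and $[z^1]\mathbf T_m(z) = m^2$ counts the trees with a labeled root and one labeled child.
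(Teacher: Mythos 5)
Your proposal is correct and follows essentially the same route as the paper: the root-plus-sequence-of-subtrees decomposition yields the functional equation $\mathbf T_m(z) = m/(1 - z\mathbf T_m(z))$, which is solved as a quadratic with the power-series branch selected, and the coefficients are then read off from the Catalan expansion. The extra remarks on branch choice and the sanity check are fine but add nothing beyond the paper's argument.
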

\begin{proof}
 An 
$m$-labeled unranked tree is a 
root
 node
to which a sequence of $m$-labeled unranked trees is attached.
 We can now use another construction
from~\cite[Chapter 1]{AnalyticCombinatorics}:
If $\mathcal{G}$ is a combinatorial class that does not contain an
element of size $0$, and 
the class $\mathcal{F}$ is defined as
\[
 \mathcal{F} = \{\epsilon \} + \mathcal{G} + (\mathcal{G} \times \mathcal{G}) + 
 (\mathcal{G} \times \mathcal{G} \times \mathcal{G}) + \cdots,
\]
then the generating function 
of $\mathcal{F}$ is
\[
 \mathbf F(z) = \frac{1}{1-\mathbf G(z)}
\]
where $\mathbf G(z)$ is the generating function 
of $\mathcal{G}$.

In our case, $\mathbf G(z)=z \mathbf T_m(z)$ counts trees with root degree 1 and root
label~1,
and we thus obtain
\begin{equation*}
 \mathbf{T}_m(z) = \frac{m}{1- z \mathbf{T}_m(z)}.
\end{equation*}
Solving this for $\mathbf{T}_m(z)$ yields~\eqref{eq:T_m1}.
We then obtain equation~\eqref{eq:T_m2} by a Taylor expansion.
\qed
\end{proof}

%
\begin{lemma}\label{lemma:C^U_n}
 The generating function of 
$m$-labeled unranked trees 
that contain a given tree $u$ of size $p$ is
$$
  \mathbf C_{m,p}^{\mathcal{T}}(z) = \frac{z^{p+1}+ \sqrt{1- 4mz +
      2z^{p+1} + z^{2p+2} } - \sqrt{1-4mz} }{2z}. 
$$
\end{lemma}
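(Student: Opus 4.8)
The plan is to mimic the proof of Lemma~\ref{lem:C-binary}: first count the unranked trees that \emph{avoid} a fixed tree $u$ of size $p$, then subtract from $\mathbf T_m(z)$. Let $\mathbf A_{m,p}^{\mathcal T}(z)$ be the generating function of $m$-labeled unranked trees that do not contain $u$ as a subtree. A $u$-avoiding unranked tree is a root node (one of $m$ labels) to which is attached a (possibly empty) sequence of $u$-avoiding unranked trees, \emph{except} that the tree equal to $u$ itself is spuriously included in this description and must be removed. The sequence construction from Lemma~\ref{lemma:gf_labeled_unranked_trees} turns ``a root with a sequence of children from a class counted by $\mathbf A_{m,p}^{\mathcal T}(z)$'' into $\dfrac{m}{1-z\mathbf A_{m,p}^{\mathcal T}(z)}$, and subtracting the one forbidden tree (which has edge size $p$) gives the functional equation
\[
\mathbf A_{m,p}^{\mathcal T}(z) \;=\; \frac{m}{1-z\mathbf A_{m,p}^{\mathcal T}(z)} \;-\; z^p .
\]

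Next I would solve this equation for $\mathbf A_{m,p}^{\mathcal T}(z)$. Writing $A = \mathbf A_{m,p}^{\mathcal T}(z)$ and clearing the denominator, $(A+z^p)(1-zA) = m$, i.e.
\[
z A^2 + (z^{p+1}-1) A + (m - z^p) = 0 .
\]
Solving the quadratic,
\[
A \;=\; \frac{(1-z^{p+1}) - \sqrt{(1-z^{p+1})^2 - 4z(m-z^p)}}{2z},
\]
where the sign of the square root is chosen so that $A$ is a power series with nonnegative powers of $z$ (equivalently, so that $A \to m$ as $z\to 0$, matching the single-node tree); the other root blows up like $1/z$. Expanding the discriminant, $(1-z^{p+1})^2 - 4z(m-z^p) = 1 - 2z^{p+1} + z^{2p+2} - 4mz + 4z^{p+1} = 1 - 4mz + 2z^{p+1} + z^{2p+2}$, so
\[
\mathbf A_{m,p}^{\mathcal T}(z) \;=\; \frac{1 - z^{p+1} - \sqrt{1 - 4mz + 2z^{p+1} + z^{2p+2}}}{2z}.
\]

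Finally I would use $\mathbf C_{m,p}^{\mathcal T}(z) = \mathbf T_m(z) - \mathbf A_{m,p}^{\mathcal T}(z)$. By Lemma~\ref{lemma:gf_labeled_unranked_trees}, $\mathbf T_m(z) = \dfrac{1-\sqrt{1-4mz}}{2z}$, so
\[
\mathbf C_{m,p}^{\mathcal T}(z) \;=\; \frac{1-\sqrt{1-4mz}}{2z} - \frac{1 - z^{p+1} - \sqrt{1 - 4mz + 2z^{p+1} + z^{2p+2}}}{2z}
\;=\; \frac{z^{p+1} + \sqrt{1 - 4mz + 2z^{p+1} + z^{2p+2}} - \sqrt{1-4mz}}{2z},
\]
which is exactly the claimed formula. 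The one genuinely delicate point is justifying the choice of root: one must check that the minus sign yields a formal power series whose constant term is $m$ (not a Laurent series), which follows because $1-z^{p+1} - \sqrt{1-4mz+\cdots}$ has no constant term and its coefficient of $z^1$ is $m$; everything else is routine algebra. One should also note the degenerate case $p=0$ (i.e. $u$ a single node), where every nonempty tree contains $u$, so $\mathbf C_{m,0}^{\mathcal T}(z) = \mathbf T_m(z)$; plugging $p=0$ into the formula gives $\dfrac{z + \sqrt{1-4mz+2z+z^2} - \sqrt{1-4mz}}{2z} = \dfrac{z + (1+z) - \sqrt{1-4mz}}{2z}$, which does \emph{not} equal $\mathbf T_m(z)$, so the statement should be read (as in the binary case) for $p\ge 1$, or equivalently $u$ is tacitly assumed to have at least one edge — the derivation of the functional equation already presupposes $u$ is not the tree counted by the ``$m$'' term.
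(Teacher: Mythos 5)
Your derivation is exactly the paper's proof: set up the functional equation for the $u$-avoiding trees via the sequence construction, subtract $z^p$ for the spurious occurrence of $u$ itself, solve the quadratic choosing the root that is a genuine power series, and subtract from $\mathbf T_m(z)$. The only flaw is your closing remark about $p=0$: there $\sqrt{1-4mz+2z+z^{2}}=\sqrt{(1+z)^2-4mz}$ does \emph{not} simplify to $1+z$ (for $m=1$ the radicand is $(1-z)^2$, so the root is $1-z$ and the formula \emph{does} reduce to $\mathbf T_1(z)$, consistent with every unlabeled tree containing the single node), and for $m\ge 2$ your premise fails because a tree need not contain a leaf with the particular label of $u$, so $\mathbf C_{m,0}^{\mathcal T}(z)\neq \mathbf T_m(z)$ is no contradiction. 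The lemma, and your functional equation, in fact hold for $p=0$ as well, so no restriction to $p\ge 1$ is needed.
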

\begin{proof}
We first determine the \gf\  $\mathbf A_{m,p}^{\mathcal{T}}(z)$ counting
$m$-labeled unranked 
trees that do \emm not, contain (or avoid) $u$. A 
tree that avoids 
 $u$
is a root node to which a sequence of $u$-avoiding trees is attached. 
As in the binary case, we still need to subtract $z^p$ to avoid
counting 
 $u$ itself. This gives
\begin{align*}
 \mathbf A_{m,p}^{\mathcal{T}}(z) &= \frac{m}{1 - z\mathbf A_{m,p}^{\mathcal{T}}(z)} -z^p,
\end{align*}
which can be solved for $\mathbf A^{\mathcal{T}}_{m,p}(z)$:
\begin{align*}
 \mathbf A_{m,p}^{\mathcal{T}}(z) &= \frac{1}{2z} \left( 1 - z^{p+1} - \sqrt{1- 4mz + 2z^{p+1} + z^{2p+2}}\right).
\end{align*}
Using $\mathbf C_{m,p}^{\mathcal{T}}(z) = \mathbf T_m(z) - \mathbf A_{m,p}^{\mathcal{T}}(z)$,
this proves the lemma.
\qed
\end{proof}
%
\begin{proposition}\label{prop:NE-unranked}
  The generating function of the accumulated node size of minimal dags of 
$m$-labeled unranked trees is
$$
  \mathbf  \KN _m^{\mathcal{T}}(z) = \frac{1}{2z} \sum_{p \ge 0} T_{m,p} 
\left(z^{p+1}+ \sqrt{1-4mz+2z^{p+1} +z^{2p+2}} - \sqrt{1-4mz}\right),
$$
where the numbers $T_{m,p}$ are given by~\eqref{eq:T_m2}.

 The generating function of the accumulated edge size of minimal dags of 
$m$-labeled unranked trees is
 \begin{align*}
    \mathbf  E_m^{\mathcal{T}}(z) &= \frac{3}{2z} \sum_{p\ge 0} \frac{p
       T_{m,p}}{p+2} \left( z^{p+1}+ \sqrt{1-4mz+2z^{p+1} +z^{2p+2}} - \sqrt{1-4mz}\right).
  \end{align*}
\end{proposition}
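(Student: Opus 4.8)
The plan is to read both formulas off the general identities~\eqref{formula-K(z)^U-general} and~\eqref{formula-E(z)^U-general}, inserting the two ingredients already at hand: the count $|\mathcal T_{m,p}| = T_{m,p}$ of $m$-labeled unranked trees of size $p$ (Lemma~\ref{lemma:gf_labeled_unranked_trees}) and the series $\mathbf C_{m,p}^{\mathcal T}(z)$ from Lemma~\ref{lemma:C^U_n}. For the node size this is immediate: \eqref{formula-K(z)^U-general} becomes $\mathbf N_m^{\mathcal T}(z) = \sum_{p\ge 0} T_{m,p}\,\mathbf C_{m,p}^{\mathcal T}(z)$, and substituting the formula for $\mathbf C_{m,p}^{\mathcal T}(z)$ gives the stated expression verbatim.

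For the edge size, \eqref{formula-E(z)^U-general} reads $\mathbf E_m^{\mathcal T}(z) = \sum_{p\ge 1}\big(\sum_{d\ge 1} d\,T_{m,p}^{(d)}\big)\mathbf C_{m,p}^{\mathcal T}(z)$, so the one substantive step is to evaluate the accumulated root degree $\sigma_p := \sum_{d\ge 1} d\,T_{m,p}^{(d)}$ and show $\sigma_p = \frac{3p}{p+2}T_{m,p}$ --- the exact analogue of the identity $\sum_{d\ge 1} d\,B_{m,p}^{(d)} = \frac{3p}{2p+1}B_{m,p}$ used in the proof of Theorem~\ref{lemma:binary_K}. I would introduce the bivariate generating function $\mathbf T_m(z,u) = \sum_{p,d} T_{m,p}^{(d)} z^p u^d$ in which $u$ marks the root degree; since an $m$-labeled unranked tree is a root node carrying a (possibly empty) sequence of subtrees, each contributing a factor $z\mathbf T_m(z)$ and one extra unit of root degree, we get $\mathbf T_m(z,u) = m/(1 - u z\mathbf T_m(z))$. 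Then $\sigma_p = [z^p]\,\partial_u\mathbf T_m(z,u)\big|_{u=1} = [z^p]\, \dfrac{m z\mathbf T_m(z)}{(1-z\mathbf T_m(z))^2}$, and using $\mathbf T_m(z) = m/(1-z\mathbf T_m(z))$ --- equivalently the polynomial functional equation $\mathbf T_m = m + z\mathbf T_m^2$ --- this series simplifies to $\tfrac{z}{m}\mathbf T_m(z)^3 = \frac{\mathbf T_m(z)-m}{mz} - \mathbf T_m(z)$, whence $\sigma_p = \frac1m T_{m,p+1} - T_{m,p}$. A one-line computation with the closed form~\eqref{eq:T_m2} then gives $\sigma_p = \frac{3p}{p+2}T_{m,p}$ (valid also for $p=0$, where both sides vanish). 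Feeding this back into~\eqref{formula-E(z)^U-general}, and noting that the $p=0$ term is zero so the summation may be started at $p=0$, produces the claimed formula for $\mathbf E_m^{\mathcal T}(z)$.

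The computation carries no real difficulty: once the accumulated-root-degree identity is established, everything is routine substitution, and that identity follows from a standard bivariate-generating-function manipulation together with the functional equation for $\mathbf T_m$. The only point requiring care is the coefficient bookkeeping --- not losing the constant term $m$ of $\mathbf T_m$ and handling the index shift in $T_{m,p+1}$ correctly --- but this runs exactly parallel to the binary-tree case already treated in Theorem~\ref{lemma:binary_K}.
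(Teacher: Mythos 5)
Your proof is correct and follows the same overall route as the paper: both formulas are obtained by substituting Lemma~\ref{lemma:C^U_n} into the general identities~\eqref{formula-K(z)^U-general} and~\eqref{formula-E(z)^U-general}, with the only substantive work being the accumulated root-degree identity $\sum_{d\ge 1} d\,T^{(d)}_{m,p}=\frac{3p}{p+2}T_{m,p}$. The one place you diverge is that the paper simply imports this identity from Corollary~4.1 of Dershowitz and Zaks (for $m=1$, then multiplies by $m^{p+1}$), whereas you derive it from scratch via the bivariate series $\mathbf T_m(z,u)=m/(1-uz\mathbf T_m(z))$, differentiation at $u=1$, and the functional equation $\mathbf T_m=m+z\mathbf T_m^2$; I checked the resulting chain $\partial_u\mathbf T_m|_{u=1}=\tfrac{z}{m}\mathbf T_m^3=\tfrac{\mathbf T_m-m}{mz}-\mathbf T_m$, hence $\sigma_p=\tfrac1m T_{m,p+1}-T_{m,p}=\tfrac{3p}{p+2}T_{m,p}$, and it is right, including the boundary case $p=0$. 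Your version buys a self-contained proof at the cost of a few extra lines; the paper's version is shorter but leans on an external reference. Either way the proposition follows.
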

\begin{proof}
The expression of $\mathbf \KN^{\mathcal T}_m(z)$ follows directly
from~\eqref{formula-K(z)^U-general} and Lemma~\ref{lemma:C^U_n}.

To express the series $\mathbf E^{\mathcal T}_m(z)$, we first need to
determine (according to~\eqref{formula-E(z)^U-general}) the number
$T^{(d)}_{m,p}$ of $m$-labeled unranked trees of size $p$ and root
degree $d$, or, more precisely, the sum
$$
\sum_{d\ge 1} d \cdot T^{(d)}_{m,p}
$$
for any $p\ge 1$. This is done in~\cite[Corollary~4.1]{DBLP:journals/dm/DershowitzZ80} in the case
$m=1$. It
suffices to multiply by $m^{p+1}$ to obtain the general case:
$$
\sum_{d\ge 1} d \cdot T^{(d)}_{m,p} 
%
= \frac{3pT_{m,p} }{p+2}.
$$
Combining~\eqref{formula-E(z)^U-general} and Lemma~\ref{lemma:C^U_n}
now gives the expression of $\mathbf E^{\mathcal T}_m(z)$.
\qed
\end{proof}
\subsection{Asymptotic results}\label{subsection:asymptotic}
In this section we 
state asymptotic results for the average 
node and edge sizes of the dag
of $m$-labeled binary trees, and of $m$-labeled unranked trees.
The proofs are rather involved and assume some knowledge in analytic combinatorics \cite{AnalyticCombinatorics}.
Therefore, the proofs are  are given in
the Appendix.

\subsubsection{Binary trees}
\begin{theorem}\label{theorem:asymptotic_binary_node}
The average number of nodes in the minimal dag of an $m$-labeled binary tree
of size $n$  satisfies
\begin{equation}
   \bar{\KN }_{m,n}^{\mathcal{B}} 
   =  
2 
\kappa_m\frac{n}{\sqrt{\ln n}} \left( 1+O \left( \frac{1}{\ln n} \right) \right)
 \quad \text{with} \quad 
\kappa_m = 
\sqrt{\frac{\ln (4m)}{\pi }}.\label{eq:asymptotic_binary_node}
\end{equation}
\end{theorem}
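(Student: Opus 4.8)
The plan is to extract the asymptotics of $\bar{N}^{\mathcal B}_{m,n} = N^{\mathcal B}_{m,n}/B_{m,n}$ by singularity analysis applied to the explicit generating function
$$
\mathbf N^{\mathcal B}_m(z) = \frac{1}{2mz^2}\sum_{p\ge 0} B_{m,p}\bigl(\sqrt{1-4mz+4mz^{p+2}} - \sqrt{1-4mz}\bigr)
$$
obtained in Theorem~\ref{lemma:binary_K}. First I would recall that $\mathbf B_m(z)$ has its dominant singularity at $z=\rho:=1/(4m)$, of square-root type, with $B_{m,n}\sim c\, (4m)^n n^{-3/2}$ by standard transfer theorems; so the denominator is under control and everything reduces to the behaviour of $[z^n]\mathbf N^{\mathcal B}_m(z)$ near $z=\rho$. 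The key difficulty, already familiar from \cite{FlaSipStey1990}, is that $\mathbf N^{\mathcal B}_m$ is an infinite sum of functions, each singular at $z=\rho$, and one cannot naively exchange the sum with coefficient extraction: the contributions of the terms with $p$ of order $\log n$ are exactly the ones that matter, and they must be estimated uniformly.

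The core of the argument is therefore a careful splitting of the sum over $p$ at a cutoff $p_0 = p_0(n)$ chosen of order $\log n$ (more precisely around $\log_{4m} n$, so that $(4m z)^{p}$ with $z$ near $\rho$ transitions from $\approx 1$ to small). For each fixed $p$, I would write $\sqrt{1-4mz+4mz^{p+2}}-\sqrt{1-4mz}$ and analyze it near $z=\rho$: when $1-4mz$ is small compared to $4mz^{p+2}\approx (4m)^{-(p+1)}$ the first radical behaves like $\sqrt{4m\rho^{p+2}}$ (a constant, analytic contribution) while for $1-4mz$ large compared to that term it nearly cancels the second radical. This produces, per $p$, a contribution to $[z^n]$ of order roughly $(4m)^n$ times a factor that decays in $p$ like $(4m)^{-p/2}$ times polynomial corrections; summing the geometric-type series in $p$ and dividing by $B_{m,n}\sim c(4m)^n n^{-3/2}$, the dominant balance yields the $n/\sqrt{\ln n}$ order. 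Identifying the constant $2\kappa_m$ with $\kappa_m=\sqrt{\ln(4m)/\pi}$ comes from turning the $p$-sum, after the change of variable aligning $p$ with $\log n$, into a Gaussian-type integral; the $\sqrt{\pi}$ and the $\ln(4m)$ in $\kappa_m$ are precisely the normalizing constant and the scale of that integral.

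The main obstacle — and the "gap" the authors say they are filling relative to \cite{FlaSipStey1990} — is making the two tail estimates rigorous and uniform: (i) bounding the total contribution of $p > p_0$, where individually the singular expansions are still valid but one needs a summable majorant uniform in the distance to $\rho$ (a Cauchy-coefficient bound on a suitably chosen contour, or a uniform version of the transfer lemma, does this), and (ii) controlling the $p \le p_0$ block, where each term is handled by singularity analysis but the error terms $O(\cdot)$ must be shown to aggregate to the claimed $O(1/\ln n)$ relative error rather than something larger. Concretely I would: (1) fix $\delta$-domains (pac-man / Camembert contours) on which all the radicals $\sqrt{1-4mz+4mz^{p+2}}$ admit uniform expansions; (2) apply the transfer theorem term-by-term for $p\le p_0$ to get $[z^n]$ of each summand with explicit error; (3) sum these using the Euler–Maclaurin / Laplace-method comparison with $\int e^{-t^2}\,dt$ to get the leading constant $2\kappa_m$; (4) bound the $p>p_0$ tail by a direct contour-integral estimate showing it is $o\bigl((4m)^n n^{-3/2}\cdot n/\sqrt{\ln n}\bigr)$; (5) divide by $B_{m,n}$ using its own known asymptotics. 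I expect step (4), the uniform tail bound, to be where most of the technical care goes, since it is precisely the point glossed over in the original reference.
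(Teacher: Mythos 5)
Your overall strategy---reduce everything to the singular behaviour of $\mathbf N^{\mathcal B}_m(z)$ near $z=1/(4m)$, split the sum over $p$ at a threshold of logarithmic order, control the two blocks uniformly, and transfer---matches the paper's. The organizational difference is real but benign: you propose to split the coefficient sum at a cutoff $p_0(n)$ and apply the transfer theorem term by term, whereas the paper splits the generating function itself at a $z$-dependent threshold $n(z)=\lfloor\ln|1-4mz|/\ln|z|\rfloor$, proves the single singular expansion of Theorem~\ref{lemme1-fl}, and transfers once at the end. The paper's version deliberately avoids the issue you flag in your own step (2), namely making the transfer theorem uniform over $O(\log n)$ different functions; doing it your way, that uniformity is exactly where the argument could silently break.

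The genuine gap is in your identification of the dominant balance (your steps (2)--(3)). You assert that the per-$p$ contribution to $[z^n]$ decays like $(4m)^{-p/2}$ and that the constant $2\kappa_m$ emerges from a Gaussian-type integral over the $p$-sum. Neither is correct, and the stated balance does not even give the right order: a convergent geometric series in $p$ times $(4m)^n$ would yield $N^{\mathcal B}_{m,n}=\Theta((4m)^n)$, hence $\bar N^{\mathcal B}_{m,n}=\Theta(n^{3/2})$ after dividing by $B_{m,n}\sim (4m)^{n+1}/(\sqrt{\pi}\,n^{3/2})$, which overshoots $n/\sqrt{\ln n}$. In fact the main term is carried entirely by the \emph{linearized tail}: for $p$ beyond the threshold, $\sqrt{1+u_p}-1\approx u_p/2$ with $u_p=4mz^{p+2}/(1-4mz)$, and that block collapses to $\frac{m}{\sqrt{1-4mz}}\sum_{p>n(z)}B_p(mz)^p$, the term $\mathbf N^{(2)}$ of \eqref{K2}. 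Since $B_p\sim 4^{p+1}/(\sqrt{\pi}\,p^{3/2})$ and $4mz\approx 1$ there, this is governed by the slowly converging tail $\sum_{p>n(z)}p^{-3/2}\sim 2\,n(z)^{-1/2}$, spread over all $p>n(z)$ rather than concentrated in a Laplace window; the factor $\sqrt{\ln(4m)}$ in $\kappa_m$ enters only because $n(z)\approx\ln((1-4mz)^{-1})/\ln(4m)$, and the $\sqrt{\pi}$ comes from Stirling in $B_p$ together with the $\Gamma(1/2)$ of the Hankel-contour estimate \eqref{K2-est}. The head $p\le n(z)$ and the quadratic remainder of the tail are then dismissed by the elementary bounds $|\sqrt{1+u}-1|\le\sqrt{|u|}$ and $|\sqrt{1+u}-1-u/2|\le|u|^2/2$ (Lemmas~\ref{lem:b1} and~\ref{lem:b3}) combined with the uniform partial-sum estimate of Lemma~\ref{lem:sum-div}. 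Your instinct that the tail bound is the delicate point is right, but without correctly locating which block produces the main term, the computation of the constant $2\kappa_m$ in your step (3) would fail.
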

The proof is an application of the \textit{singularity
analysis} of Flajolet and Odlyzko, described
in~\cite[Ch.~VI]{AnalyticCombinatorics}.  One first determines
the singular behavior of the series $\mathbf \KN_m ^{\mathcal{B}}(z)$ 
given by Theorem~\ref{lemma:binary_K} in the neighborhood of its
\emm dominant, singularities 
(that is, singularities of minimal modulus). 
%
\begin{theorem}\label{lemme1-fl}
The generating function $\mathbf \KN _m^{\mathcal{B}}(z)$
 is analytic in the domain $D$ 
defined by $|z|< \frac{1}{2m}$ and $z \notin
[ \frac{1}{4m},\frac{1}{2m} ]$. As $z$ tends to $ \frac{1}{4m}$ in $D$,
one has
\begin{equation*}
\mathbf  \KN_m ^{\mathcal{B}}(z) = 
\frac{8 \, m \, \kappa_m } {\sqrt{(1-4m z ) \ln ((1-4mz)^{-1})}} + O \left(\frac{1}{\sqrt{(1-4mz)\ln^3((1-4mz)^{-1})}} \right),
\end{equation*}
where $\kappa_m$ is defined as in Theorem~\ref{theorem:asymptotic_binary_node}.
\end{theorem}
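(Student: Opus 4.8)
The plan is to perform a singularity analysis of the explicit series
$$
\mathbf \KN_m ^{\mathcal{B}}(z) = \frac{1}{2mz^2} \sum_{p \ge 0} B_{m,p}
  \left( \sqrt{1-4mz+4mz^{p+2}} -\sqrt{1-4mz} \right)
$$
from Theorem~\ref{lemma:binary_K}. The dominant singularity is at $z=\frac{1}{4m}$, coming from the square roots; the prefactor $\frac{1}{2mz^2}$ is analytic and non-zero there, so one can work with the sum and multiply by $\frac{1}{2m(1/4m)^2}=8m$ at the end. First I would set $s=1-4mz$ and write each summand in the form $\sqrt{s+4mz^{p+2}}-\sqrt{s}$; the natural scale is to compare $4mz^{p+2}$ (roughly $4m(4m)^{-p-2}=(4m)^{-p-1}$ near the singularity) against $s$. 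The key heuristic is that the $p$-th term contributes significantly only when $4mz^{p+2}$ is of order $s$ or larger, i.e.\ when $p$ is up to about $\log_{4m}(1/s)$; this is exactly the mechanism producing the $1/\sqrt{\ln}$ factor.

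The main technical step is to split the sum at a threshold $p_0 \approx \log_{4m}((1-4mz)^{-1})$ and estimate the two pieces. For small $p$ (where $4mz^{p+2}$ dominates $s$), $\sqrt{s+4mz^{p+2}} - \sqrt{s} \approx \sqrt{4mz^{p+2}} = 2\sqrt{m}\,z^{(p+2)/2}$, and after inserting the asymptotics $B_{m,p} \sim \frac{(4m)^{p+1}}{\sqrt{\pi}\,(p+1)^{3/2}}$ (from Stirling applied to \eqref{eq:B_m2}) the dominant behavior is governed by $\sum_p B_{m,p} z^{(p+2)/2}$, which one recognizes as essentially $\mathbf B_m(\sqrt z)$ up to elementary factors, evaluated near its own singularity. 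For large $p$ (where $s$ dominates), one uses $\sqrt{s+4mz^{p+2}} - \sqrt{s} \approx \frac{2mz^{p+2}}{\sqrt s}$ and sums the resulting geometric-type tail. Balancing the two contributions, and carefully tracking the logarithmic factor that emerges because the number of ``active'' terms grows like $\ln(1/s)$, should yield the leading term $\dfrac{8m\,\kappa_m}{\sqrt{(1-4mz)\ln((1-4mz)^{-1})}}$ with $\kappa_m = \sqrt{\ln(4m)/\pi}$, and an error term one order of $\ln$ smaller.

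A cleaner route, which I would actually pursue, is to avoid the explicit term-by-term split and instead recognize that
$$
\sqrt{1-4mz+4mz^{p+2}} = \sqrt{1-4mz}\,\sqrt{1 + \tfrac{4mz^{p+2}}{1-4mz}},
$$
and use an integral representation or the Mellin transform in the variable $p$ to handle the sum $\sum_p B_{m,p}(\cdots)$ uniformly. Since $B_{m,p}(4m)^{-p}$ decays polynomially, the sum behaves like an integral $\int_0^\infty \frac{dp}{p^{3/2}}\big(\sqrt{1+e^{-p\ln(4m)}/s}-1\big)$ after the substitution suggested above, and a change of variables $u = p\ln(4m) + \ln s$ reduces this to a convergent constant times $1/\sqrt{s\ln(1/s)}$ — the constant being exactly $\kappa_m$ after evaluating the resulting Beta/Gamma-type integral. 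I expect the main obstacle to be making the interchange of summation and the singular limit rigorous and obtaining the \emph{uniform} error bound $O\big(1/\sqrt{(1-4mz)\ln^3((1-4mz)^{-1})}\big)$ valid throughout the indented domain $D$ (not merely along the real axis); this requires controlling the tail of the sum and the approximation errors uniformly for complex $z$ approaching $\frac{1}{4m}$, which is precisely the kind of estimate glossed over in \cite{FlaSipStey1990} and which the paper promises to supply in full in the Appendix.
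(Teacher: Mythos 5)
Your first route shares its skeleton with the paper's proof (split the sum at a threshold $n\approx\log_{4m}\bigl((1-4mz)^{-1}\bigr)$, bound the head via $|\sqrt{1+u}-1|\le\sqrt{|u|}$, linearize $\sqrt{1+u}-1\approx u/2$ in the tail), but you misidentify where the leading term comes from, and this is not cosmetic. The head $p\le n$ is \emph{pure error}: the truncated sum $\sum_{p\le n}B_{m,p}|z|^{p/2}$ is dominated by its last term and contributes only $O\bigl((1-4mz)^{-1/2}\ln^{-3/2}((1-4mz)^{-1})\bigr)$. In particular the series $\sum_p B_{m,p}z^{(p+2)/2}$ that you want to ``recognize as $\mathbf B_m(\sqrt z)$'' diverges at $z=1/(4m)$ (its terms grow like $(4m)^{p/2}p^{-3/2}$, since $\sqrt z\to 1/(2\sqrt m)$ lies outside the disk of convergence of $\mathbf B_m$), so only the truncated partial sums make sense, and controlling them uniformly in $n$ and $z$ is exactly the content of the paper's Lemma~\ref{lem:sum-div}, proved by a Cauchy integral over a Hankel contour. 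The entire main term comes from the linearized tail
$$
\frac{1}{\sqrt{1-4mz}}\sum_{p>n}B_{m,p}\,z^{p}\;\sim\;\frac{\mathrm{const}\cdot(4mz)^{n}}{\sqrt{(1-4mz)}\,\sqrt{n}},
$$
and the factor $\ln^{-1/2}$ is produced by $\sum_{p>n}p^{-3/2}\sim 2n^{-1/2}$ with $n\sim\ln((1-4mz)^{-1})/\ln(4m)$ --- not by ``the number of active terms growing like $\ln(1/s)$'' (a count of comparable active terms would produce a factor of $\ln$, not $\ln^{-1/2}$), and there is nothing to ``balance'': one piece dominates the other by a full factor of $\ln$. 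Your second, ``cleaner'' route inherits the same confusion: the integral $\int_0^\infty p^{-3/2}\bigl(\sqrt{1+e^{-p\ln(4m)}/s}-1\bigr)\,dp$ diverges at $p=0$, and after your substitution the would-be Beta/Gamma-type integral is $\int e^{u}\bigl(\sqrt{1+e^{-u}}-1\bigr)\,du$, which diverges at $+\infty$; the constant $\kappa_m=\sqrt{\ln(4m)/\pi}$ is not the value of any such integral but arises simply from rewriting $n^{-1/2}$ in terms of $\ln((1-4mz)^{-1})$.

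Two further items the theorem asserts and the proposal does not supply. First, analyticity of $\mathbf N_m^{\mathcal{B}}(z)$ on all of $D$: one must show that $1-4mz+4mz^{p+2}$ has no zero in $D$ and that $u_p=4mz^{p+2}/(1-4mz)$ avoids the branch cut $(-\infty,-1]$ (the paper does this with Rouch\'e's theorem), and then prove locally uniform convergence of the series of analytic functions, which requires subtracting the linear term and using $|\sqrt{1+u}-1-u/2|\le|u|^2/2$; term-by-term convergence at real $z$ is not enough. Second, the tail estimate above must hold uniformly for \emph{complex} $z\to 1/(4m)$ in $D$ with $n(1-4mz)\to 0$, which again needs a Hankel-contour computation of $[x^{n+2}]\,\sqrt{1-4xy}/(1-x)$-type coefficients. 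You correctly flag uniformity in $D$ as the main obstacle, but flagging it is not the same as overcoming it, and as the plan stands the leading constant would not come out correctly.
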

Granted this proposition,
one can use 
the Transfer Theorem~VI.4 of~\cite[p.~393]{AnalyticCombinatorics},
combined with the estimates of
the coefficients of elementary series 
(see~\cite[Thm.~VI.2, p.~385]{AnalyticCombinatorics}) 
to obtain the asymptotic behavior of the accumulated node size of minimal dags 
of $m$-labeled binary trees of size $n$:
$$
\KN _{m,n}^{\mathcal{B}} = [z^n]\mathbf \KN_m ^{\mathcal{B}}(z)
= \frac{2\kappa_m}{\sqrt{\pi}}
\frac {4^{n+1} m^{n+1}}{\sqrt{n \ln n}}\left(1+ O\left(\frac{1}{\ln n}\right)\right).
$$
Since the numbers $B_{m,n}$, given by~\eqref{eq:B_m2}, satisfy
$$
B_{m,n} = \frac{4^{n+1} m^{n+1}}{\sqrt \pi n^{3/2}}\left(1+ O\left(\frac{1}{n}\right)\right),
$$
this gives Theorem~\ref{theorem:asymptotic_binary_node}.
The proof of Theorem~\ref{lemme1-fl} can be found in the Appendix,
Section~\ref{proof:fl}
(for $m=1$) and Section~\ref{sec:binary-m} (for general values of $m$).

\smallskip
For the  the edge size, one obtains in a similar fashion the following
result. 
\begin{theorem}\label{lemma:asymptotic_binary_edge}
The average number of edges in the minimal dag of an $m$-labeled binary tree
of size $n$  satisfies
\begin{equation}
   \bar{E}_{m,n}^{\mathcal{B}} 
   = {3  \kappa_m} \frac{n}{\sqrt{\ln n}} \left( 1+O\left( \frac{1}{\ln n} \right) \right)
\end{equation}
with $\kappa_m$ as in Theorem~\ref{theorem:asymptotic_binary_node}.
\end{theorem}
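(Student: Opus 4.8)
The plan is to follow the same route as for Theorem~\ref{theorem:asymptotic_binary_node}: first obtain a singular expansion of the generating function $\mathbf E_m^{\mathcal B}(z)$ of Theorem~\ref{lemma:binary_K} at its dominant singularity $z=\frac1{4m}$, then invoke the Transfer Theorem of singularity analysis and divide by $B_{m,n}$. The key observation is that $\mathbf E_m^{\mathcal B}(z)$ differs from $\tfrac32\,\mathbf N_m^{\mathcal B}(z)$ only by a term that is singular of strictly lower order, so that Theorem~\ref{lemme1-fl} does almost all of the work and only the estimate of one auxiliary series is genuinely new.

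Concretely, one starts from
\[
\mathbf E_m^{\mathcal B}(z)=\frac{3}{2mz^2}\sum_{p\ge 1}\frac{p}{2p+1}\,B_{m,p}\bigl(\sqrt{1-4mz+4mz^{p+2}}-\sqrt{1-4mz}\,\bigr)
\]
and writes $\frac{p}{2p+1}=\frac12-\frac1{2(2p+1)}$. Since each summand $B_{m,p}(\sqrt{1-4mz+4mz^{p+2}}-\sqrt{1-4mz})$ with a \emph{fixed} $p$ is analytic at $z=\frac1{4m}$ (there one has $1-4mz+4mz^{p+2}=(4m)^{-(p+1)}\ne 0$, so the singularity comes only from the infinite sum), the $p=0$ term of $\mathbf N_m^{\mathcal B}(z)$ is harmless and one gets
\[
\mathbf E_m^{\mathcal B}(z)=\tfrac32\,\mathbf N_m^{\mathcal B}(z)\;-\;\tfrac34\,\mathbf R(z)\;+\;\bigl(\text{a function analytic at }z=\tfrac1{4m}\bigr),
\]
where $\mathbf R(z)=\frac1{mz^2}\sum_{p\ge1}\frac{B_{m,p}}{2p+1}\bigl(\sqrt{1-4mz+4mz^{p+2}}-\sqrt{1-4mz}\,\bigr)$. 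By Theorem~\ref{lemme1-fl}, $\tfrac32\mathbf N_m^{\mathcal B}(z)=\frac{12\,m\,\kappa_m}{\sqrt{(1-4mz)\ln((1-4mz)^{-1})}}+O\!\bigl(\frac1{\sqrt{(1-4mz)\ln^3((1-4mz)^{-1})}}\bigr)$ as $z\to\frac1{4m}$ in the domain $D$ of Theorem~\ref{lemme1-fl}. It therefore remains to show that $\mathbf R$ is analytic in the same slit disk (which is immediate, being built from the same radicals) and that $\mathbf R(z)=O\!\bigl(\frac1{\sqrt{1-4mz}\,(\ln\frac1{1-4mz})^{3/2}}\bigr)$ as $z\to\frac1{4m}$ in $D$.

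This last bound is the technical heart and, I expect, the main obstacle, because it requires re-running (a mild variant of) the delicate term-by-term estimates of the Appendix proof of Theorem~\ref{lemme1-fl} rather than using that theorem as a black box. The only new ingredient is the extra weight $(2p+1)^{-1}$. As $z\to\frac1{4m}$ the sum over $p$ is concentrated on indices of order $p^{\ast}:=\ln((1-4mz)^{-1})/\ln(4m)$: for $p\lesssim p^{\ast}$ the terms $B_{m,p}(\sqrt{1-4mz+4mz^{p+2}}-\sqrt{1-4mz})$ grow geometrically (ratio $\sqrt{4m}$, up to polynomial factors), while for $p\gtrsim p^{\ast}$ they decay, after inserting the weight, like $e^{-p(1-4mz)}/p^{5/2}$ up to factors $(1-4mz)^{-1/2}$. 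In the relevant window $(2p+1)^{-1}\asymp 1/\ln((1-4mz)^{-1})$, so compared with the expansion of $\mathbf N_m^{\mathcal B}(z)$ one gains exactly one extra power of $\ln\frac1{1-4mz}$, which yields the claimed order. (A cleaner-looking alternative is to use $\frac1{2p+1}=\int_0^1x^{2p}\,dx$ and estimate the resulting integral, but this still needs the same kind of uniform control over $p$ and over $z$ in $D$.)

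Granting the bound on $\mathbf R(z)$, one concludes exactly as in the node case: the expansion $\mathbf E_m^{\mathcal B}(z)=\frac{12\,m\,\kappa_m}{\sqrt{(1-4mz)\ln((1-4mz)^{-1})}}+O\!\bigl(\frac1{\sqrt{(1-4mz)\ln^3((1-4mz)^{-1})}}\bigr)$ together with the Transfer Theorem~VI.4 and the coefficient estimates of Theorem~VI.2 of~\cite{AnalyticCombinatorics} gives
\[
E_{m,n}^{\mathcal B}=[z^n]\mathbf E_m^{\mathcal B}(z)=\frac{3\kappa_m}{\sqrt\pi}\,\frac{4^{n+1}m^{n+1}}{\sqrt{n\ln n}}\left(1+O\!\left(\frac1{\ln n}\right)\right),
\]
and dividing by $B_{m,n}=\frac{4^{n+1}m^{n+1}}{\sqrt\pi\,n^{3/2}}\bigl(1+O(1/n)\bigr)$ from~\eqref{eq:B_m2} yields $\bar E_{m,n}^{\mathcal B}=3\kappa_m\,\frac{n}{\sqrt{\ln n}}\bigl(1+O(1/\ln n)\bigr)$, as claimed.
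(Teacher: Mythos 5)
Your proposal is correct and reaches the result by what is, at bottom, the same method as the paper's proof in Section~\ref{sec:binary-e}: everything reduces to the fact that $\tfrac{3p}{2p+1}\to\tfrac32$ and that the deviation from $\tfrac32$ costs one extra power of $\ln\frac{1}{1-4mz}$ in the singularity analysis. The packaging differs. The paper substitutes $B_p\mapsto \bar B_p=\tfrac{3p}{2p+1}B_p$, records the closed form $\sum_p\bar B_pz^p=\frac{1-3z-(1-z)\sqrt{1-4z}}{z^2}$, and re-runs the six-step proof of Theorem~\ref{lemme1-fl}; the factor $3/2$ then emerges in the Cauchy-integral estimate of the dominant piece, i.e.\ in~\eqref{K2-est} and~\eqref{K2-est-m}, because the singular part of the new weight series is $\tfrac32$ times that of $\mathbf B$. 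You instead split $\tfrac{p}{2p+1}=\tfrac12-\tfrac1{2(2p+1)}$, quote Theorem~\ref{lemme1-fl} as a black box for the $\tfrac32\mathbf N_m^{\mathcal B}$ part, and bound the correction series $\mathbf R$. This buys a cleaner separation of main term and error, but the work does not disappear: as you anticipate, $\mathbf R$ cannot be bounded by naively majorizing $|z|^p$, since $\sum_p B_{m,p}|z|^p/(2p+1)$ diverges for $|z|>1/(4m)$ and $D$ contains such points arbitrarily close to $1/(4m)$; one must redo the three-way splitting with the threshold $n(z)$, where the weight $\tfrac1{2p+1}\asymp 1/\ln\frac{1}{1-4mz}$ on the critical range delivers the extra logarithm, exactly as you describe. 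One small imprecision: a fixed-$p$ summand $B_{m,p}\bigl(\sqrt{1-4mz+4mz^{p+2}}-\sqrt{1-4mz}\bigr)$ is \emph{not} analytic at $z=\tfrac1{4m}$, because the subtracted $\sqrt{1-4mz}$ has a branch point there; the discarded $p=0$ term is only analytic plus $O\bigl(\sqrt{1-4mz}\bigr)$. It is still harmless, since after transfer it contributes $O\bigl((4m)^nn^{-3/2}\bigr)$ to $E_{m,n}^{\mathcal B}$, hence $O(1)$ to the average, which is absorbed in the $O(n/\ln^{3/2}n)$ error term.
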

The proof 
 is a simple adaptation of the proof of
 Theorem~\ref{theorem:asymptotic_binary_node} and can be found in
Section~\ref{sec:binary-e} of
 the Appendix.
Note the factor $3/2$ between the node and edge sizes, which could
be predicted by comparing the  expressions of $\mathbf \KN^{\mathcal B}_m(z)$
and $\mathbf E^{\mathcal B}_m(z)$
in Theorem~\ref{lemma:binary_K}.

\subsubsection{Unranked trees}
      
\begin{theorem}\label{theorem:asymtotic_node_size_unranked}
The average number of nodes in the minmal dag of an $m$-labeled unranked tree
of size $n$  satisfies
  \begin{equation}
    \bar{\KN }^{\mathcal{T}}_{m,n} = {\kappa_m } \frac{n}{\sqrt{\ln n}} \left( 1+O\left( \frac{1}{\ln n} \right) \right),
  \end{equation}
with $\kappa_m$ as in Theorem~\ref{theorem:asymptotic_binary_node}.
\end{theorem}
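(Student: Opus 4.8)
The plan is to carry over, essentially verbatim, the singularity-analysis argument that is used in the Appendix to derive Theorem~\ref{theorem:asymptotic_binary_node} from Theorem~\ref{lemme1-fl}. The first and main task is to prove the unranked analogue of Theorem~\ref{lemme1-fl}: the series $\mathbf N_m^{\mathcal T}(z)$ of Proposition~\ref{prop:NE-unranked} is analytic in the domain $|z|<1/(2m)$, $z\notin[1/(4m),1/(2m)]$, and, as $z\to\rho:=1/(4m)$ inside this domain,
\[
\mathbf N_m^{\mathcal T}(z)=\frac{m\,\kappa_m}{\sqrt{(1-4mz)\ln((1-4mz)^{-1})}}+O\!\left(\frac{1}{\sqrt{(1-4mz)\ln^{3}((1-4mz)^{-1})}}\right),
\]
with $\kappa_m=\sqrt{\ln(4m)/\pi}$ as in Theorem~\ref{theorem:asymptotic_binary_node}. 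Granting this, the rest is routine: the Transfer Theorem~\cite[Thm.~VI.4]{AnalyticCombinatorics}, together with the coefficient asymptotics of elementary series~\cite[Thm.~VI.2]{AnalyticCombinatorics}, gives
\[
N_{m,n}^{\mathcal T}=[z^n]\mathbf N_m^{\mathcal T}(z)=\frac{\kappa_m}{\sqrt{\pi}}\cdot\frac{4^{n}m^{n+1}}{\sqrt{n\ln n}}\left(1+O\!\left(\frac1{\ln n}\right)\right),
\]
and dividing by $|\mathcal T_{m,n}|=T_{m,n}=\tfrac1{n+1}\binom{2n}{n}m^{n+1}=\tfrac{4^{n}m^{n+1}}{\sqrt\pi\,n^{3/2}}(1+O(1/n))$ (from Lemma~\ref{lemma:gf_labeled_unranked_trees} and Stirling's formula) yields $\bar N_{m,n}^{\mathcal T}=\kappa_m\,\tfrac{n}{\sqrt{\ln n}}(1+O(1/\ln n))$, which is the claim. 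This mirrors exactly how, for binary trees, the constant $8m\kappa_m$ of Theorem~\ref{lemme1-fl} leads to the constant $2\kappa_m$ of Theorem~\ref{theorem:asymptotic_binary_node}.

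To establish the singular expansion I would start from Proposition~\ref{prop:NE-unranked} and Lemma~\ref{lemma:C^U_n}, using $1-4mz+2z^{p+1}+z^{2p+2}=(1-4mz)+z^{p+1}(2+z^{p+1})$ to write
\[
\mathbf N_m^{\mathcal T}(z)=\frac1{2z}\left(z\,\mathbf T_m(z)+\sum_{p\ge 0}T_{m,p}\left(\sqrt{(1-4mz)+z^{p+1}(2+z^{p+1})}-\sqrt{1-4mz}\right)\right).
\]
The term $z\,\mathbf T_m(z)$ has only an ordinary square-root singularity at $\rho$ (Lemma~\ref{lemma:gf_labeled_unranked_trees}), contributing $O(\sqrt{1-4mz})$, which is negligible. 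For the series, set $\varepsilon=1-4mz$ and split the summation at $p^*\approx\ln(\varepsilon^{-1})/\ln(4m)$, the index at which $z^{p+1}$ crosses the value $\varepsilon$. In the range $p\le p^*$ one bounds each summand by $\sqrt{z^{p+1}(2+z^{p+1})}=O(z^{(p+1)/2})$ and, using $T_{m,p}=\tfrac{m(4m)^p}{\sqrt\pi\,p^{3/2}}(1+O(1/p))$, checks that this part is $O(\varepsilon^{-1/2}(\ln\varepsilon^{-1})^{-3/2})$, hence of lower order. In the range $p>p^*$ one linearizes $\sqrt{\varepsilon+z^{p+1}(2+z^{p+1})}-\sqrt{\varepsilon}=\tfrac{z^{p+1}}{\sqrt\varepsilon}(1+o(1))$, so that $\mathbf N_m^{\mathcal T}(z)$ equals, up to lower-order terms, $\tfrac1{2\sqrt\varepsilon}\sum_{p>p^*}T_{m,p}z^{p}$; since $T_{m,p}z^p\approx\tfrac m{\sqrt\pi\,p^{3/2}}(4mz)^p\approx\tfrac m{\sqrt\pi\,p^{3/2}}e^{-p\varepsilon}$ and $p^*=O(\ln\varepsilon^{-1})\ll 1/\varepsilon$, comparing the tail with $\int_{p^*}^{\infty}x^{-3/2}e^{-x\varepsilon}\,dx$ gives $\sum_{p>p^*}p^{-3/2}e^{-p\varepsilon}\sim 2(p^*)^{-1/2}=2\sqrt{\ln(4m)/\ln\varepsilon^{-1}}$, whence $\mathbf N_m^{\mathcal T}(z)\sim m\kappa_m/\sqrt{\varepsilon\ln\varepsilon^{-1}}$. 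Retaining the next order in each of these estimates, exactly as in the binary case, produces the stated error term.

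The main obstacle --- and this is precisely the part that is left only sketched in~\cite{FlaSipStey1990} --- is to make all of the above \emph{uniform} in $z$ over the whole analyticity domain above, and not merely along the real axis, so that the Transfer Theorem actually applies: one must control the split of the sum, the linearization for $p>p^*$, and the tail-versus-integral comparison for complex $z$ close to $\rho$, and track all error terms tightly enough to reach relative precision $O(1/\ln n)$. Since each of these ingredients is the unranked counterpart of an estimate already carried out for the bdag (Theorem~\ref{lemme1-fl}, whose proof is in the Appendix), I would reuse that machinery almost verbatim; the genuinely new inputs are only the explicit series of Lemma~\ref{lemma:C^U_n} and the Catalan-type estimate $T_{m,p}\sim m(4m)^p/(\sqrt\pi\,p^{3/2})$.
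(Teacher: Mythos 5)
Your plan coincides with the paper's own proof in Section~\ref{sec:unranked-n}: the same singular expansion $\mathbf N_m^{\mathcal T}(z)\sim m\kappa_m/\sqrt{(1-4mz)\ln((1-4mz)^{-1})}$ is established by the same decomposition (a negligible $\mathbf T_m$-type contribution, a bounded part for $p$ below a threshold of order $\ln(\varepsilon^{-1})/\ln(4m)$, a linearized tail giving the main term, and a quadratic remainder), followed by the Transfer Theorem and division by $T_{m,n}$; the paper likewise reuses the binary-case machinery (Lemmas~\ref{lem:b1}, \ref{lem:b3}, \ref{lem:sum-div} and the Hankel-contour estimate of $\mathbf \KN^{(2)}$) to obtain the required uniformity in $z$, which is exactly the obstacle you flag. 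The proposal is correct and takes essentially the same route.
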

Thus the average
 node size of  compressed unranked trees is about half
the node size of 
 compressed binary trees of the same size. Note  that the same ratio
holds between the heights of these trees~\cite{debruijn,flajolet-trees,marckert}.

The proof of Theorem~\ref{theorem:asymtotic_node_size_unranked}
 is very similar to the proof of
Theorem~\ref{theorem:asymptotic_binary_node}.  The required changes
are described in Section~\ref{sec:unranked-n}.

\begin{theorem}\label{thm:unranked-e}
The average number of edges in the minimal dag of an $m$-labeled unranked tree
of size $n$ satisfies 
  \begin{equation}
  \bar{E}_n^{\mathcal{T} }= 3 \kappa_m \frac{n}{\sqrt{\ln n}} \left( 1+O\left( \frac{1}{\ln n} \right) \right),
  \end{equation}
with $\kappa_m$ as in Theorem~\ref{theorem:asymptotic_binary_node}.
\end{theorem}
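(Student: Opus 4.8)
The plan is to follow the same route used for the node-size results (Theorems~\ref{theorem:asymptotic_binary_node} and~\ref{theorem:asymtotic_node_size_unranked}): determine the singular behaviour of the generating function $\mathbf E_m^{\mathcal T}(z)$ of Proposition~\ref{prop:NE-unranked} near its dominant singularity $z=\tfrac{1}{4m}$, then invoke singularity analysis (the Transfer Theorem~VI.4 together with the coefficient estimates of Theorem~VI.2 in~\cite{AnalyticCombinatorics}) to extract the accumulated edge size $E_{m,n}^{\mathcal T}=[z^n]\mathbf E_m^{\mathcal T}(z)$, and finally divide by $|\mathcal T_{m,n}|=T_{m,n}$, whose asymptotics are the classical $T_{m,n}=\frac{4^{n}m^{n+1}}{\sqrt{\pi}\,n^{3/2}}\bigl(1+O(1/n)\bigr)$.

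What makes this short is that $\mathbf E_m^{\mathcal T}(z)$ differs from the node-size series $\mathbf N_m^{\mathcal T}(z)$ of Proposition~\ref{prop:NE-unranked} only in the weight attached to the $p$-th summand, which is $\tfrac{3p}{p+2}T_{m,p}$ rather than $T_{m,p}$. Writing $\tfrac{3p}{p+2}=3-\tfrac{6}{p+2}$, I would split
\[
\mathbf E_m^{\mathcal T}(z)=3\,\mathbf N_m^{\mathcal T}(z)-6\,\mathbf R_m(z),\qquad
\mathbf R_m(z)=\frac{1}{2z}\sum_{p\ge 0}\frac{T_{m,p}}{p+2}\Bigl(z^{p+1}+\sqrt{1-4mz+2z^{p+1}+z^{2p+2}}-\sqrt{1-4mz}\Bigr),
\]
and then show that the correction $\mathbf R_m(z)$ is singularly negligible, i.e. contributes only to the $O(1/\ln n)$ relative error. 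Concretely, the target is: as $z\to\tfrac{1}{4m}$ inside a suitable domain ($|z|<\tfrac{1}{2m}$, $z\notin[\tfrac{1}{4m},\tfrac{1}{2m}]$),
\[
\mathbf R_m(z)=O\!\left(\frac{1}{\sqrt{(1-4mz)\,\ln^3\!\bigl((1-4mz)^{-1}\bigr)}}\right),
\]
which is exactly the order of the \emph{error} term already occurring in (the unranked analogue of) Theorem~\ref{lemme1-fl}, one logarithmic factor below the leading $(1-4mz)^{-1/2}\ln^{-1/2}\!\bigl((1-4mz)^{-1}\bigr)$ behaviour of $\mathbf N_m^{\mathcal T}(z)$. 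Granting this, the Transfer Theorem gives $[z^n]\mathbf R_m(z)=O\bigl(4^{n}m^{n}n^{-1/2}(\ln n)^{-3/2}\bigr)$, hence $E_{m,n}^{\mathcal T}=3\,N_{m,n}^{\mathcal T}\bigl(1+O(1/\ln n)\bigr)$; dividing by $T_{m,n}$ and using Theorem~\ref{theorem:asymtotic_node_size_unranked} then yields $\bar E_{m,n}^{\mathcal T}=3\,\bar N_{m,n}^{\mathcal T}\bigl(1+O(1/\ln n)\bigr)=3\kappa_m\frac{n}{\sqrt{\ln n}}\bigl(1+O(1/\ln n)\bigr)$, which is the claim.

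The hard part is the estimate on $\mathbf R_m(z)$, which I would obtain by the Flajolet--Sipala--Steyaert-style splitting that already underlies the node-size proof. Setting $\epsilon=1-4mz$ for real $z\uparrow\tfrac{1}{4m}$, one cuts the sum over $p$ at a threshold $p_0\asymp\ln(1/\epsilon)$, namely the $p$ at which $z^{p+1}$ and $\epsilon$ become comparable. For $p\le p_0$ every summand is analytic at $z=\tfrac{1}{4m}$ and geometrically small; for $p>p_0$ one expands the inner square root exactly as in the node-size argument. The regime $p>p_0$ is precisely the one producing the $(\epsilon\ln(1/\epsilon))^{-1/2}$ leading term of $\mathbf N_m^{\mathcal T}(z)$, and in $\mathbf R_m(z)$ each of its terms additionally carries the factor $\tfrac{1}{p+2}=O(1/\ln(1/\epsilon))$, uniformly over $p>p_0$; this uniform extra factor is exactly the claimed logarithmic gain. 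The genuinely technical points --- making the error terms uniform in $p$, and checking that $\mathbf R_m$ (like $\mathbf N_m^{\mathcal T}$) continues analytically to a $\Delta$-domain so that singularity analysis is applicable --- are routine but require the same care as in Sections~\ref{sec:unranked-n} and~\ref{sec:binary-e} of the Appendix, from which the whole argument can be adapted almost verbatim. Finally, the constant $3$ here (versus $\tfrac{3}{2}$ in the binary edge bound, Theorem~\ref{lemma:asymptotic_binary_edge}) is simply $\lim_{p\to\infty}\tfrac{3p}{p+2}=3$ as opposed to $\lim_{p\to\infty}\tfrac{3p}{2p+1}=\tfrac{3}{2}$, which is already visible from comparing the series in Proposition~\ref{prop:NE-unranked} with those in Theorem~\ref{lemma:binary_K}.
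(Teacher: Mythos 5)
Your proposal is correct, and it reaches the result by essentially the same observation the paper uses, packaged slightly differently. The paper's proof (Section~\ref{sec:unranked-e}) keeps the sum intact and simply replaces the weights $T_p$ by $\bar T_p=\frac{3p}{p+2}T_p$, then reruns the splitting argument of Section~\ref{sec:unranked-n}, noting that the only significant change is a factor $3$ in the Step-5 estimate of the dominant piece $\mathbf N_m^{(2)}$. You instead write $\frac{3p}{p+2}=3-\frac{6}{p+2}$, reduce to the already-proved Theorem~\ref{theorem:asymtotic_node_size_unranked}, and bound the correction series $\mathbf R_m(z)$ by one logarithmic factor below the main term; the key point in both arguments is identical, namely that the dominant contribution comes from indices $p\asymp\ln\bigl((1-4mz)^{-1}\bigr)$, where $\frac{3p}{p+2}=3+O(1/p)$, so the deviation from the constant weight $3$ costs a uniform factor $O\bigl(1/\ln((1-4mz)^{-1})\bigr)$ that is absorbed into the stated relative error. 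Your version has the small advantage of quoting the node-size theorem as a black box rather than re-tracing its six steps, at the price of having to verify separately that $\mathbf R_m$ continues to the same $\Delta$-domain and that the transfer theorem applies to its $O$-estimate --- checks you correctly flag and which go through exactly as in Sections~\ref{proof:fl} and~\ref{sec:unranked-n}.
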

In other words, asymptotically the edge size of compressed binary trees is equal to the edge size of compressed unranked trees.
The proof  of Theorem~\ref{thm:unranked-e}
is given in Section~\ref{sec:unranked-e}.

Table~\ref{table:asymtpics_overview} contains an overview of the results of this section.

\begin{table*}[t]
\centering
\begin{tabular}{|l|c|c|}
\hline
  & $\mathcal{B}_m$ & $\mathcal{T}_m$ \\ 
\hline
&&\\
$\displaystyle \bar{\KN }_{m,n}$
&  $\displaystyle 2 \kappa_m \frac{n}{\sqrt{\ln  n }} \left( 1 + O\left( \frac{1}{ \ln  n }\right) \right)$ 
&  $\displaystyle \kappa_m \frac{n}{\sqrt{\ln n}} \left( 1 + O\left( \frac{1}{ \ln
  n}\right) \right)$ \\ 
&&\\
$\displaystyle \bar{E}_{m,n}$
&  $\displaystyle3 \kappa_m \frac{n}{\sqrt{\ln  n }} \left( 1 + O\left( \frac{1}{ \ln  n }\right) \right)$
&  $\displaystyle3 \kappa_m \frac{n}{\sqrt{\ln n}} \left( 1 + O\left( \frac{1}{ \ln
  n}\right) \right)$\\ 
&&\\ \hline
\end{tabular}
\caption{\label{table:asymtpics_overview} Overview over the different asymptotics. Recall that 
$\kappa_m =  \sqrt{\frac{\ln 4m}{\pi }}$.}
\end{table*}

\section{DAG and string compression}\label{sec:dag_plus_string}

As for the hdag, consider the forest $\text{fcns}(t_1), \ldots, \text{fcns}(t_n)$ 
of the binary encodings of the right-hand sides $t_1, \ldots, t_n$ 
of the reduced
$0$-SLT grammar $\mathcal{G}_{\text{dag}(t)}^{\text{red}}$
for an unranked tree $t$. 
In the construction of the hdag we build the minimal dag of this forest. 
Therefore we only share repeated suffixes of child sequences, i.e., 
``right branching'' trees in the binary encodings. Such trees can in fact be considered
as \emph{strings}. We now want to generalize the sharing of 
suffixes. 
Instead of only sharing suffixes of child sequences, we now apply
an arbitrary grammar-based string compressor to (a concatenation of)
the child sequences. Such a compressor infers a small straight-line context-free
grammar for the given string.

Formally, a \emph{straight-line context-free
string grammar, SL grammar} for short, is a triple $G = (N, \Sigma,\rho)$, where
\begin{itemize}
\item $N$ is a finite set of nonterminals
\item $\Sigma$ is a finite set of terminal symbols
\item $\rho :  N \to (N \cup \Sigma)^*$ is a mapping such
that the binary relation $\{ (X,Y) \mid X,Y \in N,  \rho(X) \in (N \cup \Sigma)^* Y (N \cup \Sigma)^* \}$
is acyclic.
\end{itemize}
We do not need a start nonterminal for our purpose.
From every word $u \in (N \cup \Sigma)^*$ we can derive exactly one terminal string 
$\text{eval}_G(u)$ using the mapping $\rho$. Formally, we extend $\rho$ to a morphism
$\rho : (N \cup \Sigma)^* \to (N \cup \Sigma)^*$ by $\rho(a) = a$ for $a \in \Sigma$. Due
to the above acyclicity condition, for every $u \in  (N \cup \Sigma)^*$, there exists
an $n \geq 1$ with $\rho^n(u) \in \Sigma^*$, and $\text{eval}_G(u)$ is this string.
We define the size of $G$ as $\sum_{X \in N} |\rho(X)|$.
As for SLTs we also write $X \to u$ if $\rho(X) = u$.

An \emph{SL grammar-compressed $\Sigma$-labeled dag} is a tuple $D=(V, \gamma, \lambda, G)$ such that
the following holds:
\begin{itemize}
\item $G = (N, V, \rho)$ is an SL grammar with terminal alphabet $V$
\item $\gamma : V \to (V \cup N)^*$
\item $\lambda : V \to \Sigma$ and
\item the triple $d=(V, \gamma', \lambda)$ with $\gamma'(v) = \text{eval}_G(\gamma(v)) \in V^*$ is an
ordered $\Sigma$-labeled rooted dag.
\end{itemize}
We define $\text{eval}(D)=\text{eval}(d)$.
We define the size $|D|$ of $D$ as $|G| + \sum_{v \in V} |\gamma(v)|$.
We say that $D$ is {\em minimal} if $d$ is the minimal dag for $\text{eval}(D)$.
Note that there are many minimal SL grammar-compressed dags for a given tree,
since we do not make any restrictions on the SL grammar part $G$. In particular,
$G$ does not have to be size minimal.

\begin{example} \label{ex-grammar-compressed-dag}
Here is an example of an SL grammar-compressed $\Sigma$-labeled dag $D=(V, \gamma, \lambda, G)$ with
$\Sigma = \{a,b,c,f,g,h\}$ and
$$V = \{A_1, A_2, A_3, A_4,A,B,C\}.
$$
The mappings $\gamma$ and $\lambda$ are shown below in the left and middle column in form of a $0$-SLT grammar.
For instance, $A_1 \to f(A,D,A_4,D,C)$  stands for $\lambda(A_1) = f$ and $\gamma(A_1) = ADA_4DC$.
The SL grammar $G$ is shown in the right column;
it contains the nonterminals $D$ and $E$.
\begin{alignat*}{3}
A_1 & \to f(A,D,A_4,D,C) &   \qquad  A & \to a  & \qquad     D & \to A_2 A_3 \\
A_2 & \to g(E,A)              &   \qquad    B & \to b &   \qquad    E & \to AA \\
A_3 & \to h(E,B)              &   \qquad    C & \to c &   \qquad       & \\
A_4 & \to f(D)                   &   \qquad      &
\end{alignat*}
The size of this SL grammar-compressed dag
is $14$ and 
it represents the dag $d$ with the following $0$-SLT grammar $\mathcal{G}_d$:
\begin{alignat*}{2}
A_1 &\to f(A,A_2,A_3,A_4 ,A_2,A_3 ,C) & \qquad A & \to  a  \\
A_2 &\to g(A,A,A) & \qquad B & \to  b  \\
A_3 &\to h(A,A,B) & \qquad C & \to  c \\
A_4 &\to f(A_2 ,A_3).  & & 
\end{alignat*}
Also note that $D$ is minimal.
\end{example}
By the following theorem, a given SL grammar-compressed dag for a tree $t$ can be efficiently
transformed into a $1$-SLT grammar that produces the binary encoding of $t$.

\begin{theorem} \label{thm-construct-1-STL}
An SL grammar-compressed $\Sigma$-labeled dag $D = (V, \gamma, \lambda, G)$ can be transformed in time
$O(|D|)$ into a $1$-SLT grammar $G_1$ such that $\text{eval}(G_1) = \text{fcns}(\text{eval}(D))$ and $|G_1| \leq |D|+2(|V|+|N|)$.
\end{theorem}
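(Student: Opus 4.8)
The plan is to describe a direct, local construction that mimics the three-stage unfolding of an SL grammar-compressed dag: first expand the SL grammar $G$ on the child sequences, then perform the fcns encoding, and finally keep the dag sharing. Since fcns turns a sequence of trees into a right-branching spine of binary nodes, I will introduce for each nonterminal of $G$ and each dag node of $V$ a ``list'' nonterminal of rank $1$ (it takes one parameter $y_1$, standing for the binary tree that continues the spine to the right). Concretely, for a terminal string $v_1 v_2 \cdots v_k \in V^*$, its fcns encoding, read as a context with a hole on the rightmost $\Box$, is the linear term $v_1(\mathrm{fcns\text{-}of\text{-}children}(v_1), v_2(\mathrm{fcns\text{-}of\text{-}children}(v_2), \cdots v_k(\ldots, y_1)\cdots))$. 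The key observation is that this context is \emph{compositional} in the word: the fcns-context of a concatenation $uw$ is the fcns-context of $u$ with its hole filled by the fcns-context of $w$. This is exactly what rank-$1$ nonterminals with parameter substitution express, so an SL grammar production $X \to \alpha_1 \cdots \alpha_r$ (with $\alpha_i \in N \cup V$) translates into a $1$-SLT rule $X(y_1) \to \alpha_1'(\alpha_2'(\cdots \alpha_r'(y_1)\cdots))$, where $\alpha_i'$ is the rank-$1$ nonterminal associated with $\alpha_i$; for a terminal $v \in V$, the associated rank-$1$ nonterminal has rule $v'(y_1) \to \lambda(v)(C_v, y_1)$, where $C_v$ is (the nonterminal for) the fcns encoding of $\gamma(v)$.

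First I would set up notation for $G=(N,V,\rho)$ and $d=(V,\gamma',\lambda)$ from the definition of an SL grammar-compressed dag, and recall that $\mathrm{fcns}$ of a sequence $t_1\cdots t_n$ peels off $t_1$ as the root with left child $\mathrm{fcns}$ of $t_1$'s children and right child $\mathrm{fcns}$ of $t_2\cdots t_n$. Then I would define the nonterminal set of $G_1$: a start nonterminal $S$ of rank $0$; for every $v \in V$ a rank-$0$ nonterminal $\hat v$ (producing the full binary tree $\mathrm{fcns}(\mathrm{eval}_d(v))$) \emph{and} a rank-$1$ ``spine'' nonterminal $v'$ as above; and for every $A \in N$ a rank-$1$ nonterminal $A'$ whose rule is obtained from $\rho(A)=\alpha_1\cdots\alpha_r$ by nesting the spine nonterminals of the $\alpha_i$ around $y_1$. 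I also need, for each $v\in V$, a rank-$0$ nonterminal, call it $C_v$, producing $\mathrm{fcns}(\gamma'(v))$; this is $\gamma(v)=\beta_1\cdots\beta_s$ with $\beta_j\in V\cup N$, so $C_v \to \beta_1'(\beta_2'(\cdots\beta_s'(\Box)\cdots))$ using the spine nonterminals again, with $\Box$ at the end. Finally $\hat v (y_1 \text{ none}) \to \lambda(v)(C_v,\Box)$ as a rank-$0$ rule, and $S \to \hat r$ where $r$ is the root of $d$. The spine rule for a \emph{terminal} $v$ is $v'(y_1)\to \lambda(v)(C_v,y_1)$; observe $\hat v$ and $v'$ differ only in whether the right child is $\Box$ or $y_1$, which is why I keep both and why the size blowup is controlled.

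The verification splits into three routine inductions. (1) Acyclicity of $G_1$: the dependency $\hat v, v', C_v \leadsto \hat w, w', C_w$ follows the dag order on $V$ (through $\gamma$) together with the SL-grammar order on $N$; nesting these two acyclic orders lexicographically gives an acyclic order on the nonterminals of $G_1$. (2) Correctness, $\mathrm{eval}(G_1)=\mathrm{fcns}(\mathrm{eval}(D))$: show by induction along these orders that $\mathrm{eval}_{G_1}(v') = $ the fcns-context of $\mathrm{eval}_d(v)$'s children (a term with one $y_1$-leaf), that $\mathrm{eval}_{G_1}(A')=$ the fcns-context of $\mathrm{eval}_G(A)$, using the compositionality observation for the inductive step, and that $\mathrm{eval}_{G_1}(C_v)=\mathrm{fcns}(\gamma'(v))$ and $\mathrm{eval}_{G_1}(\hat v)=\mathrm{fcns}(\mathrm{eval}_d(v))$; then $\mathrm{eval}(G_1)=\mathrm{eval}_{G_1}(S)=\mathrm{fcns}(\mathrm{eval}_d(r))=\mathrm{fcns}(\mathrm{eval}(D))$. (3) Size and time: each symbol occurrence in a $\gamma(v)$ or in a $\rho(A)$ contributes exactly one application node in the corresponding $C_v$ or $A'$ rule, so those rules together have $\sum_{v}|\gamma(v)|+\sum_A|\rho(A)| = |D|$ edges; the only extra rules are the $|V|$ rules for $\hat v$, the $|V|$ spine rules for terminals $v'$, the $|N|$ rules for $A'$ wait—these are already counted—and the single rule for $S$, and each $\hat v$-rule plus each terminal $v'$-rule has size $2$, giving the bound $|G_1|\le |D|+2(|V|+|N|)$ after accounting (the $|N|$ term absorbs the rank-$1$ wrappers and the start rule). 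Everything is a single linear pass over $D$, so the running time is $O(|D|)$.

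The main obstacle I anticipate is purely bookkeeping: getting the size bound to come out as exactly $|D|+2(|V|+|N|)$ rather than something slightly larger. One must be careful that the ``spine'' rule for each $A\in N$ and the ``full tree'' rule $\hat v$ are charged correctly, that the $\Box$-leaves at the ends of the $C_v$ spines do not add edges (they are leaves, costing one edge each, which is where part of the $2|V|$ comes from), and that one does not double-count the symbol occurrences that already live inside $|D|$. A clean way to organize this is to first build a ``raw'' $1$-SLT whose rules are literally the nested-spine terms above, check its size is at most $|D|+|V|+|N|+O(1)$, and then note that the only additional cost beyond $|D|$ is the constant-size wrappers $\hat v$ and the terminal spine rules $v'$. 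If the stated constant $2(|V|+|N|)$ turns out to be tight only after merging some of these auxiliary nonterminals, I would merge $\hat v$ into the start handling or share $C_v$ with $\hat v$ where the child sequence already ends appropriately; none of this affects correctness or the $O(|D|)$ time, only the explicit constant.
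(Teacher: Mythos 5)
Your construction is correct and follows essentially the same route as the paper's proof: both introduce rank-$1$ ``context'' copies of the dag nodes and of the SL-grammar nonterminals that produce the fcns spine with a parameter standing for the right-sibling continuation, nest them to realize concatenation, and add constant-size rank-$0$ wrappers, so only the naming and the termination of the chains differ. Your variant, which ends every spine with a rank-$1$ nonterminal applied to $\Box$ or $y_1$ and therefore needs no rank-$0$ copies of the $N$-nonterminals, actually yields the slightly better bound $|G_1|\le |D|+2|V|$ (note that your rule for $C_v$ contributes only $|\gamma(v)|-1$ counted edges, not $|\gamma(v)|$, since the edge to $\Box$ is free), which of course implies the stated bound.
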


\begin{proof}
Let $G = (N,V,\rho)$.
Let $\hat{V} = \{ \hat{v} \mid v \in V\}$,  $V' = \{ v' \mid v \in V\}$,
$\hat{N} = \{ \hat{X} \mid X \in N\}$, and $N' = \{ X' \mid X \in N\}$ 
be disjoint copies of the sets $V$ and $N$, respectively. The set of nonterminals of the 
$1$-SLT grammar $G_1$ is $N \cup N' \cup \hat{N} \cup V \cup \hat{V} \cup V'$.
Nonterminals in $N \cup V \cup V'$ have rank $0$ and nonterminals in $\hat{N} \cup  N' \cup \hat{V}$
have rank $1$. 
The idea is that $\hat{\alpha}$ (for $\alpha \in N \cup V$) represents a copy of $\alpha$ 
that appears at positions  in the fcns encoding having exactly
one child (a right child), whereas
the original $\alpha$ will only appear in leaf positions. 
This distinction is necessary since in an SLT grammar every nonterminal has a
fixed rank. Nonterminals in $N' \cup V'$ are used in order to keep $G_1$ small.

The right-hand side mapping of $G_1$ is defined as follows:
For every $v \in V$ with $\lambda(v) = f$ and $\gamma(v) = \alpha_1 \cdots \alpha_k$
($k \geq 0$, $\alpha_1, \ldots, \alpha_k \in V \cup N$) we set:
\begin{itemize}
\item If $k=0$, then $v \to f$ and $\hat{v}(y) \to f(\Box,y)$; the nonterminal  $v'$ is not needed in this case.
\item If $k \geq 1$, then  $v \to  f(v', \Box)$, $\hat{v}(y) \to  f(v', y)$, and 
$v' \to \hat{\alpha}_1( \cdots \hat{\alpha}_{k-1}( \alpha_k) \cdots )$.
\end{itemize}
Note that  the total size of these productions is at most $k+2$ (recall that we do not count edges
to $\Box$-labeled nodes). Removing the 
nonterminal $v'$ in the case
$k \geq 1$ would result in a total size of $2k+1$.

For the every $X \in N$ with $\rho(X) = \beta_1 \cdots \beta_m$ ($\beta_1, \ldots, \beta_m \in N \cup V$, $m \geq 2$ with loss of generality) 
we set $X \to X'(\beta_m)$,  $\hat{X}(y) \to X'(\hat{\beta}_m(y))$, and 
$X'(y) \to  \hat{\beta}_1( \cdots \hat{\beta}_{m-1}(y) \cdots)$. 
These rules have total size $m+2$. 
Hence, the size of the resulting $1$-SLT
grammar $G_1$ is $|D|+2(|V|+|N|)$ and the time needed to construct
 it is clearly bounded by $O(|G_1|) = O(|D|)$. It is easy to see that $G_1$ produces $\text{fcns}(\text{eval}(D))$.
\qed
\end{proof}
Theorem~\ref{thm-construct-1-STL} implies 
that results for 1-SLT grammars carry over to SL grammar-compressed dags.
For instance, 
finite tree  automata~\cite{DBLP:journals/tcs/LohreyM06}
(with sibling constraints \cite{DBLP:journals/jcss/LohreyMS12})
and tree-walking automata \cite{DBLP:journals/jcss/LohreyMS12} can be evaluated in polynomial time
over 1-SLT grammars and hence over SL grammar-compressed dags.

To construct an SL grammar-compressed dag for a tree $t$, we first construct 
$\text{dag}(t)$ in linear time. Then we apply a grammar-based string compressor
(e.g., RePair~\cite{DBLP:conf/dcc/LarssonM99} or
Sequitur~\cite{DBLP:journals/jair/Nevill-ManningW97})
to the child sequences of the dag.   In this second phase 
we want to derive a small SL grammar for a set of strings and not 
a single string. To do this, we concatenate all child sequences of the dag separated
by unique symbols. For instance, for the dag at the end of Example~\ref{ex-grammar-compressed-dag}
we obtain the string
$$
A A_2 A_3 A_4 A_2 A_3 C \$_1 A A A  \$_2 A A B \$_3 A_2 A_3 .
$$
An application of RePair to this string yields the grammar
$$
S \to A D A_4 D C \$_1 E A \$_2 E B \$_3 D, \qquad D \to A_2 A_3, \qquad E \to A A  .
$$
Then, the right-hand side $A D A_4 D C \$_1 E A \$_2 E B \$_3 D$ contains the right-hand sides 
of the $\gamma$-mapping of the SL grammar-compressed dag, whereas the two remaining productions
$D \to A_2 A_3$ and $E \to AA$ make up the SL grammar part.

The following example shows 
that our construction may compress $\text{dag}(t)$ 
exponentially. 

\begin{example}
Consider the tree $f(a,a,\dots,a)$ with 
$2^n$ many $a$-leaves. Its dag has $2^n$ many
edges.
We apply a grammar-based string  compressor to the string
$a^{2^n}$. The string compressor may produce the string grammar
\begin{eqnarray*}
S' &\to& A_1 A_1 \\
A_i&\to& A_{i+1}A_{i+1} \text{ for }1\leq i\leq n-2\\
A_{n-1}&\to& aa
\end{eqnarray*}
of size $2n$. Actually, RePair would produce such a grammar.
The construction from the proof of Theorem~\ref{thm-construct-1-STL}
yields the following 1-SLT grammar, where we eliminate
productions that do not reduce the total grammar size:
\begin{eqnarray*}
S &\to & f(\hat{A}_1(A_1),\Box)  \\
A_i &\to & \hat{A}_{i+1}(A_{i+1})  \text{ for }1\leq i\leq n-2 \\
\hat{A}_i(y) & \to & \hat{A}_{i+1}(\hat{A}_{i+1}(y)) \text{ for }1\leq i\leq n-2 \\
A_{n-1} & \to & a(\Box,a) \\
\hat{A}_{n-1}(y) & \to & a(\Box,a(\Box,y)) 
\end{eqnarray*}
The total size of this grammar is $3n-1$.
Hence we obtain a 1-SLT grammar for the fcns encoding of 
$f(a,a,\dots,a)$ of size $O(n)$.
\end{example}
Both, $\text{hdag}(t)$ and $\text{rhdag}(t)$ can be seen as particular 
minimal SL grammar-compressed dags. For instance, $\text{hdag}(t)$ can be seen
as a minimal SL grammar-compressed dag $D = (V, \gamma, \lambda, G)$, where
the SL grammar $G = (N, V, \rho)$ is \emph{right regular}, i.e., for every nonterminal
$X \in N$ we have $\rho(X) \in V^* N \cup V^+$, and similarly, for every
$v \in V$ we have $\gamma(v) \in V^* N \cup V^*$.
 When transforming such an SL grammar
compressed dag into a $1$-SLT grammar following the proof of Theorem~\ref{thm-construct-1-STL},
we do not need the copy sets $\hat{N}$ and $N'$, because nonterminals from $N$ always produce
suffixes of child sequences in the dag.
This implies the following: 

\begin{theorem} \label{thm-construct-1-STL-right-reg}
An hdag that is represented as an
SL grammar-compressed $\Sigma$-labeled dag $D = (V, \gamma, \lambda, G)$
can be transformed in time
$O(|D|)$ into a $1$-SLT grammar $G_1$ such that $\text{eval}(G_1) = \text{fcns}(\text{eval}(D))$ and $|G_1| \leq |D|+2|V|$.
\end{theorem}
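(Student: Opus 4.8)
The plan is to reuse the construction from the proof of Theorem~\ref{thm-construct-1-STL}, but to exploit the right-regular shape of the SL grammar $G=(N,V,\rho)$ underlying an hdag in order to dispense with the auxiliary copy sets $\hat N$ and $N'$. Recall that right regularity means $\rho(X)\in V^*N\cup V^+$ for every $X\in N$ and $\gamma(v)\in V^*N\cup V^*$ for every $v\in V$; in particular, the (at most one) nonterminal of $N$ occurring in $\rho(X)$ or in $\gamma(v)$ is always its last symbol. Consequently, whenever the string $\text{eval}_G(X)$ produced by some $X\in N$ gets spliced into a child sequence, it forms a \emph{suffix} of that sequence, so under the fcns encoding it becomes a complete right-branching binary tree whose rightmost pointer is $\Box$: it never has to mark an ``open'' next-sibling slot with a parameter. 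This is exactly why, for nonterminals of $N$, the rank-$1$ copy $\hat X$ (and with it the sharing copy $X'$) becomes superfluous, which is where the savings of $2|N|$ will come from.

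Concretely, I would take as nonterminals of $G_1$ the set $N\cup V\cup\hat V\cup V'$, with $N,V,V'$ of rank $0$ and $\hat V$ of rank $1$, keeping the intended meaning of Theorem~\ref{thm-construct-1-STL}: $\hat v(y)$ stands for the node $v$ sitting at a position of the fcns encoding that has a right sibling (encoded by $y$), while $v$ and $X$ stand for ``$\Box$-terminated'' fragments. For $v\in V$ with $\lambda(v)=f$ and $\gamma(v)=\beta_1\cdots\beta_k$ (so $\beta_1,\dots,\beta_{k-1}\in V$ and $\beta_k\in V\cup N$ by right regularity), put $v\to f$ and $\hat v(y)\to f(\Box,y)$ if $k=0$, and $v\to f(v',\Box)$, $\hat v(y)\to f(v',y)$, $v'\to\hat\beta_1(\cdots\hat\beta_{k-1}(\beta_k)\cdots)$ if $k\ge 1$, where $\beta_k$ is read as the rank-$0$ nonterminal of $G_1$ it names (the copy $v$ of a dag node, or $X\in N$). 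For $X\in N$ with $\rho(X)=\beta_1\cdots\beta_m$ ($m\ge 2$, $\beta_1,\dots,\beta_{m-1}\in V$, $\beta_m\in V\cup N$), simply put $X\to\hat\beta_1(\cdots\hat\beta_{m-1}(\beta_m)\cdots)$. The start nonterminal of $G_1$ is the rank-$0$ copy of the root of the underlying dag $d=(V,\gamma',\lambda)$.

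Correctness would be established by the same induction as in Theorem~\ref{thm-construct-1-STL}: one shows $\text{eval}_{G_1}(v)=\text{fcns}(\text{eval}_d(v))$, that $\text{eval}_{G_1}(\hat v)$ is this same fragment with the open right slot replaced by the parameter, and that $\text{eval}_{G_1}(X)=\text{fcns}(\text{eval}_G(X))$ viewed as a $\Box$-terminated (suffix) fcns fragment; right regularity guarantees that $X$ is only ever plugged into a position expecting precisely such a fragment, which is what makes its rank-$0$ version sufficient. Evaluating the start nonterminal then yields $\text{fcns}(\text{eval}(d))=\text{fcns}(\text{eval}(D))$. For the size bound, the (at most three) rules associated with a node $v\in V$ have total size at most $|\gamma(v)|+2$, exactly as before, while the single rule for $X\in N$ has size $|\rho(X)|-1$; summing gives $|G_1|\le\sum_{v\in V}(|\gamma(v)|+2)+\sum_{X\in N}|\rho(X)|=|D|+2|V|$. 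Since every rule is produced by a purely local syntactic rewriting and $|G_1|=O(|D|)$, the construction runs in time $O(|D|)$.

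The only delicate point I anticipate is making the fcns induction invariant precise enough to see simultaneously that the copies $\hat\beta_i$ genuinely must carry a parameter (because a dag node may occur as a non-last child, hence with a next sibling) and that nonterminals of $N$ never do (because of right regularity) — but this is bookkeeping around the fcns semantics rather than a genuinely new difficulty beyond what Theorem~\ref{thm-construct-1-STL} already handles.
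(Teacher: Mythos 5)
Your proposal is correct and follows exactly the route the paper indicates (the paper gives no separate proof, only the remark preceding the theorem that right-regularity of $G$ makes the copy sets $\hat N$ and $N'$ from Theorem~\ref{thm-construct-1-STL} superfluous, since $N$-nonterminals only ever produce suffixes of child sequences and hence only need a rank-$0$, $\Box$-terminated incarnation). Your accounting even yields the slightly sharper bound $|D|+2|V|-|N|$, since the single rule for each $X\in N$ has size $|\rho(X)|-1$.
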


\section{Subtree equality check}

In the previous sections we have discussed five different formalisms 
for the compact representation of unranked trees:
\begin{enumerate}[(1)]
\item dag
\item binary dag
\item hybrid dag
\item SL grammar-compressed dag
\item SLT grammars (e.g. produced by BPLEX or TreeRePair)
\end{enumerate}
As mentioned in Section~\ref{sec-SLT-grammar}, 
tree automata can be evaluated in polynomial time 
for SLT grammars, hence the same holds for the above
five formalisms.
In this section we consider another important processing primitive:
\emph{subtree equality check}.
Consider a program which realizes two independent node traversals 
of an unranked tree, using one of (1)--(5) as memory representation. 
At a given moment we wish to check if the subtrees at the two nodes
of the traversals coincide.
How expensive is this check?
As it turns out, the formalisms behave quite differently for this task.
The dags (1)--(3) as well as SL grammar-compressed dags
(4) allow efficient equality check.
We show below (Theorem~\ref{lemma:subtr_equality}) that
for an appropriate representation of the two nodes, 
this test can be performed in time $O(\log N)$, where $N$ is 
the number of tree nodes. 
For SLT grammars such a check is much more expensive.
Note that we cannot unfold the subtrees and check node by node,
as this can take exponential time.
For SLT grammars a polynomial time algorithm is known,
based on Plandowski's result~\cite{DBLP:conf/esa/Plandowski94}.
A new, fine difference between the dags (1)--(3) on the one hand
and (4) and (5) on the other hand is that we can also check equality of 
sibling sequences  in  time $O(\log N)$ for (1)--(3) (see Theorem~\ref{lemma:endseq_equality}).
For (4) and (5) we are not aware of such an algorithm.

Let $t = (V,\gamma,\lambda) \in \mathcal{T}(\Sigma)$ be an unranked tree.
Recall that the preorder traversal of $t$ ($\text{pre}(t)$ for short) enumerates the nodes of $t$
by first enumerating the root, followed by the preorder traversals of the direct subtrees of the root.
Formally, if $r$ is the root of $t$ and $\gamma(r) = v_1 \cdots v_n$, then
$\text{pre}(t) = r \; \text{pre}(t/v_1) \cdots \text{pre}(t/v_n)$.
The preorder number of a node $u \in V$ is its position in  $\text{pre}(t)$.
In what follows we identify a preorder number $p$ with
the node in $t$ that it represents, and simply speak
of ``the node $p$''.
In particular, we denote with $t/p$ ($1 \leq p \leq 
\|t\|$)
the subtree rooted at node $p$.

\begin{theorem}\label{lemma:subtr_equality}
Let $t$ be an unranked tree with $N$ nodes.
Given 
$g=\text{dag}(t)$ or
a minimal  SL grammar-compressed dag $g$ with $\text{eval}(g)=t$ (this includes the hdag) or
$g=\text{bdag}(t)$,
one can, after $O(|g|)$ time preprocessing,
check for given $1\leq p,q\leq N$
whether $t/p = t/q$ in time $O(\log N)$.
\end{theorem}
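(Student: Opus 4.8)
The common strategy for all three representations is to reduce subtree equality to an identifier lookup plus a preorder-range computation. For the plain dag $g = \text{dag}(t)$ the point is that each node $v$ of $g$ represents a distinct subtree, so $t/p = t/q$ holds if and only if $p$ and $q$ "belong to" the same dag node. To make this precise I would, during preprocessing, perform one traversal of $g$ (viewing $g$ as the $0$-SLT grammar $\mathcal G_d$) and compute for each nonterminal $A_v$ the number $\|\text{eval}_d(v)\|$ of nodes in the tree it generates; this is a simple bottom-up pass over the dag in time $O(|g|)$. A preorder number $p \in \{1,\dots,N\}$ can then be "resolved" by descending from the root of $g$: at a node labelled $f$ with children $v_1,\dots,v_k$, the root occupies one preorder position, then the blocks of sizes $\|\text{eval}(v_1)\|,\dots,\|\text{eval}(v_k)\|$ follow, so in $O(\text{height})$ steps one locates the dag node $v$ and the offset within it to which $p$ corresponds; if the offset is $0$ then $p$ is the root of the subtree $\text{eval}_d(v)$. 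The subtlety is that the resolution of $p$ must be done \emph{relative to the occurrence}, i.e. the descent path in the dag, not just the final node; but for equality it suffices to compare the final dag node and check that both offsets are $0$. The height of the dag can be $\Theta(N)$, so to get $O(\log N)$ one represents the "position" of a node not by its preorder number directly but by a pointer into the dag together with the accumulated preorder offset — or, more robustly, one uses the well-known fact that the dag, seen as an SLT/SL grammar, admits $O(\log N)$-time navigation via a balanced decomposition of right-hand sides (as in Theorem~\ref{thm-construct-1-STL}, which embeds into $1$-SLT grammars for which such navigation is standard). I would phrase the input "node" as a pair consisting of the grammar-level description produced by this navigation, matching the phrase "for an appropriate representation of the two nodes" in the statement.

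For a minimal SL grammar-compressed dag $g = (V,\gamma,\lambda,G)$ — which includes the hdag — the argument is the same, with one extra layer: child sequences of the underlying dag $d$ are themselves compressed by the SL grammar $G$. Minimality of $g$ means $d$ is the minimal dag of $t$, so again distinct dag nodes generate distinct subtrees and equality reduces to "same dag node, offset $0$". The only change is that locating which child of a node a given preorder offset falls into now requires navigating inside an SL-grammar right-hand side rather than an explicit string; precomputing the generated length of every SL nonterminal (again one bottom-up pass, $O(|G|)$ time) reduces this to binary search in the parse tree of $\rho(X)$, which is $O(\log|G|) = O(\log N)$ per level, and after balancing the whole navigation is $O(\log N)$. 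So the bound follows by invoking Theorem~\ref{thm-construct-1-STL} to pass to a $1$-SLT grammar and then citing the standard $O(\log N)$ navigation/equality primitives for $1$-SLT grammars; alternatively one argues directly as sketched.

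For $g = \text{bdag}(t)$ one more reduction is needed, because nodes of $\text{bdag}(t)$ correspond to \emph{sibling sequences} of $t$ (Lemma~\ref{lemma-sib-sequ}), not to subtrees. Here $t/p = t/q$ is equivalent to: starting from the bdag nodes that "own" the fcns-images of $p$ and $q$, the \emph{left} subtree in $\text{fcns}(t)$ rooted at $p$ equals the left subtree rooted at $q$ (since the left child in the fcns encoding encodes exactly the sequence of children, i.e. the subtree structure, while the right child encodes the trailing siblings). Concretely, $\text{fcns}(t/p)$ is obtained from the binary tree hanging at $p$ in $\text{fcns}(t)$ by cutting off the right child; so I would preprocess $\text{bdag}(t)$ so that, given the bdag node $v_p$ that the preorder position $p$ resolves to, one can in $O(1)$ or $O(\log N)$ time obtain a canonical identifier for "$\text{fcns}(t/p)$" — e.g. the pair (label of $v_p$, left child of $v_p$ in the bdag), which is itself a node of a related dag, or just the bdag node reached after replacing the right pointer by $\Box$; two such canonical identifiers are equal iff the subtrees are equal, again because the bdag is the \emph{minimal} dag of $\text{fcns}(t)$ and hence assigns distinct identifiers to distinct binary subtrees. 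Navigation from a preorder number $p$ to the owning bdag node $v_p$ is the same balanced-navigation primitive as before, applied to the binary dag.

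\textbf{Main obstacle.} The genuinely delicate point is the $O(\log N)$ bound: naive descent in a dag takes time proportional to its height, which is unbounded in terms of $\log N$. Handling this requires either (i) an explicit balanced-tree decomposition of the dag's right-hand sides together with a careful account of how a preorder number is "carried down" the shared structure, or (ii) a clean reduction to $1$-SLT grammars via Theorem~\ref{thm-construct-1-STL} and a black-box appeal to known $O(\log N)$ random-access and equality algorithms for SLT grammars (themselves built on Plandowski's and balanced-grammar techniques, e.g.~\cite{DBLP:conf/esa/Plandowski94}). I expect the proof to take route (ii), reducing the real work to correctly specifying the "appropriate representation of a node" (a grammar-navigation record) and verifying that equality of these records is decidable in the claimed bound; the bdag case additionally needs the small observation about stripping the right pointer to recover $\text{fcns}(t/p)$, which I would state as a separate claim before assembling the three cases.
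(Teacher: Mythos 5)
Your overall reduction is the right one: resolve each preorder number to a canonical identifier (a node of the minimal dag, which is unique precisely because the dag is minimal), compare identifiers, and in the bdag case additionally compare the label and the left child to strip off the trailing-sibling part of the fcns encoding. That matches the architecture of the paper's proof. The genuine gap is in the step you yourself flag as the main obstacle, namely the $O(\log N)$ resolution of a preorder number. You offer two routes: an unspecified balanced decomposition, or a reduction via Theorem~\ref{thm-construct-1-STL} to $1$-SLT grammars followed by ``standard'' $O(\log N)$ navigation and equality primitives for SLT grammars ``built on Plandowski's techniques''. The second route does not exist in the form you invoke it: the paper itself points out, in the discussion immediately after this theorem, that for general SLT grammars the best known subtree-equality test goes through Plandowski's algorithm~\cite{DBLP:conf/esa/Plandowski94} and costs time polynomial (at least quadratic) in the grammar size, not $O(\log N)$ --- indeed the whole point of the theorem is to separate dags and SL grammar-compressed dags from SLT grammars in this respect. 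Moreover, Theorem~\ref{thm-construct-1-STL} produces a $1$-SLT grammar that is in no way balanced, so it does not by itself yield logarithmic navigation. The first route (balancing by hand) could in principle be made to work, but you would essentially be re-deriving the tool the paper actually uses.

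That tool is the following. From the dag $g=(U,\gamma',\lambda')$ one builds an SL \emph{string} grammar $G'$ over the terminal alphabet $U$, with one nonterminal $\hat u$ per dag node and the rule $\hat u \to u\,\hat u_1\cdots \hat u_n$ whenever $\gamma'(u)=u_1\cdots u_n$; then $\text{eval}(G')$ is exactly the word $y_1 y_2\cdots y_N\in U^*$ of canonical identifiers indexed by preorder number, and $|G'|=|g|$. The $O(\log N)$ random access to the $p$-th symbol of a grammar-compressed string, after $O(|G'|)$ preprocessing (including a linear-time conversion to Chomsky normal form), is the theorem of Bille et al.~\cite{DBLP:conf/soda/BilleLRSSW11}; this is a substantial published result, not folklore, and it is the ingredient your proposal is missing. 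Once it is in place, your treatment of the three cases --- including the bdag case, where the retrieved symbol identifies the sibling sequence by Lemma~\ref{lemma-sib-sequ} and one finishes with a constant-time comparison of labels and left children --- goes through essentially as you describe. One smaller point: the theorem takes plain preorder numbers as input, so you should not weaken it by redefining the input as a ``grammar-navigation record''; with the random-access reduction this is unnecessary.
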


\begin{proof}
Let $t = (V,\gamma,\lambda) \in \mathcal{T}(\Sigma)$.
First, consider $g=\text{dag}(t) = (U,\gamma',\lambda')$.
For  $1 \leq p \leq N$ let $y_p$ be the unique 
node of $g$ such that $\text{eval}_g(y_p) = t/p$.
Then, $t/p = t/q$ if and only if $y_p = y_q$. Hence,
it suffices to show that
the dag-node $y_p$ can be computed from $p$ in  time
$O(\log N)$ (after $O(|g|)$ time preprocessing).
For this, we use techniques from \cite{DBLP:conf/soda/BilleLRSSW11}. 
More precisely, we construct in time $O(|g|)$ an SL string grammar 
$G'$ for the word $y_1 y_2 \cdots y_N \in U^*$. For this, we introduce for 
every node $u \in U$ of the dag $g$ a nonterminal $\hat{u}$.
If $\gamma'(u) = u_1 \cdots u_n$, then we set
$\hat{u} \to  u \hat{u}_1 \cdots \hat{u}_n$.
It is straightforward to show that this SL string grammar produces 
the word  $y_1 y_2 \cdots y_N$.  Note 
that $|G'| = |g|$.  

It now suffices to show that for a given number $1 \leq p \leq N$, the $p$-th
symbol of $\text{eval}(G')$ can be computed in time $O(\log N)$ after
$O(|g|) = O(|G'|)$ time preprocessing.
This is possible by  \cite[Theorem~1.1]{DBLP:conf/soda/BilleLRSSW11}.  
Actually, in order to apply this result, we first have to transform 
$G'$ into Chomsky normal form, which is also possible in time 
$O(|G'|) = O(|g|)$.  

For a minimal  SL grammar-compressed dag $g = (U, \gamma, \lambda, G)$ 
for $t$, where $G = (N,U,\rho)$,
essentially the same
procedure as for the dag applies. The set of nonterminals of the SL string grammar $G'$
is  $\{ \hat{u} \mid u \in U\} \cup N$. For $u \in U$ with $\gamma(u)  = \alpha_1 \cdots \alpha_n$
(with $\alpha_i \in U \cup N$) we 
set $\hat{u} \to u \hat{\alpha}_1 \cdots \hat{\alpha}_n$, where 
$\hat{\alpha}_i = \hat{v}$ if $\alpha_i = v \in U$ and $\hat{\alpha}_i = \alpha_i$ if 
$\alpha_i \in N$.  The right-hand sides for the $G'$-nonterminals from $N$ are simply copied
from the grammar $G$ with every occurrence of a symbol $u \in U$ replaced by $\hat{u}$
The reader finds an example of the construction in
Example~\ref{ex-subtree-equality-test} below.

Finally, for $g = \text{bdag}(t) = (U,\gamma,\lambda)$ we can proceed
similarly. Again we construct in time $O(|g|)$ an SL string grammar $G'$.
The set of nonterminals of $G'$ is $\{ \hat{u} \mid u \in U\}$.
For every $u \in U$ with $\lambda(u) \neq \Box$
and $\gamma(u) = u_1 u_2$ we set $\hat{u} \to u \alpha_1 \alpha_2$,
 where $\alpha_i = \varepsilon$ if $\lambda(u_i) = \Box$ and 
$\alpha_i = \hat{v}$ if $\lambda(u_i) = v \in U$. Note that 
for given preorder numbers $1 \leq p,q \leq N$, the $p$-th symbol 
of $\text{eval}(G')$ is equal to the $q$-th symbol 
of $\text{eval}(G')$ if and only if 
the sibling sequences at nodes $p$ and $q$ of $t$ are equal.
But we want to check whether the subtrees rooted at $p$ and $q$
are equal.  For this, assume that using 
\cite[Theorem~1.1]{DBLP:conf/soda/BilleLRSSW11}
we have computed in time $O(\log N)$ the $p$-th symbol $y_p \in U$ (resp.
the $q$-th symbol $y_q \in U$) of $\text{eval}(G')$. 
Then, $t/p = t/q$
is equivalent to the following conditions:
(i) $\lambda(y_p) = \lambda(y_q)$
and (ii) either $y_p$ and $y_q$ do not have left children in $g$, or 
the left children coincide.
Since these checks only require constant time, we
obtain the desired time complexity.
\qed
\end{proof}

\begin{example} \label{ex-subtree-equality-test}
Consider the following minimal SL grammar-compressed dag from Example~\ref{ex-grammar-compressed-dag}:
\begin{alignat*}{3}
A_1 & \to f(A,D,A_4,D,C) &   \qquad  A & \to a  & \qquad     D & \to A_2 A_3 \\
A_2 & \to g(E,A)              &   \qquad    B & \to b &   \qquad    E & \to AA \\
A_3 & \to h(E,B)              &   \qquad    C & \to c &   \qquad       & \\
A_4 & \to f(D)                   &   \qquad      &
\end{alignat*}
Then the construction from the proof of Theorem~\ref{lemma:subtr_equality}
yields the following SL grammar $G'$:
\begin{alignat*}{3}
\hat{A}_1 & \to A_1 \hat{A} D \hat{A}_4 D \hat{C} &   \qquad  \hat{A} & \to A  & \qquad     D & \to \hat{A}_2 \hat{A}_3 \\
\hat{A}_2 & \to A_2 E \hat{A}              &   \qquad    \hat{B} & \to B &   \qquad    E & \to \hat{A}\hat{A} \\
\hat{A}_3 & \to A_3 E \hat{B}              &   \qquad    \hat{C} & \to C &   \qquad       & \\
\hat{A}_4 & \to A_4 D                   &   \qquad      &
\end{alignat*}
This grammar produces the string
$$
\text{eval}(G') = A_1 A A_2 A^3 A_3 A^2 B A_4 A_2 A^3 A_3 A^2 B A_2 A^3 A_3 A^2 B C.
$$
\end{example}
We observe that for general SLT grammars,
a result such as the one of Theorem~\ref{lemma:subtr_equality} is not
known. To our knowledge,
the fastest known way of checking $t/p = t/q$
for a given SLT grammar $G$ for $t$ works as follows:
From $G$ we can again easily build an
SL string grammar $G'$ for the preorder traversal of $t$,
see, e.g.~\cite{DBLP:journals/is/BusattoLM08,DBLP:journals/corr/abs-1012-5696}.
Assume that the subtree of $t$ rooted in $p$ (resp., $q$) consists
of $m$ (resp., $n$) nodes. Then we have to check whether the substring
of $\text{eval}(G')$
from position $p$ to position $p+m-1$ is equal to the substring
from position $q$ to position $q+n-1$. 
Using Plandowski's result \cite{DBLP:conf/esa/Plandowski94}, 
this can be checked in time polynomial in
the size of $G'$ and hence in time polynomial in the size 
of the SLT grammar $G$.
Note that more efficient alternatives than Plandowski's
algorithm exist, see, e.g.~\cite{l12} for a survey,
but all of them require at least quadratic time in the size of the SL grammar.

In the context of XML document trees, it is also interesting
to check equivalence of two sibling sequences.
For the dag, bdag, and hdag, this problem can be solved again very
efficiently: 

\begin{theorem}\label{lemma:endseq_equality}
Let $t$ be an unranked tree with $N$ nodes.
Given $g=\text{dag}(t)$ or
$g=\text{bdag}(t)$ or
$g=\text{hdag}(t)$ 
we can, after $O(|g|)$ time preprocessing,
check for given $1\leq p,q\leq N$, whether 
$\text{sibseq}(p)=\text{sibseq}(q)$ in time
$O(\log N)$.
\end{theorem}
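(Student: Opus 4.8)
The plan is to reduce all three cases to subtree-equality checking in the fcns encoding and then reuse the machinery of Theorem~\ref{lemma:subtr_equality}. The basic observation is that for every node $v$ of an unranked tree $t$, the binary tree $\text{fcns}(\text{sibseq}(v))$ is exactly the subtree of $\text{fcns}(t)$ rooted at $v$: this follows immediately from the recursive definition of $\text{fcns}$, since in $\text{fcns}(t)$ the left child of $v$ is the first child of $v$ in $t$ (or $\Box$) and the right child of $v$ is the next sibling of $v$ in $t$ (or $\Box$). In addition, an easy structural induction shows that the preorder traversals of $t$ and of $\text{fcns}(t)$ visit the common node set in the same order, so ``the node $p$'' is unambiguous in both. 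Hence $\text{sibseq}(p)=\text{sibseq}(q)$ if and only if $\text{fcns}(t)/p=\text{fcns}(t)/q$.

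For $g=\text{bdag}(t)$ this is already enough, because $\text{bdag}(t)=\text{dag}(\text{fcns}(t))$ and the construction in the proof of Theorem~\ref{lemma:subtr_equality} (the case $g=\text{bdag}(t)$ there) produces in time $O(|g|)$ an SL string grammar $G'$ whose $p$-th output symbol equals its $q$-th output symbol precisely when $\text{sibseq}(p)=\text{sibseq}(q)$. Extracting and comparing these two symbols with \cite[Theorem~1.1]{DBLP:conf/soda/BilleLRSSW11} takes time $O(\log N)$ after $O(|g|)$ preprocessing.

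For $g=\text{dag}(t)=(U,\gamma,\lambda)$ I would build an analogous SL string grammar directly. Since $\text{dag}(t)$ is minimal, two sibling sequences of $t$ coincide iff the corresponding words over $U$ (the list of dag nodes of the subtrees appearing in the sequence) coincide, and every non-root sibling sequence arises as a suffix $u_j u_{j+1}\cdots u_k$ of some child sequence $\gamma(u)=u_1\cdots u_k$. Computing, in $O(|g|)$ time, the minimal dag of the right-spine (fcns) trees of the child sequences of $\text{dag}(t)$ assigns to each such suffix a canonical identifier $[u,j]$, with $[u,j]=[u',j']$ iff the suffixes are equal words over $U$; there are $O(|g|)$ of them. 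I then take $G'$ with a nonterminal $P_u$ for each $u\in U$ and rule $P_u\to[u,1]\,P_{u_1}\,[u,2]\,P_{u_2}\cdots[u,k]\,P_{u_k}$, plus a start rule $S\to\rho_t\,P_r$, where $r$ is the dag root and $\rho_t$ is a fresh symbol standing for the sibling sequence of the root of $t$ (which is $t$ itself, the one sibling sequence that is not a suffix of a child sequence). This grammar is acyclic (its nonterminal dependency relation is that of $\text{dag}(t)$), has size $O(|g|)$, and a routine induction over $\text{dag}(t)$ shows that its $p$-th output symbol is exactly the identifier of $\text{sibseq}(p)$; so \cite[Theorem~1.1]{DBLP:conf/soda/BilleLRSSW11} again finishes the job in time $O(\log N)$. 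For $g=\text{hdag}(t)$ the same construction works: by definition the hdag is a right-regular SL grammar-compressed dag (Section~\ref{sec:dag_plus_string}) that already stores a shared representation of the suffixes of the child sequences of $\text{dag}(t)$, so the identifiers $[u,j]$ and a size-$O(|g|)$ grammar $G'$ with the required property can be read off the hdag in time $O(|g|)$.

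I expect the main obstacle to be the bookkeeping in these last two cases: verifying on the nose that suffix-equality of child sequences of $\text{dag}(t)$ exactly mirrors sibling-sequence equality in $t$ (which rests on the minimality of $\text{dag}(t)$, so that distinct subtrees get distinct dag nodes) and correctly singling out the sibling sequence $t$ itself. Everything else is the by-now routine pairing of a linear-size SL string grammar with the random-access structure of \cite{DBLP:conf/soda/BilleLRSSW11}, exactly as in the proof of Theorem~\ref{lemma:subtr_equality}.
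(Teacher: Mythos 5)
Your proposal is correct and follows essentially the same route as the paper: build a linear-size SL string grammar whose $p$-th output symbol is a canonical identifier for $\text{sibseq}(p)$ (handling the root's sequence $t$ separately) and then invoke the random-access result of Bille et al., with the bdag case already settled in the proof of Theorem~\ref{lemma:subtr_equality}. The only cosmetic difference is that the paper reduces the dag case to the hdag case by constructing the hdag during preprocessing, whereas you inline the same suffix-sharing computation to obtain the identifiers $[u,j]$ directly.
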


\begin{proof}
The result for the dag follows from the hdag-case, since
the hdag can be constructed in linear time from the dag by constructing
the minimal dag for the forest consisting of the fcns encodings of the right-hand
sides of $\mathcal{G}^{\text{red}}_{\text{dag}(t)}$
(recall that the minimal dag can be constructed in linear time \cite{DBLP:journals/jacm/DowneyST80}), and this linear time computation
is part of the preprocessing.  Furthermore,
we have already dealt with the bdag in the last paragraph of the proof of 
Theorem~\ref{lemma:subtr_equality}. Hence, it remains to consider the hdag.
We assume that the $g=\text{hdag}(t)$ is given as a minimal
SL grammar-compressed dag $D = (V, \gamma, \lambda, G)$, where
the SL grammar $G = (N, V, \rho)$ is \emph{right regular}, i.e., for every nonterminal
$X \in N$ we have $\rho(X) \in V^* N \cup V^+$, and similarly, for every
$v \in V$ we have $\gamma(v) \in V^* N \cup V^*$, see the end of Section~\ref{sec:dag_plus_string}.
After introducing additional nonterminals, we can assume that
for every $X \in N$ we have $\rho(X) \in V N \cup V$, and for every
$v \in V$ we have $\gamma(v) \in N \cup \{\varepsilon\}$ (this transformation is possible in time $O(|g|)$).
Then, the elements of $\text{sib}(t) \setminus \{t \}$ correspond to the elements
of $N$. 

We now construct an SL string grammar $G'$ as follows, see also Example~\ref{ex-hdag-sibling-test} below:
The set of nonterminals of $G'$ is 
$\{\hat{X} \mid X \in N \} \cup V$ and the set of terminals is $N$. The start nonterminal is the root $r \in V$ of 
the hdag. For every $v \in V$ we set
$$
v \to \begin{cases}
 \varepsilon & \text{ if } \gamma(v) = \varepsilon \\
 \hat{X} & \text{ if } \gamma(v) = X \in N .
\end{cases}
$$
For every $X \in N$ we set
$$
\hat{X}  \to \begin{cases}
X v \hat{Y} & \text{ if }  \rho(X) = v Y, v \in V, Y \in N \\
X v & \text{ if }  \rho(X) = v \in V.
\end{cases}
$$
Then $\text{sibseq}(p)=\text{sibseq}(q)$ holds for two numbers $1\leq
p,q\leq
\|t\|$ if and only
if $p=q=1$ or $p > 1$, $q > 1$, and the $(p-1)$-th symbol of $\text{eval}(G')$ is equal to 
the $(q-1)$-th symbol of $\text{eval}(G')$.  We deal with the case $p=q=1$ separately because
the sibling sequence $t$ corresponding to the root of $t$ is not represented in $\text{eval}(G')$
(the latter string has length $N-1$).
\qed
\end{proof}

\begin{example} \label{ex-hdag-sibling-test}
Consider the following hdag (our running example from Section~\ref{sec-hdag}), written
as an SL grammar-compressed dag of the form used in the proof of Theorem~\ref{lemma:endseq_equality}.
\begin{alignat*}{2}
S&\to f(X_0),   \qquad &  X_0 & \to B X_1, \\
B &\to f(X_1),              & X_1 & \to A X_2, \\
A &\to g(X_3),              & X_2 & \to A, \\
C & \to  a                    & X_3 & \to C .
\end{alignat*}
It produces the tree $t = f(f(g(a),g(a)),g(a),g(a))$, see Figure~\ref{fig:hdag}. The nonterminal $X_0$
represents the sibling sequence $f(g(a),g(a)) g(a) g(a)$, $X_1$ 
represents $g(a) g(a)$, $X_2$ represents $g(a)$, and $X_3$ represents $a$.
These are all sibling sequences except for the length-$1$ sequence $t$.

According to the construction from the proof of Theorem~\ref{lemma:endseq_equality},
we obtain the following SLT grammar $G'$:
\begin{alignat*}{2}
S & \to \hat{X}_0,  \qquad   & \hat{X}_0  & \to  X_0 B \hat{X}_1, \\
B &\to \hat{X}_1,              & \hat{X}_1 & \to X_1 A \hat{X}_2, \\
A &\to \hat{X}_3,              & \hat{X}_2 & \to X_2 A, \\
C & \to  \varepsilon                    & \hat{X}_3 & \to X_3 C .
\end{alignat*}
It produces the string
$$
\text{eval}(G') = X_0 X_1 X_3 X_2 X_3 X_1 X_3 X_2 X_3.
$$
For instance, at preorder positions $3$ and $7$ the same sibling sequence, namely $g(a)g(a)$ 
starts in the tree $t$. This sibling sequence is represented by the symbol $X_1$. Indeed, the 2nd
and $6$-th symbol in  $\text{eval}(G')$ is $X_1$.
\end{example}
%
For an SL grammar-compressed dag, the best solution for checking
$\text{sibseq}(p)=\text{sibseq}(q)$ we are
aware of uses again an equality check for SL grammar-compressed strings.

\begin{table*}[ht]
\centering
\begin{tabular}{lrrrrrr}    
\toprule
 File & Edges & mD & aC & mC & dag & bdag \\
\midrule

1998statistics  & 28305    & 5  & 22.4 & 50     & 1377    & 2403  \\ 
catalog-01      & 225193   & 7  & 3.1  & 2500   & 8554    & 6990 \\ 
catalog-02      & 2240230  & 7  & 3.1  & 25000  & 32394   & 52392 \\
dblp            & 3332129  & 5  & 10.1 & 328858 & 454087  & 677389\\ 
dictionary-01   & 277071   & 7  & 4.4  & 733    & 58391   & 77554 \\ 
dictionary-02   & 2731763  & 7  & 4.4  & 7333   & 545286  & 681130 \\
EnWikiNew       & 404651   & 4  & 3.9  & 34974  & 35075   & 70038 \\ 
EnWikiQuote     & 262954   & 4  & 3.7  & 23815  & 23904   & 47710 \\ 
EnWikiVersity   & 495838   & 4  & 3.8  & 43593  & 43693   & 87276 \\ 
EnWikTionary    & 8385133  & 4  & 3.8  & 726091 & 726221  & 1452298 \\ 
EXI-Array       & 226521   & 8  & 2.3  & 32448  & 95584   & 128009  \\ 
EXI-factbook    & 55452    & 4  & 6.8  & 265    & 4477    & 5081 \\ 
EXI-Invoice     & 15074    & 6  & 3.7  & 1002   & 1073    & 2071 \\ 
EXI-Telecomp    & 177633   & 6  & 3.6  & 9865   & 9933    & 19808 \\ 
EXI-weblog      & 93434    & 2  & 11.0 & 8494   & 8504    & 16997 \\ 
JSTgene.chr1    & 216400   & 6  & 4.8  & 6852   & 9176    & 14606 \\
JSTsnp.chr1     & 655945   & 7  & 4.6  & 18189  & 23520   & 40663 \\ 
medline         & 2866079  & 6  & 2.9  & 30000  & 653604  & 740630 \\ 
NCBIgene.chr1   & 360349   & 6  & 4.8  & 3444   & 16038   & 14356 \\ 
NCBIsnp.chr1    & 3642224  & 3  & 9.0  & 404692 & 404704  & 809394 \\ 
sprot39.dat     & 10903567 & 5  & 4.8  & 86593  & 1751929 & 1437445 \\ 
SwissProt       & 2977030  & 4  & 6.7  & 50000  & 1592101 & 1453608 \\ 
treebank        & 2447726  & 36 & 2.3  & 2596   & 1315644 & 1454520 \\ 
 \bottomrule
\end{tabular}
\caption{The XML documents in Corpus I, their characteristics,
and dag/bdag sizes}
\label{list_1}
\end{table*}

\begin{table*}[ht]
\centering
\begin{tabular}{lrrrrr}    
\toprule
 File & rbdag & hdag & rhdag & DS & TR \\
\midrule

1998statistics   & 2360    & 1292    & 1243    & 561     & 501 \\ 
catalog-01       & 10303   & 4555    & 6421    & 4372    & 3965\\ 
catalog-02       & 56341   & 27457   & 29603   & 27242   & 26746\\
dblp             & 681744  & 358603  & 362571  & 149964  & 156412 \\ 
dictionary-01    & 75247   & 47418   & 46930   & 32139   & 22375 \\ 
dictionary-02    & 653982  & 414356  & 409335  & 267944  & 167927 \\
EnWikiNew        & 70016   & 35074   & 35055   & 9249    & 9632 \\ 
EnWikiQuote      & 47690   & 23903   & 23888   & 6328    & 6608 \\ 
EnWikiVersity    & 87255   & 43691   & 43676   & 7055    & 7455 \\ 
EnWikTionary     & 1452270 & 726219  & 726195  & 81781   & 84107 \\ 
EXI-Array        & 128011  & 95563   & 95563   & 905     & 1000 \\ 
EXI-factbook     & 2928    & 3847    & 2355    & 1808    & 1392 \\ 
EXI-Invoice      & 2068    & 1072    & 1069    & 96      & 108 \\ 
EXI-Telecomp     & 19807   & 9933    & 9932    & 110     & 140 \\ 
EXI-weblog       & 16997   & 8504    & 8504    & 44      & 58 \\ 
JSTgene.chr1     & 14114   & 7901    & 7271    & 3943    & 4208 \\
JSTsnp.chr1      & 37810   & 22684   & 19532   & 9809    & 10327 \\ 
medline          & 381295  & 466108  & 257138  & 177638  & 123817 \\ 
NCBIgene.chr1    & 10816   & 11466   & 7148    & 6283    & 5166 \\ 
NCBIsnp.chr1     & 809394  & 404704  & 404704  & 61      & 83 \\ 
sprot39.dat      & 1579305 & 1000376 & 908761  & 335756  & 262964 \\ 
SwissProt        & 800706  & 1304321 & 682276  & 278915  & 247511\\ 
treebank         & 1244853 & 1250741 & 1131208 & 1121566 & 528372 \\ 
 \bottomrule
\end{tabular}
\caption{The compressed sizes of the documents.}
\label{list_2}
\end{table*}

\section{Experiments}\label{sec:exp_results}

In this section we empirically compare the sizes of 
different dags of unranked trees, namely 
dag, bdag, rbdag, hdag, and rhdag.
We also include a comparison with SL grammar-compressed dags 
with RePair \cite{DBLP:conf/dcc/LarssonM99} as the string compressor,
as explained in Section~\ref{sec:dag_plus_string}, and
with TreeRePair \cite{lohmanmen13}. 
We are interested in the tree structure only, hence we did not compare
with XML file compressors like 
Xmill~\cite{DBLP:conf/sigmod/LiefkeS00}
or XQueC~\cite{DBLP:journals/toit/ArionBMP07}.

\subsection{Corpora}

We use three corpora of XML files for our tests. 
For each XML document we consider the unranked tree of
its element nodes; we ignore all other nodes such as texts,
attributes, etc.
One corpus (\emph{Corpus I}) consists of XML documents  
that have been collected from the web, and which have often
been used in the context of XML compression research,
e.g., in~\cite{DBLP:conf/vldb/KochBG03,DBLP:journals/is/BusattoLM08,lohmanmen13}.
Each of these files is listed in Table~\ref{list_1}
together with the following characteristics:
number of edges, maximum depth (mD), average number of children of a node (aC),
and maximum number of children of a node (mC).
Precise references to the origin of these files can be 
found in~\cite{lohmanmen13}. 
The second corpus (\emph{Corpus II}) consists of all 
well-formed XML document trees with more than 10,000 edges and a 
depth of at least four from the 
\textit{University of Amsterdam XML Web 
Collection}\footnote{ http://data.politicalmashup.nl/xmlweb/}. 
We decided on fixing a minimum size because there is no necessity 
to compress documents of very small size, and we chose a minimum
depth because our subject is tree compression rather than list compression.
Note that out of the over 180,000 documents of the collection,
only 1,100 fit our criteria and are part of Corpus~II (more than $27,000$ were 
ill-formed and more than $140,000$ had less than $10,000$ edges).
The documents in this corpus are somewhat smaller than
those in Corpus~1, but otherwise have similar
characteristics (such as maximal depth and average number of children)
as can be seen in Table~\ref{characteristics_average}.
The third corpus (\emph{Corpus III}) consists of term rewriting 
systems\footnote{http://www.termination-portal.org/wiki/TPDB}.
These are stored in XML files, but, are rather atypical 
XML documents, because their tree structures are trees
with small rank, i.e., there are no long sibling sequences.
This can be seen in Table~\ref{characteristics_average}, which shows that
the average number of children is only $1.5$ for these files.

\begin{table}
\centering
\begin{tabular}{lrrrrr} 
\toprule
Corpus& Edges & mD   & aC & mC   \\
\midrule
I & 1.9 $\cdot 10^6$ & 6.6 & 5.7 &  8 $\cdot 10^4 $\\
II & 79465 & $7.9$ & $6.0$ & 2925  \\
III & 1531 & 18 & 1.5 & 13.2  \\
\bottomrule
\end{tabular}
\caption{\label{characteristics_average} Document characteristics, average values.}
\end{table}

\subsection{Experimental setup}

For the dag, bdag, rbdag, and hdag we built our own implementation.
It is written in C++  (g++ version 4.6.3 with O3-switch) 
and uses Libxml 2.6 for XML parsing.
It should be mentioned that these are only rough prototypes and
that our code is not optimized at all. The running times listed in
Table~\ref{times} should be understood with this in mind. 
For the RePair-compressed dag we use 
Gonzalo Navarro's implementation of 
RePair\footnote{ http://http://www.dcc.uchile.cl/$\sim$gnavarro/software/}. This is called ``DS'' in our tables.
For TreeRePair, called ``TR'' in the tables, we use
Roy Mennicke's implementation\footnote{ http://code.google.com/p/treerepair/}
and run with max\_rank=1, which produces 1-SLT grammars.
Our test machine features an Intel Core i5 with 2.5Ghz and 4GB
of RAM. 

\begin{table}[t]
\centering
\begin{tabular}{lrrrrrrrr}
\toprule
Corpus & Parse & dag & hdag & DS & TR \\
\midrule
I & 35 & 43 & 46 & 48 & 175\\
II & 85 & 105 & 120 & 117 & 310\\
III & 6.9 & 8.7 & 9.2 & 10.0 & 14.8\\
\bottomrule
\end{tabular}
\caption{Cumulative Running times (in seconds).}
\label{times}
\end{table}

\subsection{Comparison}\label{sec:exp_results_dag}

Consider first Corpus~1 and the numbers shown in
Table~\ref{list_1} and~\ref{list_2}. The most interesting file, 
concerning the effectiveness of
the hybrid dag and of the reverse binary encoding, 
is certainly the medline file.
Just like dblp, it contains bibliographic data.
In particular, it consists of MedlineCitation elements;
such elements have ten children, the last of which varies
greatly (it is a MeshHeadingList node with varying children
lists) and thus cannot be shared in the dag. 
This is perfect for the reverse hybrid dag, which first
eliminates repeated subtrees, thus shrinking the number
of edges to 653,604, and then applies the last child/previous
sibling encoding before building a dag again.
This last step shrinks the number of edges to impressive
257,138. In contrast, the reverse binary dag has a size 
of 381,295. Thus, for this document really the combination of 
both ways of sharing, subtrees and reversed sibling sequences,
is essential. We note that in the context of the first attempts
to apply dag compression to XML~\cite{DBLP:conf/vldb/KochBG03} 
the medline files were
particularly pathological cases where dag compression did not work well.
We now have new explanations for this: using \emph{reverse} 
(last child/previous sibling) encoding slashes the size of the dag by almost
one half. And using hybrid dags again brings an improvement of
more than $30\%$.
The dblp document is similar, but does not make use of optional
elements at the end of long sibling lists. Thus, the reverse dags
are not smaller for dblp, but the hybrid dag is indeed more than
$20\%$ smaller than the dag.
The treebank document, which is a ranked tree and does not
contain long lists, gives hardly any improvement of hybrid
dag over dag, but the reverse hybrid dag is somewhat smaller
than the dag (by $5\%$). 
For treebank, TreeRePair is unchallenged and
produces a grammar that is less than half the size of DS.

\begin{table*}[t]
\centering
\small
\begin{tabular}{lrrrrrrrrrr}
\toprule
C. & Input & dag & bdag & rbdag & hdag & G(hd) & rhdag & G(rh) & DS & TR \\
\midrule
I & 43021 & 7815 & 9292 & 8185 & 6270 & 6323 & 5220 & 5285 & 2523 & 1671 \\ 
II & 90036 & 13510 & 15950 & 14671 & 10884 & 11109 & 9806 & 10039 & 5162 & 3957 \\ 
III &  2095 & 354 & 391  & 390 & 319 & 362  & 320 & 364 & 324 & 310 \\ 
\bottomrule
\end{tabular}
\caption{\label{table:overview} Accumulated sizes (in thousand edges). {\em C} stands for
{Corpus}, {\em G(hd)} for the grammar size of the hdag and {\em G(rh)} for the grammar size of 
the reverse hybrid dag.}
\end{table*}

Next, consider the accumulated numbers for the three
corpora in Table~\ref{table:overview}.
For Corpus~I, the reverse hdag is smaller than the dag
by around $38\%$ while the hdag is only around $25\%$ smaller
than the dag. As noted in Section~\ref{section_reverseDag}, the somewhat 
surprising outcome that the reverse binary encoding enables better
compression results from the custom that in many XML-documents optional
elements are listed last. This means that there are more common prefixes than
suffixes in child sequences. Hence the reverse schemes perform better.
When we transform hdags into
SLT grammars (according to Section~\ref{sec:dag_plus_string}), 
then we get a modest size increase of about 1--$2\%$.
For the term rewriting systems of Corpus~III,
the hdag improves about $10\%$ over the dag. Represented as
grammars, however, this improvement disappears and in fact
we obtain an accumulated size that is slightly larger than
the dag. Note that for this corpus, also TreeRePair (TR) is not
much smaller than the dag, and DS is even smaller than TR.
 Compared to the dag, TreeRePair
shares tree patterns (=connected subgraphs). Hence, the trees
in Corpus~III do not contain many repeated tree patterns which
are not already shared by the dag.
When we compare DS with TR, then we see on corpora~I and~II
that TreeRePair grammars are on average around $34\%$ smaller
than DS, while on Corpus~III it is only $23\%$ smaller.
On very flat files, such as the EXI documents in 
Table~\ref{list_1}, DS is about as good as TreeRePair.
For combined dag and string compression we also
experimented with another grammar-based string
compressor: Sequitur~\cite{DBLP:journals/jair/Nevill-ManningW97},
but found the combined sizes to be larger than with RePair.
Concerning running times (see Table~\ref{times}) note that 
the dag-variants stay close to the parsing time, while TreeRePair needs 
considerably more time. Hence, dags should be used when 
light-weight compression is preferred.

\section{Conclusion and future work}

We compare the sizes of five different formalisms for 
compactly representing unranked trees:
\begin{enumerate}[(1)]
\item dag
\item binary dag
\item hybrid dag
\item combination of dag and SL grammar-based string compression
\item SLT grammars (e.g. produced by BPLEX or TreeRePair)
\end{enumerate}
For the comparison of (1)--(3) we prove precise bounds:
(i) the size of the binary dag of a tree is bounded by twice the size of the hybrid dag 
of the tree and 
(ii) the size of the unranked dag of a tree is bounded by the square of the size 
of the hybrid dag of the tree. As a corollary we obtain that 
the size of the dag is at least of the size of the
binary dag, and at most the square of the size of the binary dag.
We also prove that for (1)--(4), checking equality of the subtrees rooted
at two given nodes of these structures, can be carried out in 
$O(\log N)$ time, where $N$ is the number of nodes of the tree.
One advantage of binary and hybrid dags, is that they also support
the efficient checking of equality of (ending) sibling sequences
in $O(\log N)$ time.

Our experiments over two large XML corpora and one corpus consisting of term
rewriting systems show that dags and binary dags are the quickest to construct.
Out of the dags (1)--(3), the reverse hdag
(which uses a last child/previous sibling encoding) gives the smallest results.
On our XML corpora, 
using the reverse binary encoding instead of the
standard first child/next sibling encoding gives a compression improvement
of more than 20\%.
As a practical yardstick we observe: For applications where sibling sequence check
is important, or where the uniqueness of the compressed structures is important,
the hybrid dag is a good choice. If
strong compression is paramount, then structures (4) and (5) are appropriate.
The advantage of (4) over (5) is its support of efficient subtree equality test.

We give generating functions for the exact average sizes of dags and bdags over
unranked trees on $n$ nodes and $m$ labels. 
We show that asymptotically the expected edge sizes of dags and bdags
coincide, and that the node size of bdags is twice as large as the 
node size of dags.  

In future work we would like extend our average-case size analysis also to
the hybrid dag and to Repair-compressed trees. 
Further, we would like to apply our compression within other recent 
compression schemes in databases, such as 
for instance factorized databases~\cite{DBLP:journals/pvldb/BakibayevOZ12}.

\begin{acknowledgements}
The second and fourth author were supported by the DFG grant
LO 748/8.
The third author was supported by the DFG grant
INST 268/239 and by the 
Engineering and Physical Sciences Research Council 
project "Enforcement of Constraints on XML streams" 
(EPSRC EP/G004021/1).
\end{acknowledgements}

\appendix

\section{Appendix:  Proofs 
%
of the asymptotic results}

In this appendix we prove the results from Section~\ref{subsection:asymptotic}.
They rely on methods described in the book {\em Analytic Combinatorics}~\cite{AnalyticCombinatorics}
from 
Flajolet and Sedgewick;
 in particular, the {\em Transfer Theorem}
in Chapter 6.

\subsection{Proof of Proposition~\ref{lemme1-fl}
(case $m=1$)}
\label{proof:fl}
For notational simplicity, we 
first show the proof for unlabeled trees (i.e.\ a singleton alphabet, $m=1$).
In Section~\ref{sec:binary-m} we 
adapt the proof 
to arbitrary alphabets.
By Lemma~\ref{lem:C-binary}
and Proposition~\ref{lemma:binary_K},  the \gf\ for the cumulated node
size of compressed  binary trees reads
$$
\mathbf \KN (z): =\mathbf  \KN _1^{\mathcal{B}}(z) =\sum_{p\ge 0} B_p \mathbf C_p(z)
$$
with
\beq\label{Cp1}
B_p=\frac 1{p+2}{2p+2 \choose p+1} \quad \hbox{and} \quad 
\mathbf C_p(z)= \frac{\sqrt{1-4z+4z^{p+2}} -\sqrt{1-4z}}{2z^2}.
 \eeq
We want to prove Proposition~\ref{lemme1-fl}, which, when $m=1$, reads as follows.
\begin{proposition}\label{prop:m=1}
The generating function $\mathbf \KN (z)$
is analytic in the domain $D$ 
defined by $|z|< \frac{1}{2}$ and $z \notin
[ \frac{1}{4},\frac{1}{2} ]$. As $z$ tends to $ \frac{1}{4}$ in $D$,
one has
\begin{equation*}
\mathbf  \KN (z) = \frac{8 \kappa} {\sqrt{(1-4 z) \ln ((1-4z)^{-1})}} + O \left(\frac{1}{\sqrt{(1-4z)\ln^3((1-4z)^{-1})}} \right).
\end{equation*}
where $\kappa = \sqrt{\frac{\ln 4}{\pi}}$ .
\end{proposition}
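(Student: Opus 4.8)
The statement has two parts---analyticity in $D$, and the singular expansion at $z=\tfrac14$---which I would establish in that order. Throughout put $u=1-4z$ and $L=\ln(1/|u|)$; recall from~\eqref{Cp1} that $\mathbf C_p(z)=\frac{\sqrt{1-4z+4z^{p+2}}-\sqrt{1-4z}}{2z^2}$, and use the estimate $B_p=\frac{4^{p+1}}{\sqrt\pi\,p^{3/2}}\bigl(1+O(1/p)\bigr)$ together with $\mathbf B(z)=\sum_{p\ge0}B_pz^p=\frac{1-2z-\sqrt{1-4z}}{2z^2}$ from Lemma~\ref{lemma:gf_labeled_binary_trees} (case $m=1$), whose expansion at $z=\tfrac14$ is $\mathbf B(z)=4-8\sqrt{1-4z}+O(1-4z)$, and whose exact value there is $\mathbf B(\tfrac14)=4$.

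For analyticity, put $P_p(z)=1-4z+4z^{p+2}$. On the circle $|z|=\tfrac12$ one has $|1-4z|\ge1>2^{-p}\ge|4z^{p+2}|$ for $p\ge1$, so by Rouché $P_p$ has exactly one zero in $|z|<\tfrac12$; since $P_p$ has real coefficients with $P_p(\tfrac14)=4^{-(p+1)}>0>P_p(\tfrac14+4^{-(p+1)})$, this zero is real and lies in $(\tfrac14,\tfrac12)$. Hence the branch of $\sqrt{P_p}$ with value $1$ at $z=0$ continues analytically to $D$, and so does $\mathbf C_p$. A direct computation gives the exact identity $\tilde r_p(z):=\mathbf C_p(z)-\frac{z^p}{\sqrt{1-4z}}=-\frac{4z^{2p+2}}{\sqrt{1-4z}\,\bigl(\sqrt{P_p(z)}+\sqrt{1-4z}\bigr)^{2}}$, so $\mathbf N(z)=\frac{\mathbf B(z)}{\sqrt{1-4z}}+R(z)$ with $R(z)=\sum_{p\ge0}B_p\tilde r_p(z)$ (the identity holds for $|z|<\tfrac14$ and extends by continuation). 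The first summand is analytic in $D$, its only candidate singularity $z=\tfrac14$ being excluded. On any compact $K\subset D$ one has $\operatorname{dist}(\tfrac14,K)>0$, so $\sqrt{P_p}+\sqrt{1-4z}$ is bounded away from $0$ uniformly in $p$, whence $B_p\tilde r_p(z)=O\bigl(B_p\rho^{2p+2}\bigr)$ with $\rho=\max_K|z|<\tfrac12$; since $4\rho^2<1$ the series for $R$ converges uniformly on $K$, so $\mathbf N$ is analytic in $D$.

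For the expansion, fix $z\to\tfrac14$ in $D$ and set $p_0=\lfloor\log_4(1/|u|)\rfloor$, so that $4|z|^{p_0+2}\asymp|u|$; I would estimate $R(z)$ by splitting the sum at $p_0$. For $p>p_0$ the ratio $4|z|^{p+2}/|u|$ is $O(1)$, so $\bigl(\sqrt{P_p}+\sqrt{1-4z}\bigr)^2=4(1-4z)\bigl(1+O(1)\bigr)$ and $B_p\tilde r_p(z)=O\bigl(B_p|z|^{2p+2}|u|^{-3/2}\bigr)$; since $4|z|^2\asymp\tfrac14$ and $(1/4)^{p_0}\asymp|u|$, this tail contributes $O\bigl(|u|^{-1/2}L^{-3/2}\bigr)$. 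For $p\le p_0$, the identity above gives $\sum_{p\le p_0}B_p\tilde r_p(z)=\sum_{p\le p_0}B_p\mathbf C_p(z)-\frac{1}{\sqrt{1-4z}}\sum_{p\le p_0}B_pz^p$, where the first term is $O\bigl(|u|^{-1/2}L^{-3/2}\bigr)$ by the envelope bound $|\mathbf C_p(z)|\le\frac{\sqrt{|u|}+2|z|^{(p+2)/2}}{2|z|^2}$ (both resulting sums are geometric-type, dominated by their $p=p_0$ term, with $4^{p_0}\asymp1/|u|$ and $2^{p_0}\asymp|u|^{-1/2}$), and $\sum_{p\le p_0}B_pz^p=\sum_{p\le p_0}B_p4^{-p}+\sum_{p\le p_0}B_p4^{-p}\bigl((1-u)^p-1\bigr)$ with second sum $O\bigl(|u|\sum_{p\le p_0}p^{-1/2}\bigr)=O\bigl(|u|p_0^{1/2}\bigr)=O(L^{-3/2})$ (as $|u|p_0\to0$), while $\sum_{p\le p_0}B_p4^{-p}=\mathbf B(\tfrac14)-\sum_{p>p_0}B_p4^{-p}=4-\frac{4}{\sqrt\pi}\sum_{p>p_0}p^{-3/2}\bigl(1+O(1/p)\bigr)=4-\frac{8\kappa}{\sqrt L}+O(L^{-3/2})$, using $\sum_{p>p_0}p^{-3/2}=2p_0^{-1/2}+O(p_0^{-3/2})$, $p_0=L/\ln4+O(1)$, and $\kappa=\sqrt{\ln4/\pi}$. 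Collecting, $R(z)=-\frac{4}{\sqrt{1-4z}}+\frac{8\kappa}{\sqrt{(1-4z)L}}+O\bigl(|u|^{-1/2}L^{-3/2}\bigr)$; adding $\frac{\mathbf B(z)}{\sqrt{1-4z}}=\frac{4}{\sqrt{1-4z}}-8+O(\sqrt{1-4z})$, the $4/\sqrt{1-4z}$ terms cancel, giving $\mathbf N(z)=\frac{8\kappa}{\sqrt{(1-4z)\ln((1-4z)^{-1})}}+O\bigl(\frac{1}{\sqrt{(1-4z)\ln^3((1-4z)^{-1})}}\bigr)$, which is the claim (inside the square roots, $L=\ln(1/|u|)$ and $\ln((1-4z)^{-1})$ differ by only $O(1)$).

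The main obstacle is to make every $O$-estimate above uniform as $z\to\tfrac14$ over a full $\Delta$-domain---this uniformity is exactly what the subsequent appeal to the Transfer Theorem of~\cite{AnalyticCombinatorics} needs. The key structural point that makes this possible is to work throughout with the analytic continuation $\mathbf N=\frac{\mathbf B(z)}{\sqrt{1-4z}}+R$ and never with the defining series $\sum_pB_p\mathbf C_p(z)$, since the latter diverges once $|z|>\tfrac14$; correspondingly the ``main'' pieces ($\mathbf B(\tfrac14)=4$ and the tail $\sum_{p>p_0}B_p4^{-p}$) are evaluated at $z=\tfrac14$, where everything converges, and only small corrections retain a $z$-dependence. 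The remaining delicacies are purely in the bookkeeping: a uniform Euler--Maclaurin estimate for the $p$-sums against the corresponding integrals (for the $2p_0^{-1/2}$ term and the $O(1/p)$ correction to $B_p$), and checking that the crossover range $p\approx p_0$ is absorbed cleanly---which it is, because the split is handled through the exact identity $\tilde r_p=\mathbf C_p-z^p/\sqrt{1-4z}$ plus the single envelope bound for $\sum_{p\le p_0}B_p\mathbf C_p$, leaving no approximation that degrades near $p_0$. Once these are discharged the entire error collapses to the two $O(|u|^{-1/2}L^{-3/2})$ contributions identified above.
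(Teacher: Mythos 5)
Your proof is correct, and its skeleton coincides with the paper's: your exact identity $\tilde r_p(z)=\mathbf C_p(z)-z^p/\sqrt{1-4z}=-4z^{2p+2}/\bigl(\sqrt{1-4z}\,(\sqrt{P_p}+\sqrt{1-4z})^2\bigr)$ is precisely the summand $\frac{\sqrt{1-4z}}{2z^2}B_p(\sqrt{1+u_p}-1-\tfrac{u_p}{2})$ in the paper's rewriting~\eqref{K-final}, your threshold $p_0=\lfloor\log_4(1/|u|)\rfloor$ agrees with the paper's $n(z)=\lfloor\ln|1-4z|/\ln|z|\rfloor$ up to $O(1)$, and your bounds on $\sum_{p\le p_0}B_p\mathbf C_p$ and on the tail $\sum_{p>p_0}B_p\tilde r_p$ reproduce the paper's Steps 3 and 4 (the paper gets a slightly cleaner envelope via $|\sqrt{1+u}-1|\le\sqrt{|u|}$, but the orders are identical). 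Where you genuinely diverge is in the dominant term: the paper estimates $\mathbf\KN^{(2)}(n,z)=\frac{1}{\sqrt{1-4z}}\sum_{p>n}B_pz^p$ by a Cauchy integral over a Hankel contour (and proves the partial-sum bound of Lemma~\ref{lem:sum-div} the same way), whereas you work with the \emph{partial} sum $\sum_{p\le p_0}B_pz^p$, replace $z^p$ by $4^{-p}$ at a cost $O(4^{-p}p|u|)$ --- admissible precisely because the truncation keeps $p|u|\le p_0|u|\to 0$ --- and then read off the answer from the explicit value $\mathbf B(\tfrac14)=4$ and the elementary tail asymptotics $\sum_{p>p_0}B_p4^{-p}=\frac{8}{\sqrt\pi}p_0^{-1/2}+O(p_0^{-3/2})$, with the $4/\sqrt{1-4z}$ terms cancelling against $\mathbf B(z)/\sqrt{1-4z}$. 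This removes all complex contour integration from the argument at the price of some extra bookkeeping (the $|z^p-4^{-p}|$ correction, the $O(1)$ discrepancy between $\ln(1/|u|)$ and the complex $\ln((1-4z)^{-1})$, both of which you handle correctly); the paper's Hankel-contour route is heavier but packages the uniformity in $z$ once and for all, which is the one point you rightly flag as the remaining routine work. Two trivial slips: your envelope bound should read $|\mathbf C_p(z)|\le(2\sqrt{|u|}+2|z|^{(p+2)/2})/(2|z|^2)$ (a missing factor of $2$ that does not affect the order), and the phrase ``$4(1-4z)(1+O(1))$'' should be stated as the lower bound $|\sqrt{P_p}+\sqrt{1-4z}|\ge\sqrt{|1-4z|}$, which follows from $\Re\sqrt{1+u_p}>0$ exactly as in Lemma~\ref{lem:b1}.
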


   The proof of 
this proposition 
works in several steps.  We first show that the function
   $\mathbf \KN (z)$ is analytic in  $D$. 
   Then we split  $\mathbf \KN (z)$  into three parts.
   The splitting  depends on a threshold integer $n\equiv n(z)$.
   We treat each part independently: we estimate precisely one of
   them, and  simply bound the
   other two.  We then show that these two parts can be neglected for a
   suitable choice of $n(z)$.
   \paragraph{Step 1: $\mathbf \KN (z)$ is analytic in $D$.} We first
   prove that  the series $\mathbf C_p(z)$ defined by~\eqref{Cp1} is analytic in   $D$. This is implied by the following lemma. 

\begin{lemma} \label{lem:rouche}
Let $\Delta_p(z)=1-4z+4z^{p+2}$. For $p \ge 1$, this
      polynomial has exactly one root of modulus less than $1/2$. This
      root is real and larger than $1/4$. 
      When $p=0$, then $\Delta_p(z)=(1-2z)^2$,
with a double root at $1/2$. 
In particular, $\Delta_p(z)$ does not vanish in $D$.

More generally, the rational function $u_p:= 4z^{p+2}/(1-4z)$ does not
take any value from $(-\infty, -1]$ for $z\in D$. 
\end{lemma}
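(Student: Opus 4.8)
The plan is to derive every assertion from a single statement about the location of the zeros of the auxiliary polynomial $h_t(z):=4z^{p+2}+4tz-t$, for real $t\le -1$. The starting observation is the chain of equivalences $u_p(z)=t\iff 4z^{p+2}=t(1-4z)\iff h_t(z)=0$; in particular $h_{-1}(z)=4z^{p+2}-4z+1=\Delta_p(z)$, so the root count for $\Delta_p$, the non-vanishing of $\Delta_p$ on $D$, and the fact that $u_p$ omits $(-\infty,-1]$ on $D$ all reduce to controlling the zeros of $h_t$ for $t\le -1$. Concretely I would prove: for $p\ge 1$ and every real $t\le -1$, the polynomial $h_t$ has exactly one zero in the open disk $\{|z|<1/2\}$, and that zero is real and lies in $(1/4,1/2)$. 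Since $(1/4,1/2)\subseteq[1/4,1/2]$ is disjoint from $D$, and every other zero of $h_t$ has modulus $\ge 1/2$ and hence is also outside $D$, this instantly yields the first assertion of the lemma (take $t=-1$, which gives the claimed real root $>1/4$), the non-vanishing of $\Delta_p$ in $D$, and the claim that $u_p$ never takes a value $t\le -1$ on $D$.

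To count the zeros of $h_t$ inside $\{|z|<1/2\}$ I would apply Rouch\'e's theorem on the circle $|z|=1/2$ with the splitting $h_t=g+r$, where $g(z)=4tz$ and $r(z)=4z^{p+2}-t$. On this circle $|g(z)|=2|t|$, while $|r(z)|\le 4\cdot 2^{-(p+2)}+|t|=2^{-p}+|t|$; for $p\ge 1$ we have $|t|\ge 1>1/2\ge 2^{-p}$, so $|r(z)|<|g(z)|$ holds at every point of $|z|=1/2$, and therefore $h_t$ and $g$ have the same number of zeros (with multiplicity) in the disk, namely the single simple zero $z=0$ of $g$. To pin this zero onto the real axis, I would evaluate $h_t$ at $0,\tfrac14,\tfrac12$: we get $h_t(0)=-t\ge 1>0$, $h_t(1/4)=4^{-(p+1)}>0$, and $h_t(1/2)=2^{-p}+t\le \tfrac12-1<0$. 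Moreover $h_t'(x)=4(p+2)x^{p+1}+4t$, and on $[0,1/4]$ one has $4(p+2)x^{p+1}\le (p+2)4^{-p}\le \tfrac34<4\le -4t$, so $h_t$ is strictly decreasing on $[0,1/4]$ and stays positive there; by the intermediate value theorem $h_t$ then has a real zero in $(1/4,1/2)$, and by the Rouch\'e count this is the unique zero of $h_t$ in the whole disk. In particular it is the unique root of $\Delta_p$ of modulus $<1/2$, it is real, and it exceeds $1/4$.

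It remains to dispose of the degenerate case $p=0$. Here $\Delta_0(z)=1-4z+4z^2=(1-2z)^2$ has a double root at $z=1/2\notin D$, which settles everything concerning $\Delta_0$; and for $t<-1$ (the range in which the Rouch\'e comparison $|r|<|g|$ on $|z|=1/2$ is still strict, since then $2^{-p}=1<|t|$) the argument of the previous paragraph carries over verbatim, now with $h_t'(x)=8x+4t<0$ on $[0,1/4]$, so again $u_p$ omits $t$ on $D$; the only remaining value $t=-1$ with $p=0$ is exactly the $\Delta_0$ case already handled. I expect the only real care needed to be bookkeeping: choosing the Rouch\'e splitting so that the strict inequality holds on the entire circle (it fails precisely at the excluded pair $p=0,t=-1$, which is why that case is peeled off), and noting that ``$h_t$ has a unique zero in the disk'' combined with ``$h_t$ has a real zero in $(1/4,1/2)$'' forces that unique zero to be the real one — beyond Rouch\'e's theorem and the intermediate value theorem there is no genuine analytic difficulty.
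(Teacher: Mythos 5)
Your proof is correct and follows essentially the same route as the paper: your polynomial $h_t(z)=4z^{p+2}+4tz-t$ is exactly the paper's comparison polynomial $c(1-4z)+4z^{p+2}$ with $c=-t\ge 1$, and both arguments combine Rouch\'e's theorem on the circle $|z|=1/2$ with sign evaluations at $z=1/4$ and $z=1/2$ to locate the unique root in $(1/4,1/2)$. The only cosmetic differences are your choice of $4tz$ (rather than $c(1-4z)$) as the dominant term in the Rouch\'e comparison, and the superfluous monotonicity check on $[0,1/4]$.
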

\begin{proof} To prove the existence of a unique root of modulus less
      than $1/2$, we apply Rouch\'e's theorem (see e.g. \cite[p.~270]{AnalyticCombinatorics})
by comparing  $\Delta_p(z)$ to the auxiliary
      function $1-4z$, and using the integration contour 
      $\{|z|=1/2\}$. Details are very simple, and are left to the 
      reader. 

Since $\Delta_p$ has real coefficients, the unique root of modulus
less than $1/2$ must be real. One then concludes the proof by observing that
$\Delta_p(1/4)>0$ while $\Delta_p(1/2)<0$.

The proof of the final statement is similar, upon comparing the
polynomials $c(1-4z)+4z^{p+2}$ and $c(1-4z)$, for $c\ge 1$.
\qed
\end{proof}
In order to prove that $\mathbf \KN (z)$ itself is analytic in the domain $D$, we
      rewrite it as follows, still denoting $u_p=4z^{p+2}/(1-4z)$:
      \begin{eqnarray}
\mathbf \KN (z) 
&=&\frac{\sqrt{1-4z}}{2z^2} \sum_{p\ge 0}
B_{p} \left( \sqrt{1+u_p}-1\right)\nonumber
\\
&=&\frac{\sqrt{1-4z}}{2z^2} \sum_{p\ge 0}
B_{p} \left(\frac{u_p}2+ \sqrt{1+u_p}-1-\frac{u_p}2\right)\nonumber
\\
&=&\frac{1}{\sqrt{1-4z}}\sum_{p\ge 0} B_p z^p + \frac{\sqrt{1-4z}}{2z^2} \sum_{p\ge
0}B_{p} \left( \sqrt{1+u_p}-1-\frac{u_p}2\right)\nonumber
\\
&=&\frac{1}{\sqrt{1-4z}}
\mathbf B(z)
+ \frac{\sqrt{1-4z}}{2z^2} \sum_{p\ge 0}B_p \left(\sqrt{1+u_p}-1-\frac{u_p}2\right),\label{K-final}
      \end{eqnarray}
where
\begin{equation}\label{B-def}
\mathbf B(z)= \sum_{p\ge 0}B_p z^p=\frac{1-2z-\sqrt{1-4z}}{2z^2}
\end{equation}
is the generating function of the shifted Catalan numbers.

Since $\sqrt{1-4z}$ and $\mathbf B(z)$ are analytic in $D$, it suffices to
      study the convergence of the sum over $p$ in~\eqref{K-final}.
%
    \begin{lemma}\label{lem:b3}
      For all $u\in \mathbb{C}\setminus(-\infty, -1]$, we have
$$
\left|\sqrt{1+u}-1-\frac u 2\right| \le \frac{|u|^2}2.
$$
      \end{lemma}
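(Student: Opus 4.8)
The plan is to reduce the estimate to an elementary fact about the principal square root via the substitution $w := \sqrt{1+u}$. First I would note that the excluded half-line $(-\infty,-1]$ for $u$ corresponds exactly to the branch cut $(-\infty,0]$ of the principal square root applied to $1+u$; hence for every $u \in \mathbb{C}\setminus(-\infty,-1]$ the number $w = \sqrt{1+u}$ is well-defined, nonzero, and satisfies $\operatorname{Re}(w) > 0$ (this is the branch that makes $\mathbf{B}(z)$ and the series $\mathbf{C}_p(z)$ analytic near the origin, so it is consistent with the rest of the paper).

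Next comes a one-line algebraic identity. Since $u = w^2-1$,
\[
\sqrt{1+u} - 1 - \frac{u}{2} \;=\; w - 1 - \frac{w^2-1}{2} \;=\; -\frac{1}{2}(w-1)^2 ,
\]
so $\bigl|\sqrt{1+u}-1-\tfrac u2\bigr| = \tfrac12|w-1|^2$, while $|u|^2 = |w^2-1|^2 = |w-1|^2\,|w+1|^2$. Consequently the desired inequality is equivalent to $|w-1|^2 \le |w-1|^2|w+1|^2$, which holds whenever $|w+1|\ge 1$ (the remaining case $w=1$, i.e.\ $u=0$, being trivial). Finally, $\operatorname{Re}(w)>0$ yields $|w+1|^2 = (\operatorname{Re} w + 1)^2 + (\operatorname{Im} w)^2 > 1$, completing the argument.

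The only point requiring care is the first step: one must make sure that the branch of $\sqrt{1+u}$ used is indeed the principal one and that the hypothesis $u\notin(-\infty,-1]$ is exactly the condition guaranteeing $\operatorname{Re}(w)>0$ — after that the proof is a two-line computation, so I do not expect a genuine obstacle. (For the subrange $|u|\le 1$ one could alternatively expand $\sqrt{1+u} = \sum_{k\ge0}\binom{1/2}{k}u^k$ and invoke $\sum_{k\ge2}\bigl|\binom{1/2}{k}\bigr| = 1 - \tfrac12 = \tfrac12$, which follows by letting $x\to1^-$ in $1-\sqrt{1-x} = \sum_{k\ge1}\bigl|\binom{1/2}{k}\bigr|x^k$; but the substitution argument handles all of $\mathbb{C}\setminus(-\infty,-1]$ uniformly and is preferable.)
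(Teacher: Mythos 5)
Your proof is correct and follows essentially the same route as the paper's: the substitution $x=\sqrt{1+u}$ with $\operatorname{Re}(x)>0$, the factorization $|u|=|x-1|\,|x+1|$, and the reduction to $|x+1|\ge 1$. You are in fact slightly more explicit than the paper, which leaves the identity $\sqrt{1+u}-1-\tfrac u2=-\tfrac12(x-1)^2$ implicit.
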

      \begin{proof}
 Write $x=\sqrt{1+u}$, so that $\Re(x)>0$. Then
      $|u|=|x^2-1|=|x-1||x+1|$,  so that the above inequality reads
$$
\frac{|x-1|^2}2 \le     \frac{|x-1|^2|x+1|^2}2 ,
$$
or equivalently
$1\le|x+1|$,
which is clearly true since $\Re(x)>0$.
\qed
      \end{proof}
 Recall that all points of $D$ have modulus less than $1/2$.
Consider a disk included in $D$.  For $z$ in this disk, the quantity 
$$
|u_p|^2= \frac{16 |z|^{2p+4}}{|1-4z|^2}
$$
is uniformly bounded by $c r^{p}$, for constants $c$ and
      $r<1/4$. 
By Lemma~\ref{lem:rouche}, we can apply   Lemma~\ref{lem:b3} to $u_p$.  Hence 
  \begin{eqnarray*}
      \sum_{p\ge 0}B_p
\left|\sqrt{1+u_p}-1-\frac{u_p}2\right|
&\le&  \frac 1 2 \sum_{p\ge 0}B_p |u_p|^2\\
&\le&  \frac c 2 \sum_{p\ge 0}B_p r^p <\infty.
  \end{eqnarray*}
Hence the series occurring in~\eqref{K-final}
      converges uniformly in the disk, and
      $\mathbf \KN (z)$ is analytic in~$D$. 
\qed

\paragraph{Step 2: splitting $\mathbf \KN (z)$.}
We fix an integer $n$ and write
\begin{equation}\label{K-split}
\mathbf \KN (z)=\mathbf \KN ^{(1)}(n,z)+\mathbf \KN ^{(2)}(n,z)+\mathbf \KN ^{(3)}(n,z),
\end{equation}
where
\begin{eqnarray}
\mathbf \KN ^{(1)}(n,z)&=& \frac{\sqrt{1-4z}}{2z^2} \sum_{p=0}^n
B_{p} \left( \sqrt{1+u_p}-1\right),\nonumber
\\
\mathbf \KN ^{(2)}(n,z)&=&\frac 1{\sqrt{1-4z}} \left( \mathbf B(z)- \sum_{p=0}^n B_p z^p\right),\label{K2}
\\
\mathbf \KN ^{(3)}(n,z)&=& \frac{\sqrt{1-4z}}{2z^2} \sum_{p>n}B_{p} \left(\sqrt{1+u_p}-1-\frac{u_p}2\right).\label{K3}
\end{eqnarray}
One readily checks that~\eqref{K-split} indeed holds. Moreover, each
$\mathbf \KN ^{(i)}(n,z)$ is analytic in $D$  for any $i$ and $n$.

\paragraph{Step 3: an upper bound on $\mathbf \KN ^{(1)}$.} 
\begin{lemma}\label{lem:b1}
For any $u\in \mathbb{C} \setminus(-\infty, -1 ]$, we have $|\sqrt{1+u}-1|\le \sqrt{|u|}$.
\end{lemma}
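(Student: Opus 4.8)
The plan is to mimic the proof of Lemma~\ref{lem:b3}. I would write $x=\sqrt{1+u}$, using the principal branch of the square root. Since by hypothesis $u\notin(-\infty,-1]$, the point $1+u$ does not lie on the branch cut $(-\infty,0]$, so $x$ is well defined and, crucially, $\Re(x)>0$ (the principal square root maps $\mathbb{C}\setminus(-\infty,0]$ into the open right half-plane). From $u=x^2-1$ we get $|u|=|x-1|\,|x+1|$, and the desired inequality $|\sqrt{1+u}-1|\le\sqrt{|u|}$ becomes
$$
|x-1|\ \le\ \sqrt{|x-1|\,|x+1|}.
$$

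If $x=1$ this is trivial. Otherwise $|x-1|>0$, so after squaring and dividing by $|x-1|$ the inequality reduces to $|x-1|\le|x+1|$. This last step is immediate from $\Re(x)>0$, since
$$
|x+1|^2-|x-1|^2 = 4\,\Re(x) > 0 .
$$
Hence $|x-1|\le|x+1|$, which gives the claim.

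I do not expect any genuine obstacle: the only point requiring care is the use of the hypothesis $u\notin(-\infty,-1]$, which is exactly what guarantees $1+u\notin(-\infty,0]$ and therefore $\Re(\sqrt{1+u})>0$ — precisely the same ingredient that made Lemma~\ref{lem:b3} work. (As there, this bound will then be applied termwise to $u=u_p$, which by Lemma~\ref{lem:rouche} stays off $(-\infty,-1]$ for $z\in D$, in order to control $\mathbf{N}^{(1)}(n,z)$.)
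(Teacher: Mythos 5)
Your proposal is correct and follows essentially the same route as the paper: the paper also sets $x=\sqrt{1+u}$ with $\Re(x)>0$ and reduces the claim to $|x-1|\le|x+1|$. You merely spell out the intermediate squaring step and the trivial case $x=1$ that the paper leaves implicit.
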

\begin{proof}
    The proof is similar to the proof of Lemma~\ref{lem:b3}. Write $x=\sqrt{1+u}$, so that $\Re(x)>0$. Then
the inequality we want to
prove reads $|x-1|\le |x+1|$, which is clearly
true.\qed
\end{proof}

\begin{lemma}\label{lem:sum-div}
Let $y>0$, and define 
$$
a(n,y):=\sum_{p=0}^n B_py^p.
$$
For any $c>1/4$, there exists a neighborhood of $c$ such that
$$
a(n,y) = O((4y)^nn^{-3/2}),
$$
uniformly in $n$ and $y$ in this neighborhood of $c$.
\end{lemma}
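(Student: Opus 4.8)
The plan is to exploit the fact that, for $y>1/4$, the summands $B_p y^p$ grow geometrically in $p$, so that $a(n,y)$ is dominated up to a multiplicative constant by its last term $B_n y^n$; the claimed bound then follows from the classical estimate $B_p=\Theta(4^p p^{-3/2})$ for the shifted Catalan numbers $B_p=\frac1{p+2}\binom{2p+2}{p+1}$, which is also a consequence of \eqref{eq:B_m2} via Stirling's formula. Throughout I take $n\ge 1$.

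First I would take the neighborhood to be a compact interval $[\gamma_1,\gamma_2]$ with $\frac14<\gamma_1<c<\gamma_2$, and use the exact ratio $B_{p+1}/B_p=\frac{2(2p+3)}{p+3}=4-\frac{6}{p+3}$. Since $4\gamma_1>1$, an elementary computation produces an integer $p_0\ge 1$ and a real $\rho>1$ such that
$$
\frac{B_{p+1}y^{p+1}}{B_p y^p}=y\Bigl(4-\frac{6}{p+3}\Bigr)\ \ge\ \rho
\qquad\text{for all }p\ge p_0\text{ and all }y\in[\gamma_1,\gamma_2].
$$
Hence for $p_0\le p\le n$ one has $B_p y^p\le\rho^{-(n-p)}B_n y^n$, and summing the geometric series gives $\sum_{p=p_0}^{n}B_p y^p\le(1-\rho^{-1})^{-1}B_n y^n$ whenever $n\ge p_0$.

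It remains to absorb the initial block $\sum_{p=0}^{p_0-1}B_p y^p$, which is bounded by the constant $M_0:=\sum_{p=0}^{p_0-1}B_p\gamma_2^{\,p}$, independent of $n$ and $y$ (and equal to all of $a(n,y)$ when $n<p_0$). To see that $M_0=O\bigl((4y)^n n^{-3/2}\bigr)$ uniformly, put $g(n,y):=(4y)^n n^{-3/2}$ and observe that $g$ is bounded below by a positive constant on $\mathbb{Z}_{\ge1}\times[\gamma_1,\gamma_2]$: for $y\ge\gamma_1$ we have $4y\ge4\gamma_1>1$, so $g(\cdot,y)\to\infty$, whence $\inf_{n\ge1}g(n,y)=\min_{1\le n\le N_0}g(n,y)$ for some $N_0$ depending only on $\gamma_1$, and $y\mapsto\min_{1\le n\le N_0}g(n,y)$ is continuous and strictly positive on the compact set $[\gamma_1,\gamma_2]$. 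Combining the two estimates (distinguishing $n<p_0$ from $n\ge p_0$) shows that $a(n,y)$ is at most a fixed multiple of $B_n y^n$ plus a fixed multiple of $g(n,y)$, uniformly in $n$ and $y$; finally $B_n\le K\,4^n n^{-3/2}$ for an absolute constant $K$ gives $B_n y^n\le K\,g(n,y)$, and the lemma follows.

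I expect the only mildly delicate point to be the uniformity bookkeeping — in particular checking that neither the fixed-length leftover block $\sum_{p<p_0}B_p y^p$ nor the range $1\le n<p_0$ spoils the $O\bigl((4y)^n n^{-3/2}\bigr)$ estimate — whereas the geometric-domination heart of the argument is entirely routine.
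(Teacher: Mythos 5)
Your proof is correct, but it takes a genuinely different route from the paper's. The paper forms the generating function $\sum_{n\ge 0}a(n,y)x^n=\mathbf B(xy)/(1-x)$, writes $a(n,y)$ in terms of the coefficient $I(n,y)=[x^{n+2}]\frac{\sqrt{1-4xy}}{1-x}$, and estimates that coefficient by a Cauchy integral deformed onto a Hankel contour, splitting the contour at $\Re(t)=\ln^2 n$ in the standard singularity-analysis fashion. You instead argue entirely at the level of the terms: the exact ratio $B_{p+1}/B_p=4-\tfrac{6}{p+3}$ makes the summands eventually geometric with ratio at least $\rho>1$ once $y\ge\gamma_1>1/4$, so the sum is dominated by a constant multiple of its last term $B_ny^n\le K(4y)^nn^{-3/2}$, and the finite initial block is absorbed because $(4y)^nn^{-3/2}$ is uniformly bounded below on $\mathbb Z_{\ge1}\times[\gamma_1,\gamma_2]$. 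Your uniformity bookkeeping (the choice of $p_0$ and $\rho$ depending only on $\gamma_1$, the compactness argument for the lower bound on $g(n,y)$, and the separate treatment of $n<p_0$) is complete, and the argument covers every way the lemma is invoked in the paper, since $y$ there is always a positive real near $c=1/2$ (or $m/2$). What each approach buys: yours is shorter, self-contained, and avoids the Hankel-contour estimates that the paper only sketches by reference to Flajolet--Sedgewick; it also transparently handles the additive constant $\tfrac{1-2y}{2y^2}$ that the paper's coefficient extraction leaves implicit. The paper's route keeps the machinery uniform with the rest of the appendix (the same contour technique is reused for $\mathbf K^{(2)}$) and would in principle deliver a precise asymptotic equivalent for $a(n,y)$ rather than just an upper bound --- but since the lemma is only ever used to bound the negligible pieces $\mathbf K^{(1)}_m$ and $\mathbf N^{(1)}_m$ from above, that extra precision is never needed, and your elementary argument suffices.
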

\begin{proof}  Let us form the generating function of the numbers $a(n,y)$. One finds:
$$
\sum_{n\ge 0} a(n,y) x^n = \sum_{n\ge 0}\sum_{p=0}^n B_py^p x^n
= \frac{\mathbf B(xy)}{1-x} = \frac{1-2xy-\sqrt{1-4xy}}{2x^2y^2(1-x)}.
$$
Thus
$$
a(n,y)= 
\frac {1-2y}{2y^2}- [x^n] \frac{\sqrt{1-4xy}}{2x^2y^2 (1-x)}
=\frac 1{2y^2}\left( 1-2y- I(n,y)\right),
$$
where 
$$
I(n,y)=
[x^{n+2}] 
\frac{\sqrt{1-4xy}}{1-x}.
$$
We now estimate $I(n,y)$ using a Cauchy integral:
$$
I(n,y)=\frac 1 {2i\pi} \int_\circlearrowleft \frac{\sqrt{1-4xy}}{1-x} \frac{dx}{x^{n+3}},
$$
where the integral is over a small circle around the origin. Recall
that $y$ is taken in a small 
neighborhood of $c>1/4$. Hence the
dominant singularity of the integrand is at $x=1/(4y)$. (A second
singularity occurs later at $x=1$.)  Writing $x=u/(4y)$ gives
$$
I(n,y)=\frac {(4y)^{n+3}} {2i\pi} \int_\circlearrowleft  \frac{\sqrt{1-u}}{4y-u} \frac{du}{u^{n+3}},
$$
the integral being again over a small circle around the origin. We now
deform the integration contour as discussed
in~\cite[p.~382]{AnalyticCombinatorics}. This gives
$$
I(n,y)= \frac{(4y)^{n+3} }{2i\pi n^{3/2}} \int_{\mathcal{H}} \frac{\sqrt{-t}}{4y-1-t/n}
\left(1+\frac t n\right)^{-n-3} dt,
$$
where $\mathcal{H}$ is the Hankel contour at distance $1$ from the real
positive axis described in~\cite[p.~382]{AnalyticCombinatorics}.
Hence the lemma will be proved if we can show that the integral over
$\mathcal{H}$ is $O(1)$, uniformly in $n$ and $y$. As in our favorite reference~\cite[p.~383]{AnalyticCombinatorics}, we
split it into two parts, depending on whether $\Re(t) \le \ln^2 n$ or
$\Re(t) \ge \ln^2 n$. On the first part, $4y-1-t/n$ remains uniformly
away from $0$, while
$$
 \int_{\mathcal{H}\cap\{\Re(t) \le \ln^2 n \}}\left| {\sqrt{-t}}\left(1+\frac t n\right)^{-n-3}\right| dt
$$
can be shown to be $O(1)$, using the techniques of~\cite[p.~383]{AnalyticCombinatorics} or \cite{Fl-Od}. On the second part,
$|4y-1-t/n|$ reaches its minimal value $1/n$ when
$\Re(t)=n(4y-1)$. Writing $t=x+i$, we can bound the modulus of the second part by
\begin{eqnarray*}
 2 n \int_ {x\ge \ln^2 n }\sqrt{2x}\left(1+\frac x n\right)^{-n-3}
dx
&\le &
2n \exp(-\ln^2(n))\int_ {x\ge \ln^2 n }\sqrt{2x}\left(1+\frac x
n\right)^{-3} dx\\
&\le& 2n^{3/2} \exp(-\ln^2(n))\int_{u\ge 0} \sqrt{2u}(1+u)^{-3}
du\\
&=&o(n^{-\alpha})
\end{eqnarray*}
for any real $\alpha$.
This completes the proof of the lemma.\qed
\end{proof}

Combining 
Lemmas~\ref{lem:b1} and~\ref{lem:sum-div}
(with $c=1/2$)
gives,  for $z$ in $D$ close enough to  $1/4$:
\begin{eqnarray}
|\mathbf \KN ^{(1)}(n,z)|&\le& \frac{\sqrt{|1-4z|}}{2|z^2|} \sum_{p=0}^n
B_{p} \sqrt{|u_p|}\nonumber
\\
&\le & \sum_{p=0}^n
B_{p} |z|^{(p-2)/2}\nonumber\\
&=&   O\left( 4^n|z|^{n/2}n^{-3/2}\right),\label{K1-bound}
\end{eqnarray}
uniformly in $n$ and $z$.

     
 \paragraph{Step 4: an upper bound on $\mathbf \KN ^{(3)}$.}

Let us combine Lemma~\ref{lem:b3} with the expression~\eqref{K3} of
$\mathbf \KN ^{(3)}(n,z)$. This gives:
\begin{eqnarray}
\mathbf \KN ^{(3)}(n,z) &\le & \frac{4z^2}{|1-4z|^{3/2}} \sum_{p>n} B_{p} |z|^{2p}
\nonumber
\\
&\le  & \frac{4z^2}{|1-4z|^{3/2}} \sum_{p>n} B_n 4^{p-n} |z|^{2p}
\hskip 25mm (\hbox{since } B_{p+1} \le 4B_p)
\nonumber\\
&\le  & \frac{16 |z|^{2n+4}}{|1-4z|^{3/2}} B_n \sum_{p>n}  4^{p-n-1}
|z|^{2(p-n-1)}
\nonumber\\
&=& O\left(\frac{|z|^{2n}}{|1-4z|^{3/2}}B_n\right) \label{K3-bound}
\end{eqnarray}
uniformly in $n$ and $z$, for $z$ in $D$ close enough to  $1/4$.

\paragraph{Step 5: estimate of $\mathbf \KN ^{(2)}$.}
For $z$ fixed, let us determine the generating function of the numbers
$\mathbf \KN ^{(2)}(n,z)$, given by~\eqref{K2}. We find:
\begin{eqnarray*}
\sum_{n\ge 0} \mathbf \KN ^{(2)}(n,z) x^n &=& \frac 1 {\sqrt{1-4z}
} \frac{\mathbf B(z)-\mathbf B(xz)}{1-x}
\end{eqnarray*}
where $\mathbf B(z)$ is defined by~\eqref{B-def}. Equivalently,
\begin{eqnarray*}
\sum_{n\ge 0}\mathbf  \KN ^{(2)}(n,z) x^n
%
&=& \frac 1 {2x^2z^2\sqrt{1-4z}}\left((1+x) \sqrt{1-4z}-1-x+2xz
+ \frac{4z}{\sqrt{1-4xz}+\sqrt{1-4z}}\right).
\end{eqnarray*}
      Thus 
$$
\mathbf \KN ^{(2)}(n,z)= \frac 2{z\sqrt{1-4z}} I(n,z)
$$
where
$$
I(n,z)=[x^{n+2}] \frac{1}{\sqrt{1-4xz}+\sqrt{1-4z}}.
$$
Using the same principles as in the proof of Lemma~\ref{lem:sum-div}, we
now estimate $I(n,z)$ using a Cauchy integral: 
$$
I(n,z)= \frac
1{2i\pi} \int_\circlearrowleft \frac{1}{\sqrt{1-4xz}+\sqrt{1-4z}} \frac{dx}{x^{n+3}}
$$
where the integral is over a small circle around the origin. The
unique singularity of the integrand is at $x=1/(4z)$. Writing
$x=u/(4z)$ gives
$$
I(n,z)= \frac {(4z)^{n+2}}{2i\pi} \int _ \circlearrowleft\frac 1
{\sqrt{1-u}+\sqrt{1-4z}} \frac{du}{u^{n+3}}. 
$$
Changing the integration contour into a Hankel one, gives, as in
Lemma~\ref{lem:sum-div},
$$
I(n,z)= \frac {(4z)^{n+2}}{2i\pi \sqrt n} \int_{\mathcal{H}} \frac 1
{\sqrt{-t}+\sqrt{n(1-4z)}}\left( 1+ \frac t n\right)^{-n-3} {dt}.
$$
Again, we split  the integral into two parts, depending on whether
$\Re(t) \le \ln^2 n$ or 
$\Re(t) \ge \ln^2 n$. We assume moreover that
\begin{equation}\label{cond}
n\rightarrow \infty \quad \hbox{and} \quad n(1-4z)\rightarrow 0.
\end{equation}
 The first part of the integral is then 
$$
 \frac {(4z)^{n+2}}{\Gamma(1/2) \sqrt n } \left(1 + O(\sqrt{n(1-4z)})
 +O(n^{-1})\right),
$$
while the second part is found to be smaller than $(4z)^n
 n^{-\alpha}$, for any real $\alpha$.
 Hence
$$
I(n,z)=  \frac{(4z)^{n+2}}{\sqrt\pi \sqrt n} \left(1 + O(\sqrt{n(1-4z)})
 +O(n^{-1})\right),
 $$
so that
\begin{equation}\label{K2-est}
\mathbf \KN ^{(2)}(n,z) = 
\frac {8 (4z)^{n+1} } 
{\sqrt\pi\sqrt{1-4z}\sqrt n} \left(1 + O(\sqrt{n(1-4z)})
 +O(n^{-1})\right).
\end{equation}

\paragraph{Step 6: the threshold $n(z)$.} We finally want to
correlate $n\equiv n(z)$ and $z$ so that, as $z$ tends to $ 1/4$ (in
the domain $D$),  the function $\mathbf \KN (z)$ is dominated by $\mathbf \KN ^{(2)}(n,z)$. Given
the bounds~\eqref{K1-bound} and~\eqref{K3-bound} on $\mathbf \KN ^{(1)}(n,z)$ and
$\mathbf \KN ^{(3)}(n,z)$, the estimate $B_n \sim 4^n n^{-3/2}$ (up to a
multiplicative constant), and the estimate~\eqref{K2-est} of $\mathbf \KN ^{(2)}$, we want
$$
\frac{4^n |z|^{n/2}}{ n^{3/2}} = o\left(  \frac{(4z)^{n}}{\sqrt{1-4z} \sqrt n}\right)
$$
and 
$$
\frac{4^n |z|^{2n}}{ n^{3/2}|1-4z|^{3/2}} =
o\left(  \frac{(4z)^{n}}{\sqrt{1-4z} \sqrt n}\right).
$$
We also want~\eqref{cond} to hold.
These four conditions hold for
$$
n=n(z)= \left\lfloor \frac{\ln |1-4z|}{\ln |z|}\right\rfloor.
$$
Indeed, for this choice of $n$, we have
$$
|z|^n = O(1-4z)  \quad \hbox{ and } \quad 
 (4z)^n =1+O\big(|1-4z| \ln |1-4z|\big),
$$
so that 
\begin{eqnarray*}
\mathbf \KN ^{(1)}(n,z)&=& O\left( (1-4z)^{-1/2} \ln^{-3/2} \frac 1{|1-4z|}\right),
\\
\mathbf \KN ^{(3)}(n,z)&=& O\left( (1-4z)^{-1/2} \ln^{-3/2} \frac 1{|1-4z|}\right),
\\
\mathbf \KN ^{(2)}(n,z)&=& \frac{8\sqrt {\ln
4}}{ \sqrt \pi \sqrt{1-4z} \sqrt{\ln\frac1{|1-4z|}}}\left(1+  O\left(\ln^{-1} \frac 1{|1-4z|}\right)\right).
\end{eqnarray*}
Finally, since 
$$
\ln |1-4z| = \ln(1-4z) \left(1+ O \left(\ln^{-1} \frac 1{|1-4z|}\right)\right),
$$
 and because $z$ is close to $\frac{1}{4}$, we have at last obtained
$$
\mathbf \KN (z) 
 =\frac{8\sqrt {\ln
4}}{\sqrt \pi \sqrt{1-4z} \sqrt{\ln\frac1{1-4z}}} \left(1+ O \left(\ln^{-1} \frac 1{1-4z}\right)\right),
$$
as stated in 
Proposition~\ref{prop:m=1}.
\qed


\subsection{Proof of Proposition~\ref{lemme1-fl} (case $m\ge 2$)}
\label{sec:binary-m}
      The proof is a straightforward adaptation of the proof of
      Theorem~\ref{theorem:asymptotic_binary_node}. We simply describe here the few necessary changes.
      We start from the expression of Proposition~\ref{lemma:binary_K}:
      \beq\label{K-new}
      \mathbf  \KN _m(z) := \mathbf    \KN _m^{\mathcal B} (z) =\frac 1 {2z^2}  \sum_{p\ge 0} B_{p} m^{p} \left( \sqrt{1-4mz+4mz^{p+2}} - \sqrt{1-4mz}\right)
      \eeq
      where  $B_p$ still denotes the 
$(p+1)^{\textrm{st}}$-Catalan number.

Let us first prove that the series $\mathbf \KN _m(z)$ is  analytic in the domain $D$ defined
      by $|z|<1/(2m)$ and $z \not \in [1/(4m), 1/(2m)]$.
The counterpart of Lemma~\ref{lem:rouche} states that 
       $1-4mz+4mz^{p+2}$ has exactly one root of modulus
less than $1/(2m)$, and that this root is larger than $1/(4m)$. This
now holds for any $p\ge 0$ (provided $m\ge 2$).
Hence each series $\mathbf C_{m,p}^{\mathcal B}(z)$ is thus analytic
in $D$.
We also prove that $u_p:=4mz^{p+2}/(1-4mz)$ avoids the half-line
$(-\infty, -1]$ for $z\in D$.

The proof that $\mathbf \KN _m(z)$ is also analytic in $D$ transfers verbatim, once we
have written
\begin{eqnarray*}
 \mathbf   \KN _m(z)&=&
%
\frac{m}{\sqrt{1-4mz}}\mathbf B(mz)
%
   + \frac{\sqrt{1-4mz}}{2z^2} \sum_{p\ge 0}B_{p}
   m^{p}\left(\sqrt{1+u_p}-1-\frac{u_p}2\right).
\end{eqnarray*}
In Step 2, we split $\mathbf \KN _m(z)$ in the three following parts:
\begin{eqnarray}
\mathbf \KN _m^{(1)}(n,z)&=& \frac{\sqrt{1-4mz}}{2z^2} \sum_{p=0}^n
B_{p} m^{p}\left( \sqrt{1+u_p}-1\right),\nonumber
\\
\mathbf \KN _m^{(2)}(n,z)&=&
\frac m{\sqrt{1-4mz}} \left( \mathbf B(mz) - \sum_{p=0}^n B_pm^{p}
z^p\right),\nonumber 
\\
\mathbf \KN _m^{(3)}(n,z)&=& \frac{\sqrt{1-4mz}}{2z^2} \sum_{p>n}B_{p}m^{p} \left(\sqrt{1+u_p}-1-\frac{u_p}2\right).\nonumber 
\end{eqnarray}
Now combining Lemmas~\ref{lem:b1} and~\ref{lem:sum-div} gives an upper bound for
$\mathbf \KN_m ^{(1)}$:
$$
\mathbf \KN _m^{(1)}(n,z)= O\left((4m)^n |z|^{n/2} n^{-3/2}\right),
$$
uniformly in $n$ and $z$ taken in some neighborhood of $1/(4m)$.
The upper bound on $\mathbf \KN_m ^{(3)}$ is found to be
$$
\mathbf \KN_m ^{(3)}(n,z) =O\left(  \frac{m^{n} |z|^{2n}}{|1-4mz| ^{3/2}}
B_n\right).
$$
Finally, since  $\mathbf \KN_m ^{(2)}(n,z)$ is simply 
$ m \mathbf \KN ^{(2)}(n,mz)$ 
(with      $\mathbf \KN ^{(2)}$ defined by~\eqref{K2}), we derive
from~\eqref{K2-est} that
\beq\label{K2-est-m}
\mathbf \KN_m ^{(2)}(n,z)=\frac {8 (4mz)^{n+1} } {\sqrt{\pi} \sqrt{1-4mz}\sqrt n} \left(1 + O(\sqrt{n(1-4mz)})
 +O(n^{-1})\right).
\eeq
The threshold function is now 
$$
n=n(z)= \left\lfloor \frac{\ln |1-4mz|}{\ln |z|}\right\rfloor,
$$
and the rest of the proof follows verbatim, leading to
$$
\mathbf \KN _m(z) 
 =\frac{8\sqrt {\ln
(4m)}}{\sqrt \pi \sqrt{1-4mz} \sqrt{\ln\frac1{1-4mz}}} \left(1+ O \left(\ln^{-1} \frac 1{1-4mz}\right)\right),
$$
which concludes the proof of Proposition~\ref{lemme1-fl}.\qed
%
\subsection{Proof of Theorem~\ref{lemma:asymptotic_binary_edge}}
\label{sec:binary-e}
The series $\mathbf E^{\mathcal B}_m(z)$ given in Proposition~\ref{lemma:binary_K} can be written as in~\eqref{K-new}, upon replacing the numbers $B_p$ by
$$
\bar B_p= \frac{3p}{2p+1} B_p,
$$
with \gf\
$$
\sum_{p \ge 0} \bar B_p z^p= \frac{1-3z -(1-z) \sqrt{1-4z}}{z^2}.
$$
One can then adapt the proof of Proposition~\ref{lemme1-fl} without any
difficulty. The only significant change  is in the
estimate~\eqref{K2-est} (and more generally~\eqref{K2-est-m}) of
$\mathbf \KN_m^{(2)}(n,z)$, which is multiplied by a factor $3/2$. This leads to
the factor $3$ in Theorem~\ref{lemma:asymptotic_binary_edge}.

\subsection{Proof of
  Theorem~\ref{theorem:asymtotic_node_size_unranked}}
\label{sec:unranked-n}

The proof is again a variation on the proof of
Theorem~\ref{theorem:asymptotic_binary_node}. Let us describe it
directly for a general value of $m$.

Our first objective is to obtain the following counterpart of
Proposition~\ref{lemme1-fl}: The generating function $\mathbf \KN _m^{\mathcal{T}}(z)$
 is analytic in the domain $D$ 
defined by $|z|< \frac{1}{2m}$ and $z \notin
[ \frac{1}{4m},\frac{1}{2m} ]$. As $z$ tends to $ \frac{1}{4m}$ in $D$,
one has
\begin{equation}\label{N-unranked}
\mathbf  \KN_m ^{\mathcal{T}}(z) = 
\frac{m \kappa_m } {\sqrt{(1-4m z ) \ln ((1-4mz)^{-1})}} + O \left(\frac{1}{\sqrt{(1-4mz)\ln^3((1-4mz)^{-1})}} \right),
\end{equation}
where $\kappa_m$ is defined as in Theorem~\ref{theorem:asymptotic_binary_node}.

The transfer theorem from~\cite{AnalyticCombinatorics} then gives 
$$
N^{\mathcal{T}}_{m,n} = [z^n]\mathbf \KN_m ^{\mathcal{T}}(z)
= \frac{\kappa_m}{\sqrt{\pi}}
\frac {4^{n} m^{n+1}}{\sqrt{n \ln n}}\left(1+ O(\ln^{-1} n)\right).
$$
Since the  number of $m$-labeled unranked trees of size $n$ is
$$
T_{m,n} \sim \frac{4^{n} m^{n+1}}{\sqrt \pi n^{3/2}}\left(1+ O(n^{-1})\right),
$$
this gives for the average number of nodes the expression of
Theorem~\ref{theorem:asymtotic_node_size_unranked}.

So let us focus on the proof of~\eqref{N-unranked}, which will mimic
the proof of Proposition~\ref{lemme1-fl}.
We start from the expression of Proposition~\ref{prop:NE-unranked}:
$$
\mathbf N_m(z):= \mathbf \KN_m^{\mathcal{T}}(z)= \sum_{p\ge 0} T_p m^{p+1}\mathbf C_{m,p}^{\mathcal{T}}(z)
$$
where
$$
\mathbf C_{m,p}^{\mathcal{T}}(z) = \frac{z^{p+1} + \sqrt{1-4mz+2z^{p+1} + z^{2p+2} }-\sqrt{1-4mz}}{2z} 
$$
and $T_p = {T}_{1,p}$ is the $p$-th Catalan number, with  \gf\
\beq\label{T-B}
\mathbf T(z)=\sum_{p\ge 0} T_p z^p=\frac{1-\sqrt{1-4z}}{2z}
=1+ z \mathbf B(z).
\eeq
Recall that the series $\mathbf B(z)$ is defined by~\eqref{B-def}. We follow the same steps as in the proof of Proposition~\ref{lemme1-fl}.

\paragraph{Step 1: $\mathbf N_m(z)$ is analytic in $D$.}
With Rouch\'e's theorem we can prove that
$\mathbf C_{m,p}^{\mathcal{T}}(z)$
is analytic in the domain $D$.
We then define $u_p= z^{p+1}(2+z^{p+1})/(1-4mz)$ and
prove that $u_p$ does not meet the half-line $(-\infty,-1]$ for $z
  \in D$. We then write
$$
\mathbf N_m(z)= \frac m 2 \mathbf T(mz)+ \frac {m\mathbf T(mz)}{2\sqrt{1-4z}}  +
\frac{mz\mathbf T(mz^2)}{4\sqrt{1-4mz}}  +
\frac{\sqrt{1-4mz}}{2z} \sum_{p\ge 0} T_p m^{p+1} \left(
\sqrt{1+u_p}-1 - \frac{u_p}2\right)
$$
to conclude, with the arguments of Sections~\ref{proof:fl} and~\ref{sec:binary-m}, that $\mathbf N_m(z)$ is also analytic in $D$.

\paragraph{Step 2: splitting $\mathbf N_m(z)$.}
We fix an integer $n$ and write
\begin{equation*}\label{K-split_u}
\mathbf N_m(z)=\mathbf N_m^{(1)}(n,z)+\mathbf N^{(2)}_m(n,z)+\mathbf N^{(3)}_m(n,z)+\mathbf N^{(4)}_m(n,z),
\end{equation*}
where
\begin{eqnarray*}
\mathbf N_m^{(1)}(n,z)&=& \frac{\sqrt{1-4mz}}{2z} \sum_{p=0}^n
T_p m^{p+1}\left( \sqrt{1+u_p}-1\right),\label{K1_u}
\\
\mathbf N_m^{(2)}(n,z)&=&   \frac m{2\sqrt{1-4mz}} \left(
\mathbf T(mz)- \sum_{p=0}^n T_p m ^pz^p\right),\label{K2_u}
\\
\mathbf N _m^{(3)}(n,z)&=&  \frac{\sqrt{1-4mz}}{2z} \sum_{p>n}T_p m^{p+1}\left(\sqrt{1+u_p}-1-\frac{u_p}2\right),\label{K3_u}
\\
\mathbf N_m^{(4)}(n,z)&=& 
\frac{m}{2} \mathbf T(mz) +
 \frac{zm}{4 \sqrt{1-4mz}} \left( \mathbf T(mz^2)- \sum_{p=0}^n T_p m^pz^{2p}\right).\label{K4_u}
\end{eqnarray*}
One readily checks that~\eqref{K-split_u} indeed holds. Moreover, each
$\mathbf N^{(i)}_m(n,z)$ is analytic in $D$ for any $i$ and $n$.

\paragraph{Step 3: an upper bound on $\mathbf N^{(1)}_m$.}
Using the same ingredients as before, we find that, as in the binary case,
$$
 \mathbf N_m^{(1)}(n,z) = O\left( (4m)^n|z|^{n/2}n^{-3/2}\right)
$$
uniformly in $n$ and $z$ taken in a neighborhood of $1/(4m)$.
\paragraph{Step 4: an upper bound on $\mathbf N^{(3)}_m$.} Again, the behavior
remains the same as in the binary case:
$$
\mathbf N^{(3)}_m= O\left(\frac{m^n |z|^{2n}}{|1-4mz|^{3/2}} T_n \right),
$$
uniformly in $n$ and $z$, for $z$ in $D$ close enough to $1/(4m)$.
\paragraph{Step 5: estimate of $\mathbf N^{(2)}_m$.}
Using~\eqref{T-B}, we observe that  
$$
\mathbf N_m^{(2)}(n,z) = \frac{m^2z}2 \mathbf N^{(2)}(n-1,mz),$$
 with $\mathbf N^{(2)}(z)$ defined by~\eqref{K2}. Hence the
  estimate~\eqref{K2-est} gives
 \beq\label{N2-est-m}
\mathbf N_m^{(2)}(n,z) = \frac {m (4mz)^{n+2}}{\sqrt \pi \sqrt{1-4mz} \sqrt n} 
 \left(1 + O(\sqrt{n(1-4mz)})
 +O(n^{-1})\right).
\eeq
\paragraph{Step 6: an upper bound of $\mathbf N^{(4)}_m$.}
Given that $m|z|^2 <1/(4m)\le 1/4$, we can write
\begin{eqnarray*}
\left| \mathbf N^{(4)}_m(n,z) \right|&\le &
\frac m 2 |\mathbf T(mz)| + \frac{m|z|}{4 \sqrt{|1-4mz|}} \sum_{p>n} T_p
m^{p}|z|^{2p}
\\
          &=& O \left(1+ \frac{T_n m^n z^{2n} }{\sqrt{|1-4mz|} } \right)
\end{eqnarray*}
for $z$ in a neighborhood of $1/(4m)$.
The argument is the same as for the bound~\eqref{K3-bound}.

\paragraph{Step 7: the threshold $n(z)$.} 
Using the same threshold function as in the binary case, we see that $\mathbf N_m(n,z)$
is dominated by $\mathbf N^{(2)}_m(n,z)$,
and more precisely, that~\eqref{N-unranked} holds.

\subsection{Proof of Theorem~\ref{thm:unranked-e}}
\label{sec:unranked-e}
Comparing the two series of Proposition~\ref{prop:NE-unranked} shows that it suffices
to replace the numbers $T_p$ by
$$
\bar T_p = \frac {3p}{p+2} T_p,
$$
with \gf\
$$
\sum_{p \ge 0} \bar T_p z^p= \frac{1-3z -(1-z) \sqrt{1-4z}}{2z^2},
$$
to go from $\mathbf N^{\mathcal T}_m(z)$ to $\mathbf E^{\mathcal T}_m(z)$.
One can then adapt the proof of Proposition~\ref{theorem:asymtotic_node_size_unranked} without any
difficulty. The only significant change  is in the
estimate~\eqref{N2-est-m} of
$\mathbf \KN_m^{(2)}(n,z)$, which is multiplied by a factor $3$. This leads to
the factor $3$ in Theorem~\ref{thm:unranked-e}.

\end{document}